\newcommand{\RP}{\mathbb{R}_{\ge 0}}
\newcommand{\R}{\mathbb{R}}
\newcommand{\N}{\mathbb{N}}
\newcommand{\defequals}{\stackrel{\mathrm{def}}{=}}
\newcommand{\A}{\mathcal{A}} 
\newcommand{\locs}{L} 
\newcommand{\loc}{\ell} 
\newcommand{\clocks}{\mathcal{X}} 
\newcommand{\guard}{\varphi} 
\newcommand{\reset}{\lambda}
\newcommand{\val}{\nu} 
\newcommand{\prop}{AP}
\newcommand{\lbl}{LB}
\newcommand*{\defeq}{\mathrel{\rlap{%
                     \raisebox{0.3ex}{$\m@th\cdot$}}%
                     \raisebox{-0.3ex}{$\m@th\cdot$}}%
                     =}
\newcommand*{\ndefeq}{\mathrel{\rlap{%
                     \raisebox{0.3ex}{$\m@th\cdot$}}%
                     \raisebox{-0.3ex}{$\m@th\cdot$}%
                     \rlap{%
                     \raisebox{0.3ex}{$\m@th\cdot$}}
                     \raisebox{-0.3ex}{$\m@th\cdot$}}
                     =}
\newcommand{\tctl}{{TCTL}} 
\newcommand{\ptctl}{{PTRL}} 
\newcommand{\F}{\lozenge} 
\newcommand{\G}{\boxempty} 
\newcommand{\true}{\texttt{true}} 
\newcommand{\tuple}[1]{\langle #1 \rangle}
\def\abs#1{\ensuremath{\lvert #1\rvert}}
\newcommand{\sto}[1]{\stackrel{#1}{\Rightarrow}}
\newcommand{\params}{\Theta} 
\newcommand{\param}{\theta} 
\newcommand{\zeroval}{\boldsymbol{0}} 
\newcommand{\vect}[1]{\left( \begin{smallmatrix} #1 \end{smallmatrix} \right) }
\newcommand{\zone}{\mathit{Z}}
\newcommand{\semantics}[1]{\llbracket #1\rrbracket}
\newcommand{\M}{\mathcal{M}} 
\newcommand{\closure}{\mathit{closure}}
\newcommand{\edges}{E}
\newcommand{\fract}{\mathrm{frac}}
\newcommand{\floor}[1]{ \lfloor{ #1 }\rfloor}
\newcommand{\rrel}[2]{\mathrel{\mathcal{R}^{#1}_{#2}}}
\newcommand{\zrel}[2]{\mathrel{\mathcal{Z}^{#1}_{#2}}}
\newtheorem{theorem}{Theorem}
\newtheorem{lemma}[theorem]{Lemma}
\newtheorem{corollary}[theorem]{Corollary}
\newtheorem{proposition}[theorem]{Proposition}
\newtheorem{example}{Example}
\title{Revisiting Reachability in Timed Automata}
\author{
\IEEEauthorblockN{Karin Quaas\IEEEauthorrefmark{1},
Mahsa Shirmohammadi\IEEEauthorrefmark{2},
James Worrell\IEEEauthorrefmark{2}}
\IEEEauthorblockA{\IEEEauthorrefmark{1}Universit\"at Leipzig, Germany}
\IEEEauthorblockA{\IEEEauthorrefmark{2}University of Oxford, UK}
}
\begin{document}

\maketitle
\begin{abstract}
We revisit a fundamental result in real-time verification, namely
that the binary reachability relation between configurations of a
given timed automaton is definable in linear arithmetic over the
integers and reals.  In this paper we give a new and simpler proof of this result,
building on the well-known reachability analysis of timed automata
involving difference bound matrices.  Using this new proof, we give an
exponential-space procedure for model checking the reachability
fragment of the logic parametric TCTL.  Finally we show that the
latter problem is NEXPTIME-hard.

\end{abstract}

\begin{IEEEkeywords}
  Timed automata, Reachability, Difference Bound Matrices, Linear
  Arithmetic, Model Checking
\end{IEEEkeywords}

\section{Introduction} 
The PSPACE-completeness of the reachability problem for timed automata
is arguably the most fundamental result in real-time verification.
This theorem was established by Alur and Dill in 
paper~\cite{AlurD94} for which they were awarded the Alonzo Church
award in 2016.  The reachability problem has been intensively studied
in the intervening 20 years, leading to practical algorithms and
generalisations to more expressive models.  As of now, \cite{AlurD94}
is the most cited paper that has appeared in the journal
\emph{Theoretical Computer Science}.

Properly speaking, Alur and Dill considered reachability
  between \emph{control states} (also called \emph{locations}).  The
  problem of computing the binary reachability relation
  over \emph{configurations} (both control states and clock
  valuations) is more involved.  Here the main result is due to Comon
  and Jurski~\cite{ComonJ99}, who showed that the reachability
  relation of a given timed automaton is effectively definable by a
  formula of first-order linear arithmetic over the reals augmented
  with a unary predicate denoting the integers.  Importantly, this
  fragment of mixed linear arithmetic has a decidable satisfiability
  problem, e.g., by translation to S1S.

  Despite its evident utility, particularly for parametric
  verification, it is fair to say that the result of Comon and Jurski
  has proven less influential than that of Alur and Dill.  We believe
  that this is due both to the considerable technical complexity of
  the proof, which runs to over 40 pages in~\cite{ComonJ99-TR}, as
  well as the implicit nature of their algorithm, making it hard to
  extract complexity bounds.

In this paper we revisit the result of Comon and Jurski.  Our
two main contributions as follows:
\begin{itemize}
\item We give a new and conceptually simpler proof that generalises the
  classical reachability algorithm for timed automata involving
difference bound matrices and standard operations thereon.  The key
new idea is to carry out the algorithm on a symbolically presented
initial configuration.  This approach is fundamentally different from
that of~\cite{ComonJ99}, the main part of which involves a syntactic
transformation showing that every timed automaton can be effectively
emulated by a \emph{flat} timed automaton, i.e., one that does not
contain nested loops in its control graph.


\item We apply our strengthened formulation of the Comon-Jurski result
  to parametric model checking.  We show that the formula representing
  the reachability relation can be computed in time singly exponential
  in the size of the timed automaton.  Using this bound on the formula
  size and utilising results of~\cite{FerranteR75,To09} on
  quantifier-elimination for first-order logic over the reals and
  integers, we show that the model checking problem for the
  reachability fragment of the temporal logic \emph{parametric TCTL}
  is decidable in exponential space.  We show in the main body of the
  paper that this problem is NEXPTIME-hard and sketch in the
  conclusion how to obtain matching upper and lower bounds.
\end{itemize}

There are two main steps in our approach to computing a formula
representing the reachability relation.  First, given a timed
automaton $\mathcal{A}$ and a configuration $\langle \ell,\nu \rangle$
of $\A$, we construct a version of the region automaton
of~\cite{AlurD94} that represents all configurations reachable from
$\langle \ell,\nu \rangle$.  Unlike~\cite{AlurD94} we do not identify
all clock values above the maximum clock constant; so our version of
the region automaton is a counter machine rather than a finite state
automaton.  The counters are used to store the integer parts of clock
valuations of reachable configurations, while the fractional parts of
the clock valuations are aggregated into zones that are represented
within the control states of the counter machine by difference bound
matrices.  Since the counters mimic clocks they are monotonic and 
so the reachability relation on such a counter machine is definable
in a weak fragment of Presburger arithmetic.

The second step of our approach is to make the previous construction
parametric: we show that the form of the counter machine does not
depend on the precise numerical values of the clocks in the initial
valuation $\nu$, just on a suitable logical \emph{type} of $\nu$.
Given such a type, we develop a parametric version of the
counter-machine construction.  Combining this construction with the
fact that the reachability relation for the considered class of
counter machines is definable in a fragment of Presburger arithmetic,
we obtain a formula that represents the full reachability relation of
the timed automaton $\mathcal{A}$.

\subsection{Related Work}
Dang~\cite{Dang03} has generalised the result of Comon and Jurski,
showing that the binary reachability relation for pushdown timed
automata is definable in linear arithmetic.  The approach
in~\cite{Dang03} relies on a finite partition of the fractional parts
of clock valuations into so-called \emph{patterns}, which play a role
analogous to types in our approach.  The notion of pattern is ad-hoc
and, as remarked by Dang, relatively complicated.  In particular,
patterns lack the simple characterisation in terms of difference
constraints that is possessed by types.  The latter is key to our
result that the reachability relation can be expressed by a Boolean
combination of difference constraints.

Dima~\cite{Dima02} gives an automata theoretic representation of the
reachability relation of a timed automaton.  To this end he introduces
a class of automata whose runs encode tuples in such a relation.  The
main technical result of~\cite{Dima02} is to show that this class of
automata is effectively closed under relational reflexive-transitive
closure.

The model checking problem for parametric TCTL was studied by
Bruy\`{e}re~\emph{et al.}~\cite{BruyereDR08,BruyereDR03} in the case of
integer-valued parameters.  Here we allow real-valued parameters,
which leads to a strictly more expressive semantics.

Parametric DBMs have been used in~\cite{AnnichiniAB00,HuneRSV02} to
analyse reachability in parametric timed automata.  These are related
to but different from the parametric DBMs occurring in
Subsection~\ref{symbolic}.

\subsection{Organisation}
We introduce and state our main results in the body of the paper.  The
central constructions underlying our proofs are also given in the
body, along with illustrative examples. 
Many of the proof details are
relegated to the appendix.


\section{Main Definitions and  Results}
\subsection{Timed Automata}
\label{sec:TA}
Given a set~$\clocks=\{x_1,\ldots,x_n\}$ of \emph{clocks}, the set
$\Phi(\clocks)$ of \emph{clock constraints} is generated by the
grammar
\[ \varphi ::= \true \mid x<k \,\mid\, x = k \,\mid\, x>k \,\mid\,
  \varphi \wedge \varphi \, , \] where $k \in \N$ is a natural number
  and $x\in \clocks$.  A \emph{clock valuation} is a mapping~$\val:
  \clocks \to \RP$, where $\RP$ is the set of non-negative real
  numbers.  We denote by $\zeroval$ the valuation such
  that~$\zeroval(x)=0$ for all $x\in \clocks$.  Let~$\RP^{\clocks}$ be
  the set of all clock valuations.  We write $\val\models\varphi$ to
  denote that~$\val$ satisfies the constraint $\guard$.  Given
  $t\in\RP$, we let $\val+t$ be the clock valuation such that
  $(\val+t)(x)=\val(x)+t$ for all clocks~$x\in\clocks$.  Given
  $\reset\subseteq\clocks$, let $\val[\reset\leftarrow 0]$ be the
  clock valuation such that $\val[\reset\leftarrow 0](x)=0$ if
  $x\in\reset$, and $\val[\reset\leftarrow 0](x)=\val(x)$ if
  $x\not\in\reset$.  We typically write $\val_i$ as shorthand for
  $\val(x_i)$, and by convention we define $\val_0=0$.  For all~$r\in
  \mathbb{R}$, let $\fract(r)$ be the fractional part of~$r$, and
  $\floor{r}$ be the integer part.  Denote by~$\fract(\val)$ and
  $\floor{\val}$ the valuations such that
  $(\fract(\val))(x_i)=\fract(\val_i)$ and
  $\floor{\val}(x_i)=\floor{\val_i}$ for all clocks~$x_i\in \clocks$.

A \emph{timed automaton} is a tuple $\A=\tuple{\locs,\clocks,\edges}$,
where~$\locs$ is a finite set of \emph{locations}, $\clocks$ is a
finite set of \emph{clocks} 
and $\edges\subseteq \locs\times \Phi(\clocks)\times 2^\clocks\times
\locs$ is the set of \emph{edges}.

The semantics of a timed automaton~$\A=\tuple{\locs,\clocks,\edges}$
is given by a labelled transition system~$\tuple{Q,\sto{}}$ with set
of \emph{configurations} $Q=\locs \times \RP^{\clocks}$ and set of
\emph{transition labels}~$\RP$.  A configuration~$\tuple{\loc,\val}$
consists of a location~$\loc$ and a clock valuation~$\val$.  Given two
configurations~$\tuple{\loc,\val}$ and $\tuple{\loc',\val'}$, we
postulate:
\begin{itemize}
\item a delay transition~$\tuple{\ell,\val}\sto{d}\tuple{\ell',\val'}$ for
    some $d\geq 0$, if~$\val'=\val+d$ and $\ell=\ell'$;
	\item a discrete transition~$\tuple{\ell,\val}\sto{0}\tuple{\ell',\val'}$, if there 
is an edge $\tuple{\ell,\varphi,\reset,\ell'}$ of $\A$
	such that $\val\models\guard$ and $\val'=\val[\reset \leftarrow 0]$. 
\end{itemize}

A \emph{run}~$\rho=q_0 \sto{d_1} q_1 \sto{d_2} q_2 \sto{d_3} \ldots$
of $\A$ is a (finite or infinite) sequence of delay and discrete
transitions in \mbox{$\tuple{Q,\sto{}}$}.  We require infinite runs to
have infinitely many discrete transitions and to be \emph{non-zeno},
that is, we require $\sum_{i=1}^\infty d_i$ to diverge.

Henceforth we assume that in any given timed automaton with
set~$\clocks$ of clocks, $x_n$ is a special reference clock that is never reset.
Clearly this assumption is without loss of generality for encoding the
reachability relation.

Note that we consider timed automata without \emph{diagonal
constraints}, that is, guards of the form $x_i-x_j \sim k$, for $k$ an
integer.  It is known that such constraints can be removed without
affecting the reachability relation (see~\cite{AlurD94,BerardPDG98}).

\subsection{Linear Arithmetic}\label{sec:Logic}
In this section we introduce a first-order
language~$\mathcal{L}_{\mathbb{R},\mathbb{Z}}$ in which to express the
reachability relation of a timed automaton.

Language $\mathcal{L}_{\mathbb{R},\mathbb{Z}}$ has two sorts: a
real-number sort and an integer sort.  The collection
$\mathcal{T}_{\mathbb{R}}$ of terms of real-number sort is specified by
the grammar
\[ t::= c \, \mid \, r \mid \, t+t \, \mid t-t \, , \] where
$c\in \mathbb{Q}$ is a constant and $r\in\{r_0,r_1,\ldots\}$ is a
real-valued variable.  Given terms
$t,t'\in \mathcal{T}_{\mathbb{R}}$, we have an atomic formula
$t \leq t'$.  The collection $\mathcal{T}_{\mathbb{Z}}$ of terms of
integer sort is specified by the grammar
\[ t::= c \, \mid \, z \, \mid \, t+t \, \mid t -t \, , \] where
$c\in \mathbb{Z}$ is a constant and $z\in\{z_0,z_1,\ldots\}$ is an
integer variable.  Given terms $t,t'\in\mathcal{T}_{\mathbb{Z}}$, we
have atomic formulas $t \leq t'$ and ${ t\equiv t'} \pmod m$, where
$m\in \mathbb{Z}$.  Formulas of $\mathcal{L}_{\mathbb{R},\mathbb{Z}}$
are constructed from atomic formulas using Boolean connectives and
first-order quantifiers.  

Throughout the paper we consider a fixed semantics for
$\mathcal{L}_{\mathbb{R},\mathbb{Z}}$ over the two-sorted structure in
which the real-number sort is interpreted by $\mathbb{R}$, the integer
sort by $\mathbb{Z}$, and with the natural interpretation of addition
and order on each sort.

The sublanguage $\mathcal{L}_{\mathbb{R}}$ of
$\mathcal{L}_{\mathbb{R},\mathbb{Z}}$ involving only terms of
real-number sort is called \emph{real arithmetic}.  The sublanguage
$\mathcal{L}_{\mathbb{Z}}$ involving only terms of integer sort is
called \emph{Presburger arithmetic}.  Optimal complexity bounds for
deciding satisfiability of sentences of real arithmetic and Presburger
arithmetic are given in~\cite{Berman80} with, roughly speaking, real
arithmetic requiring single exponential space and Presburger
arithmetic double exponential space.

\begin{proposition}
 Deciding the truth of a sentence in the existential 
fragment of $\mathcal{L}_{\mathbb{R},\mathbb{Z}}$ can be done in NP.
\end{proposition}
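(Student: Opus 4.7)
The plan is to exploit the fact that the language $\mathcal{L}_{\mathbb{R},\mathbb{Z}}$ is two-sorted with no term combining the two sorts, so any existential sentence decouples into two independent satisfiability problems, one over the reals and one over the integers.

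First I would put the quantifier-free matrix $\varphi$ of the sentence $\exists \vec r\,\exists \vec z\,\varphi$ into negation normal form. The negation of $t\leq t'$ is the strict inequality $t'<t$, and the negation of $t\equiv t'\pmod m$ is equivalent to $\bigvee_{j=1}^{m-1}(t\equiv t'+j \pmod m)$, so NNF stays within the quantifier-free fragment with only a polynomial blow-up in the binary encoding. Next I would nondeterministically guess a truth value for each atom of $\varphi$ and verify in polynomial time that the induced assignment satisfies the Boolean skeleton. What remains is a conjunction $\Phi$ of literals.

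Because no term mentions variables of both sorts, this conjunction splits as $\Phi_{\mathbb{R}}\wedge\Phi_{\mathbb{Z}}$ with disjoint variable sets, so $\exists\vec r\,\exists\vec z\,(\Phi_{\mathbb{R}}\wedge\Phi_{\mathbb{Z}})$ is satisfiable iff $\exists\vec r\,\Phi_{\mathbb{R}}$ and $\exists\vec z\,\Phi_{\mathbb{Z}}$ are each satisfiable. The real conjunction $\Phi_{\mathbb{R}}$ is a system of strict and non-strict linear inequalities over $\mathbb{R}$ with rational coefficients, whose feasibility is in $\mathsf{P}$ by linear programming. The integer conjunction $\Phi_{\mathbb{Z}}$ is an existential Presburger formula; each remaining modular atom $t\equiv t'\pmod m$ can be Skolemized by introducing a fresh integer variable $y$ together with the equation $t-t'=m\cdot y$, yielding an integer linear programming feasibility instance of polynomial size.

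The one subtle point is the small-witness bound for the integer part. After Skolemization, the coefficients and right-hand sides of the resulting integer program are polynomial in the input bit-length, so by the classical theorem of Papadimitriou on integer programming there is a satisfying assignment of polynomial bit-length whenever the system is feasible. Guessing this witness and verifying each (in)equality takes nondeterministic polynomial time, which combined with the polynomial-time feasibility check for $\Phi_{\mathbb{R}}$ and the initial propositional guess yields the claimed $\mathsf{NP}$ upper bound.
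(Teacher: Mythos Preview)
Your overall strategy coincides with the paper's: guess truth values for the atoms, check the Boolean skeleton, and exploit the absence of mixed-sort terms to decouple into a real part and an integer part, each of which lies in NP. The paper simply cites the known NP upper bounds for existential linear real arithmetic and existential Presburger arithmetic, whereas you attempt to rederive them via linear programming and Papadimitriou's bound on ILP witnesses.

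There is, however, a genuine gap in your NNF step. You expand $\neg(t\equiv t'\pmod m)$ into $\bigvee_{j=1}^{m-1}(t\equiv t'+j\pmod m)$ and claim only a polynomial blow-up, but the modulus $m$ is part of the input and is encoded in binary, so $m-1$ can be exponential in the input length. This breaks the polynomial size bound on which the remainder of your argument relies (in particular, the ``polynomial number of atoms to guess'' and the ``ILP instance of polynomial size'' claims). The NNF step is in fact unnecessary: skip it, guess truth values for the atoms of the original matrix, and form the conjunction of the corresponding \emph{literals}. Negated real atoms become strict inequalities, as you note. For a negated modular atom $\neg(t\equiv t'\pmod m)$, nondeterministically guess a nonzero residue $j\in\{1,\dots,m-1\}$ using $\lceil\log_2 m\rceil$ bits and replace the literal by $t-t'\equiv j\pmod m$; your Skolemisation into ILP then goes through unchanged. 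Alternatively, do as the paper does and directly invoke NP-membership of the existential fragment of Presburger arithmetic, which already accommodates arbitrary Boolean combinations of congruences and inequalities.
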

\begin{proof}
The respective decision problems
for the existential fragment of real arithmetic and the existential 
fragment of  Presburger arithmetic are in NP~\cite{Sontag85}, \cite{Weispfenning88}. 
Deciding the truth of a sentence in the existential 
fragment of $\mathcal{L}_{\mathbb{R},\mathbb{Z}}$
 is therefore also in NP, since we can guess truth
values for the Presburger and real-arithmetic subformulas, and
separately check realisability of the guessed truth values in
non-deterministic polynomial time.
\end{proof}

For the purpose of model checking, it will be
useful to establish complexity bounds for a language
$\mathcal{L}^*_{\mathbb{R},\mathbb{Z}}$, intermediate
between $\mathcal{L}_{\mathbb{R}}$ and the full language
$\mathcal{L}_{\mathbb{R},\mathbb{Z}}$.  The language
$\mathcal{L}^*_{\mathbb{R},\mathbb{Z}}$ arises from
$\mathcal{L}_{\mathbb{R},\mathbb{Z}}$ by restricting the atomic
formulas over terms of integer sort to have the form
\begin{gather}
 z-z' \leq c \, \mid \, z \leq c \, \mid \,
   {z-z' \equiv c} \pmod d 
\label{eq:diff1}
\end{gather}
for integer variables $z,z'$ and integers $c,d$.

\begin{proposition}
  Deciding the truth of a prenex-form sentence
  $Q_1x_1\ldots Q_nx_n \, \varphi$ in
  $\mathcal{L}^*_{\mathbb{R},\mathbb{Z}}$ can be done in
  space exponential in $n$ and polynomial in $\varphi$.
\label{prop:diff-bound}
\end{proposition}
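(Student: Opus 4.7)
The plan is to apply the known quantifier-elimination procedures for the two sublanguages in tandem, eliminating the quantifiers from innermost to outermost. The key structural observation is that no atomic formula of $\mathcal{L}^*_{\mathbb{R},\mathbb{Z}}$ mixes the two sorts: each atom either constrains only real variables (via $t\leq t'$) or only integer variables (through the difference, bound, or modular constraints in~(\ref{eq:diff1})). Consequently, when eliminating a single quantifier $Q_i x_i$, atoms of the other sort can be treated as propositional placeholders, and only a one-sort procedure is needed for that step.

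For a real-sort quantifier I would invoke the Ferrante--Rackoff procedure~\cite{FerranteR75}, which eliminates one real existential by substituting a polynomial-size set of test terms into the body; for an integer-sort quantifier I would invoke To's quantifier-elimination result~\cite{To09} for the difference fragment of Presburger arithmetic, whose syntax is precisely (\ref{eq:diff1}). Each step produces only polynomial blow-up per quantifier. Rather than materialise the fully eliminated formula (whose size may reach $|\varphi|^{c^n}$), I would implement the iteration as an alternating procedure: at each quantifier one guesses (for $\exists$) or universally checks (for $\forall$) a single witness from the polynomial test-term set, retaining only the currently substituted body. The inner evaluation is a Boolean combination of atomic inequalities and congruences on concrete rationals and integers, which can be done in polynomial work.

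Standard simulation of such an alternating procedure with polynomial work and $n$ alternations yields the claimed bound of space exponential in $n$ and polynomial in $|\varphi|$. The main subtlety I expect is that, after some quantifiers have been eliminated, the body is a mixed Boolean combination of atoms of both sorts, and the next elimination step must operate on this mixed body. Because atoms never mix sorts, however, each subroutine touches a disjoint class of atoms and treats the others as literals, so both correctness and the space accounting go through cleanly. A secondary point to check is that the test terms generated by To's procedure remain bounded, linear combinations of existing integer data, so the moduli appearing in the congruence atoms do not blow up across rounds of elimination.
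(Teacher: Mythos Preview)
Your proposal has the right ingredients—atoms never mix sorts, so Ferrante--Rackoff handles the reals and To handles the difference-logic integers—and these are exactly what the paper invokes. But there is a real gap in how you combine quantifier elimination with alternation. You process quantifiers innermost-to-outermost, guessing one test term at each step and substituting; this reverses the alternation structure, effectively evaluating $Q_n t_n\cdots Q_1 t_1\,\varphi[\cdots]$ instead of $Q_1\cdots Q_n$. Concretely, for $\forall x_1\,\exists x_2\,\big[(x_1>0\to x_2{=}1)\wedge(x_1\le 0\to x_2{=}2)\big]$ the test terms for $x_2$ are constants; your procedure existentially picks one such constant $t_2$ and then fails the subsequent $\forall x_1$ check, returning \emph{false}, whereas the sentence is true. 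The full disjunction over all $t_2$ is of course correct, but committing to a single disjunct before the outer $\forall$ is not. Flipping to an outermost-first alternation does not rescue the ``polynomial test-term set'' claim either: the test terms for $x_1$ are built from atoms of the body obtained after eliminating $x_2,\ldots,x_n$, and that body already has exponentially many atoms.

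The paper avoids both problems by not iterating test-term substitution at all. It uses the Ferrante--Rackoff and To equivalence relations $\rrel{k}{m}$ and $\zrel{k}{p,m}$ to bound the witness \emph{values} directly: each real quantifier can be restricted to rationals of bit-length $2^{O(n)}\cdot\mathrm{poly}(\log m)$, and each integer quantifier to integers of bit-length $O(n+\log m+\log p)$, where $m$ is the largest constant and $p$ the period of $\varphi$. Crucially these bounds depend only on $n$ and the original quantifier-free body, not on any intermediate eliminated formula. An alternating machine then guesses concrete values in the correct semantic order $Q_1,\ldots,Q_n$ and evaluates the original body on them; the claimed space bound follows.
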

\begin{proof}
The proposition is known to hold separately for
$\mathcal{L}_{\mathbb{R}}$~\cite{FerranteR75} and for the fragment of
$\mathcal{L}_{\mathbb{Z}}$ in which atomic formulas have the form
shown in (\ref{eq:diff1})~\cite[Section 4]{To09}.  The respective
arguments of~\cite{FerranteR75} and~\cite{To09} can be
straightforwardly combined to prove the proposition; see
Section~\ref{append-prop-diff-bound} for details.
\end{proof}

\subsection{Definability of the Reachability Relation}
Given a timed automaton $\A$ with $n$ clock variables, we express the
reachability relation between every pair of locations~$\ell,\ell'$ by
a formula
\[ \varphi_{\ell,\ell'}(z_1,\ldots,z_n,r_1,,
  \ldots,r_n,z'_1,\ldots,z'_n,r'_1,,\ldots,r'_n), \]
in the existential fragment of $\mathcal{L}_{\mathbb{R},\mathbb{Z}}$ where
$z_1,z'_1,\ldots,z_n,z'_n$ are integer variables and
$r_1,r'_1,\ldots,r_n,r'_n$ are real variables ranging over the
interval~$[0,1]$.  Our main result,
Theorem~\ref{theorem_main_ta_formula}, shows that there is a finite
run in $\A$ from configuration~$\tuple{ \ell,\val}$ to
configuration~$\tuple{ \ell',\val'}$ just in case
\[\begin{aligned} 
\langle &\floor{\val_1},\ldots,
 \floor{\val_n},\fract(\val_1), \ldots, \fract(\val_n),\\
&\floor{\val'_1},\ldots,
 \floor{\val'_n},\fract(\val'_1),\cdots, \fract(\val'_n) 
\rangle \models \varphi_{\ell,\ell'}.
\end{aligned}\]

\begin{example} Consider the following timed automaton:
\begin{figure}[h!]
\begin{center}
\begin{tikzpicture}[>=latex',shorten >=1pt,node distance=1.9cm,on grid,auto,
roundnode/.style={circle, draw,minimum size=1.2mm}]

\node [roundnode] (li) at(0,0) {$\ell_0$};
\node [roundnode] (l0) [right=1.8cm of li] {$\ell_1$};
\node [roundnode](l1)  [right=2.3cm of l0] {$\ell_2$};
\node [roundnode] (l2)  [right=2cm of l1] {$\ell_3$};

\path[->] (li) edge  node [above,midway] {\scriptsize{$x_2<1$}}  (l0);
\path[->] (l0) edge  node [above,midway] {\scriptsize{$x_2=1$}} node[below, midway]{ } (l1);
\path[->] (l1) edge  node [above,midway] {\scriptsize{$x_1<1$}} node[below, midway]{ \scriptsize{$x_1\leftarrow 0$}} (l2);

\end{tikzpicture}
\end{center}
\vspace{-.4cm}
\end{figure}

A brief inspection reveals that location~$\loc_3$ can be reached
from a configuration~$\tuple{\loc_0, \vect{\val_1\\\val_2}}$ if and
only if $\val_1<\val_2< 1$.  The reachability relation between
locations $\ell_0$ and $\ell_3$ is expressed by the formula
\[\begin{aligned} 
	 \varphi_{\ell_0,\ell_3}(z_1,z_2,&r_1,r_2,z'_1,z'_2,r'_1,r'_2)  	\, \defequals  (z_1=z_2=0)\\
	&  \wedge(r_1<r_2<1) \\
\, & \wedge ((z'_2-z'_1=1 \wedge 0\leq r'_2-r'_1< r_2-r_1) \\
	& \vee (z'_2-z'_1=2 \wedge 0 \leq 1+r'_2-r'_1<r_2-r_1)),
\end{aligned}\]
where the real-valued variables~$r_1,r_2,r'_1,r'_2$   range over the interval~$[0,1]$.
\end{example}

\begin{example} Consider the following timed automaton:
\begin{figure}[h!]
\begin{center}
\begin{tikzpicture}[>=latex',shorten >=1pt,node distance=1.9cm,on grid,auto,
roundnode/.style={circle, draw,minimum size=1.2mm}]

\node [roundnode] (l0) at(0,0) {$\ell_0$};
\node [roundnode](l1)  [right=2.4cm of l0] {$\ell_1$};
\node [roundnode](l2)  [right=2.3cm of l1] {$\ell_2$};

\path[->] (l1) edge[loop below]  node [midway,right] {\scriptsize{$~x_1=2$}} node [left, midway]{\scriptsize{$x_1\leftarrow 0~~$}} (l1);
\path[->] (l0) edge  node [above,midway] {\scriptsize{$x_1=x_2=0$}}  (l1);
\path[->] (l1) edge  node [above,midway] {\scriptsize{$x_1=0$}}  (l2);
\end{tikzpicture}
\vspace{-.4cm}
\end{center}
\end{figure}

\noindent We have
\[\begin{aligned} 
 \varphi_{\ell_0,\ell_3}(z_1,z_2,r_1,r_2,&z'_1,z'_2,r'_1,r'_2)  \, \defequals \\
 & (r_1=r_2=0)\wedge (r'_1= r'_2)  \wedge\\\,
&(z_1=z_2=0) \wedge ({z'_2-z_1' \equiv 0} \;(\bmod\; 2)). \\
\end{aligned}\]
\end{example}

\subsection{Parametric Timed Reachability Logic}

\begin{figure*}[t]
\begin{minipage}{.6\linewidth} 

 \begin{tikzpicture}[>=latex',shorten >=1pt,node distance=1.9cm,on grid,auto,
roundnode/.style={circle, draw,minimum size=1.2mm},
squarenode/.style={rectangle, draw,dashed,minimum size=2mm}]

\draw (0,0) node [text=white,fill=white] {dummy};
\node [roundnode] (l0) at(.5,0) {$\ell_0$};

\node [roundnode](l1)  [right=2.3cm of l0] {$\ell_1$};
\node [squarenode](lll1) [above=.7cm of l1] {$p_1$};
\node [roundnode] (l2)  [right=2cm of l1] {$\ell_2$};
\node [roundnode] (l3)  [right=2cm of l2] {$\ell_3$};
\node [squarenode](lll3) [above=.7cm of l3] {$p_2$};
\node [roundnode](l4)  [right=2cm of l3] {$\ell_4$};

\path[->] (l0) edge  node [above,midway] {\scriptsize{$0<x_1<1$}} node[below, midway]{ \scriptsize{$x_1\leftarrow 0$}} (l1);
\path[->] (l1) edge  node [above,midway] {\scriptsize{$x_1=0$}}  (l2);
\path[->] (l2) edge  node [above,midway] {\scriptsize{$x_2=1$}}  (l3);
\path[->] (l3) edge  node [above,midway] {\scriptsize{$x_2=1$}} node[below, midway] { \scriptsize{$x_1\leftarrow 0$}} (l4);

\path[->] (l4) edge [loop above] node [above, midway]{\scriptsize{$x_1\leftarrow 0$}} (l4);

\end{tikzpicture}

\caption{
	A timed automaton where the satisfaction relation of {\ptctl} with parameters ranging over non-negative real numbers is 
	different from the relation when parameters are restricted to naturals. The locations~$\ell_1$ and~$\ell_3$ are labelled by propositions~$p_1$ 
and~$p_2$, respectively. 
	The set $\reset$ of clocks that are reset by a transitions are shown by~$\reset \leftarrow 0$; for example, 
  the transition from  $\loc_3$ to $\loc_4$ is guarded by~$x_2=1$ and resets~$x_1$.
	For all $0<\param<1$, we have $(\loc_0,\zeroval) \models \exists \, \F (p_1 \wedge \exists \, \F_{=\param} \, p_2)$, whereas there exists no~$n\in \N$ such that
	$(\loc_0,\zeroval) \models \exists \, \F (p_1 \wedge \exists \, \F_{=n}\, p_2)$.
	}
\label{fig:realParameters}
  \end{minipage}
\quad
\begin{minipage}{.35\linewidth}     
			\begin{center}
          \begin{tikzpicture}
    \draw[very thin,color=gray,step=.25cm] (0,0) grid (1.25,1.25);    
     \draw[->,thick] (0,0) -- (1.35,0) node[right] {$x_1$};
    \draw[->,thick] (0,0) -- (0,1.35) node[left] {$x_2$};

\draw[-,line width=2pt,blue]  (.75,0)--(1.25,.5);

\end{tikzpicture}
					$\kbordermatrix{
					&  x_0&  x_1&   x_2\\
					x_0 &  (\leq,0)&  (\leq,-0.6)&  (\leq,0)\\
					x_1& (\leq, 1)&     (\leq,0)&   (\leq,0.6)\\
					x_2& (\leq,0.4)&  (\leq,-0.6)&     (\leq,0)}
					$
     \caption{A DBM~$M$ with a zone~$\zone=\semantics{M}$.}\label{fig:d0}
\end{center}
  \end{minipage}
	\end{figure*}

 Timed computation tree logic (\tctl) is an
        extension of computation tree logic for specifying real-time
        properties~\cite{DBLP:journals/iandc/AlurCD93}.
        In~\cite{BruyereDR08} {\tctl} was generalised to allow
        parameters within timing constraints, yielding the logic
        \emph{parametric {\tctl}}.  In this paper we consider the
        fragment of parametric TCTL generated by the reachability
        modality $\exists\F$, which we call \emph{parametric timed
          reachability logic (\ptctl)}.

Let $\prop$ be a set of atomic propositions and $\params$ a set of
parameters.  Formulas of {\ptctl} of the \emph{first type} are given
by the grammar
\begin{gather} \varphi ::= p \,\mid\, \varphi\wedge\varphi \,\mid\, 
\neg\varphi  \,\mid\, \exists \F_{\sim \alpha} \,\varphi \, ,
\label{eq:grammar}
\end{gather}
where $p\in\prop$, ${\sim} \in\{<,\le, =,\ge,>\}$, and
$\alpha \in \mathbb{Q}\cup \params$. Formulas of {\ptctl}  
of the \emph{second type} are given by grammar
\begin{gather}
\psi ::= \varphi \,\mid\, \param-\param' \sim c \,\mid\, 
\psi_1 \wedge \psi_2 \,\mid\, \neg \psi \,\mid\, \exists \param \, \psi \, ,
\end{gather}
where $\varphi$ is a formula of the first type,
$\param,\param'\in\params$, ${\sim} \in\{<,\le, =,\ge,>\}$, and $c \in
\mathbb{Q}$.  In the sequel we use $\forall\G_{\sim \alpha} \varphi$
as abbreviation for $\neg\exists\F_{\sim \alpha}\neg\varphi$.

Formulas of {\ptctl} are interpreted with respect to a timed
automaton~$\A=\tuple{\locs,\clocks,\edges}$ and 
\emph{labelling function}~$\lbl:\locs\rightarrow 2^{\prop}$.  A
\emph{parameter valuation} is a function $\xi:\params \rightarrow \RP$.
Such a function is extended to the rational numbers by writing
$\xi(c)=c$ for $c \in \mathbb{Q}$.  Given a parameter valuation $\xi$, we
define a satisfaction relation $\models_{\xi}$ between configurations of
$\A$ and {\ptctl} formulas by induction over the structure of formulas.
The Boolean connectives are handled in the expected way, and we define

\begin{itemize}
\item[] $q \models_{\xi} \param - \param' \sim c$ iff 
$\xi(\param)-\xi(\param') \sim c$.
\item[] $q \models_{\xi} \exists \F_{\sim \alpha}\,\varphi$ iff  
there exists some infinite non-zeno run~$\rho=q_0 \sto{d_1}
q_1 \sto{d_2} q_2 \sto{d_3} \ldots$ of~$\A$ and $i\in \N$ such that
$q_0=q$, $d_1+\ldots+d_{i} \sim \xi(\alpha)$, and
$q_i \models \varphi$.
\item[] $q \models_{\xi} \exists \param \,\psi$ iff there exists a parameter
  valuation $\xi'$ such that $q \models_{\xi'} \psi$ and $\xi,\xi'$ agree on
  $\params \setminus \{ \param \}$.
\end{itemize}

\begin{example}
The $\ptctl$-formula $\forall\param (\exists\F_{<\theta} p_1 \rightarrow
\exists\F_{<\theta} p_2)$ expresses that some $p_2$-state is reachable in at most
the same time as any $p_1$-state is reachable.
\end{example}

The paper~\cite{BruyereDR08} considered a semantics for parametric
{\tctl} in which parameters range over naturals~$\N$.  Here we have
given a more general semantics in which parameters range over
non-negative real numbers~$\RP$.  The following example shows that the
satisfaction relation changes under this extension.

\begin{example}\label{examreal}
Consider the timed automaton in Figure~\ref{fig:realParameters} with
two clocks~$x_1,x_2$.  Clock valuations~$\val$ are denoted
by vectors~$\vect{\val_1\\ \val_2}$.  Let $\varphi =\exists  \F
(p_1 \wedge \exists  \F_{=\param} \,p_2)$.
All non-zeno infinite runs of the timed automaton, from
configuration~$\tuple{\loc_0,\zeroval}$, start with the following
prefix
\begin{multline*}
(\loc_0,\vect{0\\ 0})\sto{t}(\loc_1, \vect{0\\ t})\sto{0}(\loc_2, \vect{0\\ t})\\\sto{1-t}(\loc_3, \vect{1-t\\ 1})
\sto{0}(\loc_4, \vect{0\\ 1})
\end{multline*}
where $0<t<1$. Now we have that
$(\loc_1, \vect{0\\ t})\models (p_1 \wedge \exists  \F_{=1-t}\,
p_2)$.  As a result, $(\loc_0,\zeroval) \models \exists  \F
(p_1 \wedge \forall  \F_{=\param} \,p_2)$ only for $0< \param<1$.
Thus $(\loc_0,\zeroval) \models \mathop{\exists \param} \varphi$ when
the parameter $\param$ ranges over $\RP$ but not when $\param$ ranges
over $\mathbb{N}$.
\end{example}

Let $\A=\tuple{\locs,\clocks,\edges}$ be a timed automaton augmented
with a labelling function~$\lbl:\locs\rightarrow 2^{\prop}$.  Let
$\varphi$ be a {\ptctl} formula in which all occurrences of parameters
are bound.  The model checking problem of~$\A$ against~$\varphi$ asks,
given a configuration~$\tuple{\loc,\val}$ of~$\A$, whether
$\tuple{\loc,\val} \models \varphi$.

The model checking procedure for parametric {\tctl} with
integer-valued parameters, developed in~\cite{BruyereDR08}, relies on
the region abstraction.  In particular, formulas in this logic have
the same truth value for all configurations in a given region.
However, as the following example shows, region invariance fails when
parameters range over the set of real numbers.
\begin{example}
Consider the timed automaton in Figure~\ref{fig:realParameters}.  Let
$\varphi = \exists \param\, \exists\F_{=\param} (p_1 \wedge
\exists\F_{=\param} p_2)$.  Then a configuration
$(\loc_0,\vect{t_1\\t_2})$ satisfies $\varphi$ just in case $t_1,t_2<1$ and
$2t_1-t_2<1$, for $\param=(1-t_2)/2$.
\end{example}
In Section~\ref{sec:MC} we show that model checking {\ptctl} over
real-valued parameters is decidable in EXPSPACE and it is NEXPTIME-hard.

\section{Difference Bound Matrices}  
\subsection{Basic Definitions}\label{sec:DBM}
In this section we review the notions of clock zones and difference
bound matrices; see \cite{Clarke99,Bouyer03} for further details.  

Let $\clocks = \{x_1,\ldots,x_n\}$ be a set of clock variables.
A \emph{zone} $\zone \subseteq \RP^{\clocks}$ is a set of 
valuations defined by a conjunction of \emph{difference constraints}
$x_j-x_i \prec c$ for $c \in \mathbb{R}$ and ${\prec} \in \{<,\leq\}$.
Note that we allow real-valued constants in difference constraints.

Zones and operations thereon can be efficiently represented
using \emph{difference bound matrices} (DBMs).
A DBM is an $(n+1) \times (n+1)$ matrix~$M$ with entries in the set
\[\mathbb{V}=(\{<,\leq\} \times \mathbb{R}) \cup \{(<,\infty)\} \, .\]
A DBM $M=(\prec_{i,j},m_{i,j})$ can be interpreted as a
conjunction of constraints $x_i-x_j\prec_{i,j} m_{i,j}$, where $x_0$
is a special clock that symbolically represents zero.  Formally, the
semantics of DBM~$M$ is the zone
$$
\semantics{M}=\Big\{\val\in \RP^{\clocks} : 
\bigwedge_{0\leq i,j\leq n}\val_i-\val_j\prec_{i,j} m_{i,j}\Big\} \, ,
$$ 
where $\nu_0=0$.  Figure~\ref{fig:d0} depicts a zone~$\zone \subseteq
[0,1]^2$ containing a line segment and a DBM~$M$ with
$\semantics{M}=\zone$ .

An \emph{atomic} DBM~$M'$ is one that represents a single
constraint~$x_i-x_j\sim c$, where ${\sim} \in \{<,\leq\}$ and $c\in
\mathbb{R}$.  Note that all but one entry of an atomic DBM is the
trivial constraint~$(<,\infty)$.  We often denote DBMs by the
constraints that they represent.

Define a total order~$\leq_{\mathbb{V}}$ on~$\mathbb{V}$ by writing
$(\prec,m)\leq_{\mathbb{V}} (\prec',m')$ if $m<m'$ or if $m=m'$ and
either ${\prec}=<$ or ${\prec'}={\leq}$.  Define addition
on~$\mathbb{V}$ by $ (\prec,m)+ (\prec',m')=(\prec'',m+m')$, where
$$ {\prec''} =
\begin{cases}
\leq & \text {if } {\prec}={\leq}  \text{ and } {\prec'}={\leq}, \\   
< & \text{otherwise}.\\
  \end{cases}
$$ Here we adopt the convention that $m+\infty=\infty+m=\infty$ for
all $m\in\mathbb{R}$.  A DBM~$M=(M_{i,j})$ is in \emph{canonical form}
if~$M_{i,k}\leq_{\mathbb{V}} M_{i,j}+M_{j,k}$ for all~$0\leq i,j,k
\leq n$.  One can transform an arbitrary DBM into an equivalent
canonical-form DBM using the Floyd-Warshall algorithm.  For all
non-empty clock zones~$\zone$, there is a unique DBM~$M$ in canonical
form with~$\semantics{M}=\zone$.  A DBM~$M$ is said to be \emph{consistent} if
$\semantics{M}\neq \emptyset$.  If $M$ is in canonical form, then it is
consistent if and only if $(\leq,0) \leq_{\mathbb{V}} M_{i,i}$ for all~$0\leq i\leq n$.

We now define operations on DBMs that correspond to
 time elapse, projection, and intersection on zones.

\textbf{Time Elapse.}
The image of a DBM $M$ under \emph{time elapse} 
is the DBM $\overrightarrow{M}$ defined by
\[ \overrightarrow{M}_{i,j} = \left\{\begin{array}{ll}(<,\infty) & 
	\text{if $i\neq0, j=0$}\\[2pt]
M_{i,j}  & \text{otherwise.}
\end{array} \right. \]
If $M$ is canonical, then $\overrightarrow{M}$ is also canonical and we have
$\semantics{ \overrightarrow{M}} =\{ \nu + t : \nu \in 
\semantics{M} \mbox{ and }t\geq 0\}$.\\

\textbf{Reset.}
		The image of a DBM $M$ under \emph{resetting clock $x_\ell$}
is~$M[x_\ell \leftarrow 0]$, given by
$M[x_\ell \leftarrow 0]_{i,j} = M_{i_\ell,j_\ell},$ where for any index $k$,
\[ k_\ell = \left\{ \begin{array}{ll} k & \mbox{ if $k\neq\ell$}\\[2pt]
                            0 & \mbox{ otherwise.}
\end{array}\right.  \]
If $M$ is canonical, then $M[x_\ell \leftarrow 0]$ is also canonical
and $\semantics{M} = \{ \nu[x_\ell \leftarrow 0] : \nu \in
\semantics{M} \}$.\\

\textbf{Intersection.}
Our presentation of intersection of DBMs is slightly non-standard.
First, we only consider intersection with atomic DBMs.  (Clearly this
is without loss of generality since any DBM can be written as an
intersection of atomic DBMs.)  Under this restriction we combine
intersection and canonisation, so that our intersection operation
yields a DBM in canonical form if the input DBM is in canonical form.
Specifically, let $M'$ be an atomic DBM with non-trivial
constraint~$M'_{p,q}$.  The DBM $M''= M \cap M'$ is given by
\[ M''_{i,j} = \min(M_{i,j},M_{i,p}+M'_{p,q}+M_{q,j}) \]
for all~$i,j$.   	
Then $M''$ is canonical and 
$\semantics{M''} = \semantics{M} \cap \semantics{M'}$.

\subsection{Closure of a DBM}
\label{sec:closure}
We will use zones to represent the fractional parts of clocks in a
given set of valuations.  For this reason we are solely interested in
zones contained in $[0,1]^n$.  We say that a DBM~$M$
is \emph{$1$-bounded} if for all entries $(\prec,m)$ of $M$ we have
$-1 \leq m \leq 1$.  It is clear that if $M$ is $1$-bounded then
$\semantics{M} \subseteq [0,1]^n$.  Conversely the unique DBM in
canonical form that represents a zone $Z\subseteq [0,1]^n$ is
necessarily $1$-bounded since the constraints in a canonical DBM
cannot be tightened.

Given a $1$-bounded DBM $M$, define the \emph{closure} $M$ to
be the smallest set $\closure(M)$ of DBMs such that
$M \in \closure(M)$, and if $N\in \closure(M)$ then
\begin{itemize}
\item $N \cap M' \in \closure(M)$ for all atomic DBMs $M'$
with numerical entries in $\mathbb{Z}\cup\{\infty\}$.
\item 
$\overrightarrow{N} \cap \bigcap_{i=1}^n (x_i \leq 1) \in \closure(M)$,
\item 
$N[x_i\leftarrow 0] \in \closure(M)$ for $0\leq i \leq n-1$,
\item 
$(N \cap (x_n=1))[x_n \leftarrow 0] \in \closure(M)$.
\end{itemize}
We make three observations about this definition.  First, notice that
in the first item we only require closure with respect to intersection
with constraints with integer constants.  Observe also that in the
second item the time elapse operation has been relativized to
$[0,1]^n$.  This ensures that every DBM $N \in \closure(M)$ denotes a
subset of $[0,1]^n$.  It follows that any consistent DBM in
$\closure(M)$ is 1-bounded.  Finally, note that the clock $x_n$ is
treated in a special way (in keeping with our assumptions about timed
automata in Section~\ref{sec:TA}): it is only reset when it reaches
$1$.

Let $\val \in [0,1]^n$ be a clock valuation, and recall that, by
 convention, $\val_0=0$.  We write $M_\val$ for the 1-bounded DBM $M_\val =
 (\prec_{i,j},m_{i,j})$, where ${\prec_{i,j}}={\leq}$ and
 $m_{i,j}=\val_j-\val_i$ for all $0\leq i,j\leq n$.  Then $M_\val$ is in
 canonical form and $\semantics{M_\val}=\{\val\}$.  

We say a DBM~$M=(\prec_{i,j},m_{i,j}) \in\closure(M_\val)$
is \emph{well-supported}, if each entry $m_{i,j}$ can be written in
the form $c+\val_{j'}-\val_{i'}$ for some $c\in\{-1,0,1\}$ and
indices $0\leq i',j'\leq n$.  Clearly $M_{\val}$ is well-supported.

The following is the main technical result in this section.  See
Appendix~\ref{append-lem-support} for the full proof.
\begin{lemma}\label{lem:support}
  Let $\val \in [0,1]^n$ be a clock valuation.  Then every consistent DBM
  lying in $\closure(M_\val)$ is well-supported. 
\end{lemma}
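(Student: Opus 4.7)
I would prove the lemma by structural induction on the construction of $\closure(M_\val)$. The base case is $M_\val$ itself: every entry $\val_j-\val_i$ is already of the required form $0+\val_{j'}-\val_{i'}$ with $c=0\in\{-1,0,1\}$.

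For the inductive step, there are four closure operations to handle. Two of them are straightforward. For the reset $N[x_\ell\leftarrow 0]$ with $\ell<n$, each new entry $N_{i_\ell,j_\ell}$ is literally a copy of an existing well-supported entry of $N$, so the property is preserved trivially. The $x_n$-reset $(N\cap(x_n=1))[x_n\leftarrow 0]$ is the previous case applied after an atomic intersection (handled below). The bundled time-elapse operation $\overrightarrow N\cap\bigcap_{i=1}^n(x_i\le 1)$ is handled by iterating the atomic-intersection analysis over the clamps $(x_i\le 1)$; it is essential that the clamps are applied together as a single closure step, because after only some of them intermediate entries can transiently carry values whose minimal representation requires integer coefficient $|c|>1$ and hence fail to be well-supported, whereas the full bundle forces such entries back into $[-1,1]$ and restores the required form.

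The main obstacle is therefore the atomic intersection $N\cap M'$ with $M'_{p,q}=k\in\mathbb{Z}$, whose canonised entry is $M''_{i,j}=\min(N_{i,j},\,N_{i,p}+k+N_{q,j})$. If the first argument achieves the minimum, well-supportedness is inherited from $N$. Otherwise the inductive hypothesis lets us expand $M''_{i,j}$ into $(c_1+c_2+k)+\val_a+\val_c-\val_b-\val_d$ for constants $c_1,c_2\in\{-1,0,1\}$ and some indices $a,b,c,d$. The crux is to collapse this four-clock expression into the required two-clock form $c+\val_{j'}-\val_{i'}$ with $c\in\{-1,0,1\}$. The key tools for doing so are: (i) consistency of $M''$ (any consistent DBM in $\closure(M_\val)$ is $1$-bounded), which forces $M''_{p,q}=k\in\{-1,0,1\}$ whenever the intersection genuinely tightens an entry and pins the four-clock sum into $[-1,1]$; and (ii) canonicity of $N$, which supplies tight paths between indices and hence structural equalities that force one of $\val_a,\val_c$ to coincide with one of $\val_b,\val_d$ so that they cancel, leaving an expression of the required form.
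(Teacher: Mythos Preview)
Your inductive framework is right and you correctly identify the atomic-intersection case as the crux, but the resolution you sketch for that case does not go through. Canonicity of $N$ gives only triangle \emph{inequalities} $N_{i,k}\le N_{i,j}+N_{j,k}$; it does not by itself produce the structural \emph{equalities} you need to force one of $\val_a,\val_c$ to coincide with one of $\val_b,\val_d$ in the four-term sum $(c_1+c_2+k)+\val_a+\val_c-\val_b-\val_d$. For a generic $\val$ with no additive relations among its coordinates, nothing in canonicity prevents two unrelated difference terms $N_{i,p}$ and $N_{q,j}$ from combining. Your claim~(i) is also too strong: $1$-boundedness of $M''$ only yields $|k|\le 3$ from $|m''_{i,j}|,|m_{i,p}|,|m_{q,j}|\le 1$, not $k\in\{-1,0,1\}$.

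The paper closes this gap with an auxiliary invariant it calls \emph{tightness}: a DBM $M$ is tight if every non-integer entry satisfies $M_{i,j}=M_{i,n}+M_{n,j}$, i.e.\ factors through the reference clock $x_n$. Tightness is shown to be preserved by every closure operation (this is a separate inductive argument). With tightness in hand, the intersection case splits cleanly: either one of $m_{i,p},m_{q,j}$ is an integer, in which case the sum $m_{i,p}+k+m_{q,j}$ is an integer shift of the other and stays in the support set; or both are non-integer, in which case tightness gives $M_{i,p}+M'_{p,q}+M_{q,j}=M_{i,n}+M_{n,p}+M'_{p,q}+M_{q,n}+M_{n,j}\ge M_{i,n}+M''_{n,n}+M_{n,j}\ge M_{i,j}$, contradicting the assumption that the alternative branch of the $\min$ was strictly smaller. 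This is the missing idea.

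A smaller point: your worry about the bundled time-elapse producing transient coefficients $|c|>1$ is handled more cleanly in the paper by relaxing the target set during the induction to $\mathrm{Supp}_\val=\{c+\val_i-\val_j:c\in\mathbb{Z}\}$ (arbitrary integer $c$) and only invoking $1$-boundedness at the very end to pin $c\in\{-1,0,1\}$. This decouples the inductive invariant from the consistency hypothesis on intermediate DBMs.
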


\begin{proof}[Proof Sketch]
  We show by induction on the structure of~$\closure(M_\val)$ that any
  consistent DBM $M \in\closure(M_\val)$ is well-supported.  The key
  case is for intersection (see Section~\ref{sec:DBM}), which
  does not immediately preserve well-supportedness due to the possibility
  that ~$M''_{i,j}=M_{i,p}+M'_{p,q}+M_{q,j}$.  However we show that in
  this case at least one of $m_{i,p}$ or $m_{q,j}$ lies in
  $\mathbb{Z}$, which ensures well-supportedness of $M''$.
\end{proof}

\subsection{Parametric DBMs}
\label{symbolic}
In this subsection we observe that the construction of
$\closure(M_{\nu})$ can be carried out parametrically, based on the
logical \emph{type} of the clock valuation $\nu\in[0,1]^n$ (to be
defined below).  In particular, if $\nu,\nu' \in [0,1]^n$ have the
same type, then $\closure(M_{\nu})$ and $\closure(M_{\nu'})$ can both
be seen as instances of a common parametric construction.

Recall from Subsection~\ref{sec:Logic} the definition of the set of
terms~$\mathcal{T}_{\mathbb{R}}$ of real arithmetic.  Given
$n\in\mathbb{N}$, let us further write $\mathcal{T}_{\mathbb{R}}(n)$
for the set of terms in variables $r_0,\ldots,r_n$.  A valuation
$\nu\in [0,1]^n$ extends in a natural way to a function
$\nu:\mathcal{T}_{\mathbb{R}}(n)\rightarrow \mathbb{R}$ mapping $r_i$
to $\nu_i$ (recalling the convention that $\nu_0=0$).

Given a clock valuation $\nu \in [0,1]^n$, the \emph{type} of $\nu$ is
the set of atomic $\mathcal{L}_{\mathbb{R}}$-formulas $t\leq t'$, with
$t,t'\in \mathcal{T}_{\mathbb{R}}(n)$ that are satisfied by the
valuation $\nu$. A collection of atomic formulas~$\tau$ is said to be
an \emph{$n$-type} if it is the type of some clock valuation~$\nu\in
[0,1]^n$.  Note that every type contains the inequalities~$r_0\leq 0$
and~$0\leq r_0$.

Given an $n$-type $\tau$, we define an equivalence relation on the set
of terms $\mathcal{T}_{\mathbb{R}}(n)$ that relates terms $t$ and $t'$
just in case the formulas $t\leq t'$ and $t'\leq t$ both lie in
$\tau$.  We write $[t]$ for the equivalence class of term $t$ and
denote by $\mathcal{T}_{\mathbb{R}}(\tau)$ the set of equivalence
classes of $\mathcal{T}_{\mathbb{R}}(n)$.  We can define a linear
order on $\mathcal{T}_{\mathbb{R}}(\tau)$ by writing $[t]\leq [t']$ if
and only if formula $t\leq t'$ lies in $\tau$.  We define an addition
operation on $\mathcal{T}_{\mathbb{R}}(\tau)$ by writing
$[t]+[t']=[t+t']$.

Given an $n$-type $\tau$, a \emph{parametric DBM} of dimension $n$
over~$\mathcal{T}_{\mathbb{R}}(\tau)$ is an $(n+1)\times(n+1)$ matrix
with entries in
\[(\{ <,\leq \} \times \mathcal{T}_{\mathbb{R}}(\tau)) \cup \{
(<,\infty)\} \, .\] We use letters in calligraphic font to denote
parametric DBMs, and roman font for concrete DBMs.  Given a parametric
DBM $\mathcal{M}$, we obtain a concrete DBM $\nu(\mathcal{M})$ by
applying $\nu$ pointwise to the entries of $\mathcal{M}$.

The time elapse and reset operations on DBMs, defined in
Section~\ref{sec:DBM}, formally carry over to parametric DBMs.  Since
the notions of addition and minimum are well-defined on
$\mathcal{T}_{\mathbb{R}}(\tau)$, we can also formally carry over the
definition of intersection to parametric DBMs.

\begin{proposition}
Let $\nu \in [0,1]^n$ be a clock valuation with type~$\tau$ and let
$\mathcal{M}$ be a parametric DBM over $\mathcal{T}_{\mathbb{R}}(\tau)$.
 Then
\begin{enumerate}
\item $\nu(\overrightarrow{\mathcal{M}}) =
\overrightarrow{\nu(\mathcal{M})}$.
\item $\nu(\mathcal{M}[x_i \leftarrow 0]) =
\nu(\mathcal{M})[x_i \leftarrow 0]$.
\item $\nu(\mathcal{M} \cap N) = \nu(\mathcal{M})\cap N$ for all
atomic DBMs $N$.
\end{enumerate}
\label{prop:homo}
\end{proposition}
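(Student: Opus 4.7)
The plan is to view $\nu$ as a homomorphism from the algebra of parametric DBMs over $\mathcal{T}_{\mathbb{R}}(\tau)$ to the algebra of concrete DBMs over $\mathbb{V}$, and to verify operation by operation that $\nu$ commutes with each of the primitive building blocks of the three DBM operations: copying entries, overwriting with the fixed symbol $(<,\infty)$, entry-wise addition, and entry-wise minimum under $\leq_{\mathbb{V}}$. Since each of the three operations in the proposition is defined entry-wise from this small repertoire, the three equalities will follow immediately from the entry-level commutation.

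Parts (1) and (2) involve no arithmetic at all on entries. Time elapse merely overwrites each entry $\overrightarrow{\mathcal{M}}_{i,0}$ (for $i\neq 0$) with the constant symbol $(<,\infty)$ and leaves the remaining entries unchanged; since this constant is independent of $\nu$, the two sides of (1) agree entry-wise. Reset permutes entries through the reindexing $k \mapsto k_\ell$ with no new values introduced, so (2) similarly reduces to an unpacking of definitions.

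The real content lies in part (3), where $(\mathcal{M} \cap N)_{i,j} = \min(\mathcal{M}_{i,j},\, \mathcal{M}_{i,p} + N_{p,q} + \mathcal{M}_{q,j})$ and one must check that $\nu$ commutes with both $+$ and $\min$ on entries. Addition commutes because $[t]+[t'] = [t+t']$ in $\mathcal{T}_{\mathbb{R}}(\tau)$ and term evaluation is linear, so $\nu([t]+[t']) = \nu(t+t') = \nu(t)+\nu(t')$; the comparator component $\prec \in \{<,\leq\}$ is combined by a rule that is independent of $\nu$ and so passes through unchanged. For $\min$ under $\leq_{\mathbb{V}}$, the crucial observation is that $\nu$ preserves the linear order on $\mathcal{T}_{\mathbb{R}}(\tau)$: by construction $[t] \leq [t']$ iff the formula $t \leq t'$ lies in $\tau$, and by the very definition of type this is equivalent to $\nu(t) \leq \nu(t')$. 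Hence the two $\leq_{\mathbb{V}}$-orderings agree under $\nu$ (and likewise for equality of the numerical components), so the selected minimum is the same on both sides. The main obstacle, such as it is, is keeping track of the tie-breaking cases in $\leq_{\mathbb{V}}$ and in the addition rule on comparators; but both rules operate purely on the $\prec$-symbols and are untouched by $\nu$, so the same Boolean combinations arise on each side of each asserted equality.
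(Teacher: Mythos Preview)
Your proposal is correct and follows essentially the same approach as the paper: the paper's proof simply notes that $\nu:\mathcal{T}_{\mathbb{R}}(\tau)\to\mathbb{R}$ is an order embedding and an additive homomorphism, hence preserves all the primitive operations (copying, overwriting with $(<,\infty)$, addition, and minimum) from which time elapse, reset, and intersection are built. Your write-up is a more detailed entry-wise unpacking of exactly this argument, including the observation that the comparator components $\prec\in\{<,\leq\}$ are handled by rules independent of $\nu$.
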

\begin{proof}
  Suppose that $\val$ has type $\tau$.  Then
  $\val:\mathcal{T}_{\mathbb{R}}(\tau) \rightarrow \mathbb{R}$ is an
  order embedding ($[t] \leq [t']$ if and only if $\val(t) \leq
  \val(t')$) and a homomorphism ($\val([t]+[t']) = \val([t])+\val([t])$).
  In particular, $\val$ preserves all operations used to define time
  elapse, projection, and intersection of DBMs.  The result follows.
\end{proof}

Since the basic operations on DBMs are all defined for parametric
DBMs, we can also formally carry over the definition of the closure of
a DBM to parametric DBMs.  In particular, given an $n$-type $\tau$, we
consider the closure of the parametric DBM
$\mathcal{M}_{\tau}=(\prec_{i,j},m_{i,j})$
over~$\mathcal{T}_{\mathbb{R}}(\tau)$, where ${\prec_{i,j}}={\leq}$
and $m_{i,j}=[r_i-r_j]$.  Note that $\val(\mathcal{M}_{\tau}) =
M_{\val}$ for any clock valuation $\val \in [0,1]^n$.  Then, by
Proposition~\ref{prop:homo}, we have the following result:
\begin{proposition}\label{prop:closure-tau}
Let $\val \in [0,1]^n$ be a clock valuation with type~$\tau$.  Then
\[ \left\{ \val(\mathcal{M}) : \mathcal{M} \in 
\closure(\mathcal{M}_{\tau}) \right\} 
= \closure(M_{\val}) \, .\]
\end{proposition}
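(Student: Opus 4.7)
The plan is to prove the two inclusions separately, each by induction on the inductive definition of the respective closure, with Proposition~\ref{prop:homo} providing the inductive step essentially for free. Throughout, I will treat an atomic DBM with an integer (or $\infty$) entry $c$ as a parametric DBM whose sole non-trivial entry is the class $[c]\in\mathcal{T}_{\mathbb{R}}(\tau)$ of the constant term $c$; under any valuation $\val$ this parametric atomic DBM evaluates back to itself.

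For the inclusion $\subseteq$, I will show by induction on the construction of $\closure(\mathcal{M}_{\tau})$ that $\val(\mathcal{M}) \in \closure(M_{\val})$ for every $\mathcal{M} \in \closure(\mathcal{M}_{\tau})$. The base case is immediate from the observation $\val(\mathcal{M}_{\tau}) = M_{\val}$ noted just before the statement. For the inductive step, suppose $\mathcal{N}\in\closure(\mathcal{M}_{\tau})$ and $\val(\mathcal{N}) \in \closure(M_{\val})$. Then for any atomic integer-valued DBM $M'$, Proposition~\ref{prop:homo}(3) gives $\val(\mathcal{N}\cap M') = \val(\mathcal{N})\cap M' \in \closure(M_{\val})$; Proposition~\ref{prop:homo}(1), applied together with (3) to handle the relativization $\bigcap_{i=1}^n(x_i\leq 1)$, gives $\val(\overrightarrow{\mathcal{N}}\cap\bigcap_i (x_i\leq 1)) = \overrightarrow{\val(\mathcal{N})}\cap\bigcap_i (x_i\leq 1)\in\closure(M_{\val})$; and Proposition~\ref{prop:homo}(2) handles both kinds of reset clauses analogously (the guard $x_n=1$ being an atomic integer-valued DBM).

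For the reverse inclusion $\supseteq$, I will dually show by induction on the construction of $\closure(M_{\val})$ that every $N\in\closure(M_{\val})$ is of the form $\val(\mathcal{M})$ for some $\mathcal{M}\in\closure(\mathcal{M}_{\tau})$. The base case $M_{\val}=\val(\mathcal{M}_{\tau})$ is again immediate. For the inductive step, if $N=\val(\mathcal{M})$ with $\mathcal{M}\in\closure(\mathcal{M}_{\tau})$, then each of the four closure operations applied to $N$ can be matched by the corresponding operation applied to $\mathcal{M}$ inside $\closure(\mathcal{M}_{\tau})$, and Proposition~\ref{prop:homo} guarantees that after evaluating by $\val$ we recover exactly the operation on $N$.

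The only point that requires slight care is that the closure is defined using intersection with atomic DBMs whose entries are integers (or $\infty$); I need these to be legitimate parametric DBMs over $\mathcal{T}_{\mathbb{R}}(\tau)$, which is true because every rational constant is a term in $\mathcal{T}_{\mathbb{R}}(n)$ and hence determines an equivalence class in $\mathcal{T}_{\mathbb{R}}(\tau)$ that is fixed by $\val$. With that identification, the proof is a straightforward two-sided induction driven by Proposition~\ref{prop:homo}, and I do not expect any genuine obstacle beyond bookkeeping.
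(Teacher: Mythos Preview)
Your proof is correct and follows the same approach as the paper, which simply states that the result follows ``by Proposition~\ref{prop:homo}'' without further detail. Your two-sided induction on the closure definitions, with Proposition~\ref{prop:homo} supplying each inductive step, is exactly the argument the paper has in mind; your explicit handling of integer-valued atomic DBMs as parametric DBMs via constant terms is the natural way to make the application of Proposition~\ref{prop:homo}(3) precise.
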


Define the set $\mathcal{DT}_{\mathbb{R}}(n)$ of \emph{difference
  terms} to be the subset of $\mathcal{T}_{\mathbb{R}}(n)$ comprising
those terms of the form $c+r_i-r_j$, where $c\in\{-1,0,1\}$ is a
constant and $r_i,r_j$ are variables with $0 \leq i,j \leq n$.  From
Lemma~\ref{lem:support} and Proposition~\ref{prop:closure-tau} we now
have:

\begin{corollary}
  Fix an $n$-type $\tau$.  Then every DBM in
  $\closure(\mathcal{M}_{\tau})$ has all its entries of the form
  $(\prec,[t])$, where ${\prec}\in\{<,\leq\}$ and $t \in
  \mathcal{DT}_{\mathbb{R}}(n)$.
\label{corl:well-supported}
\end{corollary}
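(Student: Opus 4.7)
The corollary combines Lemma~\ref{lem:support} with the parametric-to-concrete correspondence in Proposition~\ref{prop:closure-tau}. My plan is to fix any valuation $\val \in [0,1]^n$ of type $\tau$ (such a valuation exists by the definition of $n$-type), push a given parametric DBM $\mathcal{M} \in \closure(\mathcal{M}_\tau)$ into the concrete world via $\val$, apply Lemma~\ref{lem:support} there, and then lift the resulting structural information back to the parametric setting.

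The lifting step relies on the observation that $\val$ is \emph{injective} on the equivalence classes of $\mathcal{T}_{\mathbb{R}}(\tau)$. Indeed, $t$ and $t'$ represent the same class iff both $t \leq t'$ and $t' \leq t$ lie in $\tau$, which, by the definition of type, is equivalent to $\val(t) = \val(t')$. Hence each $[t] \in \mathcal{T}_{\mathbb{R}}(\tau)$ is determined by its image under $\val$.

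With injectivity in hand, the main argument is direct. Fix a consistent $\mathcal{M} \in \closure(\mathcal{M}_\tau)$ and consider an arbitrary entry $(\prec, [t])$. By Proposition~\ref{prop:closure-tau}, the concrete DBM $\val(\mathcal{M})$ lies in $\closure(M_\val)$ and is consistent. Lemma~\ref{lem:support} then produces indices $i', j'$ and a constant $c \in \{-1,0,1\}$ such that the corresponding entry of $\val(\mathcal{M})$ equals $(\prec, c + \val_{j'} - \val_{i'})$. Since $\val$ acts pointwise on the entries of $\mathcal{M}$, we have $\val(t) = c + \val_{j'} - \val_{i'} = \val(c + r_{j'} - r_{i'})$, and injectivity forces $[t] = [c + r_{j'} - r_{i'}]$, exhibiting a difference-term representative of the entry.

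The main obstacle I anticipate, albeit a minor one, is handling parametric DBMs in $\closure(\mathcal{M}_\tau)$ that are inconsistent, since Lemma~\ref{lem:support} only controls consistent DBMs. I would address this either by restricting attention to consistent members of the closure (only these contribute downstream to reachability) or via a short direct induction verifying that each of the closure operations preserves difference-term entries on the nose, reusing the intersection analysis that underlies Lemma~\ref{lem:support} to dispose of the $M_{i,p} + M'_{p,q} + M_{q,j}$ case regardless of consistency.
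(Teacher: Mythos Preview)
Your proposal is correct and follows exactly the route the paper intends: the paper derives the corollary in one line ``From Lemma~\ref{lem:support} and Proposition~\ref{prop:closure-tau}'', and you have spelled out precisely how those two ingredients combine, with the injectivity of $\val$ on $\mathcal{T}_{\mathbb{R}}(\tau)$ being the (correct and straightforward) lifting step that the paper leaves implicit. Your caveat about inconsistent DBMs is also on target---the paper's statement is slightly loose here, since Lemma~\ref{lem:support} only constrains consistent DBMs, and only consistent DBMs are used downstream in the counter-machine construction; your proposed fix of restricting to consistent members is the right reading.
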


The significance of Corollary~\ref{corl:well-supported} is that the
only part of the type $\tau$ required to determine
$\closure(\mathcal{M}_{\tau})$ is the \emph{finite} collection of
formulas $t\leq t'$ in $\tau$ such that
$t,t' \in \mathcal{DT}_{\mathbb{R}}(n)$.  Thus
$\closure(\mathcal{M}_\tau)$ is finite.  Indeed it is not hard to see
from Corollary~\ref{corl:well-supported} that
$|\closure(\mathcal{M}_{\tau})| \leq 2^{\mathrm{poly}(n)}$.

\section{A Family of Region Automata}
Let $\A$ be a timed automaton.  Our aim in this section is to define a
finite collection of counter machines that represents the reachability
relation on $\A$.  Intuitively the counters in these machines are used
to store the integer parts of clock valuations of reachable
configurations, while the fractional parts of the clock valuations are
aggregated into zones which are represented by difference bound
matrices encoded within control states.

\subsection{Monotonic Counter Machine}
\label{sec:monotone}
In this subsection we introduce the class of \emph{monotonic counter
  machines} and show that the reachability relation for a machine
in this class is definable in Presburger arithmetic.  The proof is
straightforward, and is related to the fact that the reachability
relation of every reversal-bounded counter machine is Presburger
definable~\cite{FinkelS08}.

Let $C = \{c_1,\ldots,c_n\}$ be a finite set of \emph{counters}.
The collection of \emph{guards}, denoted $\Phi(C)$, is given by the grammar
\[ \varphi ::= \true \mid 
c < k \mid c=k \mid c>k \mid \varphi \wedge \varphi \, , \] 
where $c\in C$ and $k\in\mathbb{Z}$.  The set of
\emph{counter operations} is
\[ \mathrm{Op}(C)=\{ \mathit{reset}(c) , \mathit{inc}(c) : c \in C \} \cup \{ \mathit{nop}\} \, . \] 
A \emph{monotonic counter machine} is a tuple
$\mathcal{C}=\langle S,C,\Delta\rangle$, where $S$ is a
finite set of \emph{states}, $C$ is a finite set of
\emph{counters}, and
$\Delta \subseteq S\times \Phi(C) \times
\mathrm{Op}(C) \times S$ is a set of \emph{edges}.

The set of \emph{configurations} of $\mathcal{C}$ is
$S\times\mathbb{N}^n$.  A configuration $\tuple{s,\upsilon}$ consists
of a state~$s\in S$ and a \emph{counter valuation}~$\upsilon \in
\mathbb{N}^n$, where $\upsilon_i$ represents the value of
counter~$c_i$ for $i=1,\ldots,n$. 
The satisfaction relation~$\models$ between counter valuations and guards
is defined in the obvious way.
The \emph{transition relation}  
\[{\rightarrow} \subseteq (S\times\mathbb{N}^n)\times
  (S\times\mathbb{N}^n) \] is specified by writing
$\tuple{s,\upsilon}\rightarrow \tuple{s',\upsilon'}$ just in case
at least one of the following holds:
\begin{itemize}
\item there is an edge
  $\tuple{ s,\varphi,\mathit{nop},s'} \in \Delta$ such that
  $\upsilon\models \varphi$ and $\upsilon=\upsilon'$;
\item there is an edge
$\tuple{ s,\varphi,\mathit{reset}(c_i),s'} \in \Delta$
such that 
$\upsilon \models \varphi$, $\upsilon'_i=0$, and
$\upsilon'_j=\upsilon_j$ for $i\neq j$;
\item there is an edge
  $\tuple{ s,\varphi,\mathit{inc}(c_i),s'} \in \Delta$ such that
  $\upsilon \models \varphi$, $\upsilon'_i=\upsilon_i+1$, and $\upsilon'_j=\upsilon_j$ for $i\neq j$.
\end{itemize}
The reachability relation on $\mathcal{C}$ is the reflexive transitive
closure of $\rightarrow$.  

The proof of the following result is given in Appendix~\ref{append:monotone}.
\begin{proposition} \label{prop:monotone}
  Let $\mathcal{C}$ be a monotonic counter machine with~$n$ counters.
  Given states $s,s'$ of $\mathcal{C}$, the reachability relation \[
  \{ \tuple{\upsilon,\upsilon'} \in \mathbb{N}^{2n} :
  \tuple{s,\upsilon} \mathrel{\longrightarrow^*} \tuple{s',\upsilon'}
  \} \] is definable by a formula in the existential fragment of
  Presburger arithmetic that has size exponential in $\mathcal{C}$.
\end{proposition}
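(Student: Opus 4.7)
The plan is to build an existential Presburger formula by decomposing runs of $\mathcal{C}$ structurally. The key feature to exploit is that, because each counter is non-decreasing between consecutive resets---the only way a counter value can drop is via a $\mathit{reset}$ operation that sends it to $0$---runs of $\mathcal{C}$ admit a clean structural decomposition that is amenable to Parikh-image reasoning, just as for reversal-bounded counter machines.

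First, I would invoke the standard decomposition of walks in a finite directed graph: any walk from $s$ to $s'$ in the control graph of $\mathcal{C}$ can be written as a concatenation of at most $O(|S|)$ segments that alternate between simple paths (each state visited at most once) and sub-walks confined to a single strongly connected component (SCC). The number of distinct ``shapes'' arising in this way is at most $2^{\mathrm{poly}(|\mathcal{C}|)}$. I will write one existential Presburger subformula per shape and disjoin them.

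Fix one shape. For each edge $e$ inside an SCC segment, introduce a Parikh variable $x_e\in\mathbb{N}$ counting how many times $e$ is fired in that segment. The subformula then conjoins: a \emph{connectivity} constraint, asserting via standard Euler-style flow equations that $(x_e)$ is realisable as the Parikh image of some walk between the designated entry and exit states of the segment; \emph{counter-dynamics} constraints, computing the exit value of each $c_i$ from its entry value by a linear update in the $x_e$ (using whether the segment contains a $\mathit{reset}(c_i)$ edge that is actually fired); \emph{simple-path guard} constraints, directly enforcing the guards on the inter-segment edges at their known counter valuations; and \emph{SCC-segment guard} constraints, forcing every guard on an edge $e$ inside an SCC segment to hold at each of its $x_e$ firings.

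The main obstacle is the last item: a guard on an edge fired $x_e$ times must be validated at each firing, yet the intermediate counter values depend on the order in which edges of the segment are fired. This is precisely where monotonicity is exploited. I would further subdivide each SCC segment into sub-phases indexed by which counters are reset and in what order; since each SCC contains at most $|C|\cdot|\Delta|$ resetting edges, the number of sub-phase sequences stays singly exponential in $|\mathcal{C}|$. Within one sub-phase no counter is reset, so every counter is monotonically non-decreasing throughout, and a guard atom $c_i\sim k$ on an edge $e$ fired $x_e$ times reduces to a bounded conjunction of linear constraints on the entry value of $c_i$, the threshold $k$, and the number of $c_i$-increments occurring before the first and after the last firing of $e$ in the sub-phase. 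This discharges the ``holds at every firing'' obligation with finitely many Presburger constraints. Disjoining over all shapes and sub-phase assignments yields an existential Presburger formula of exponential size, as required.
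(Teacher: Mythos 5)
There is a genuine gap, and it sits exactly where you lean on monotonicity. Your bound on the number of sub-phases conflates resetting \emph{edges} with reset \emph{events}: an SCC contains at most $|\Delta|$ resetting edges, but a run may fire such an edge arbitrarily often, and these runs cannot in general be replaced by runs with few resets. For instance, take counters $c_1,c_2$ and a cycle $s\to t\to u\to s$ in which the edge $s\to t$ increments $c_1$, the edge $t\to u$ has guard $c_1=1$ and performs $\mathit{reset}(c_1)$, and the edge $u\to s$ increments $c_2$. Then $\tuple{s,(0,k)}$ is reachable from $\tuple{s,(0,0)}$ for every $k$, but every witnessing run fires the reset edge $k$ times. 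In your encoding a reset-free sub-phase walk can never cross the edge $t\to u$, so with sub-phase sequences of length bounded by $|C|\cdot|\Delta|$ the number of $\mathit{inc}(c_2)$ firings your constraints can certify is bounded, and the resulting formula under-approximates the reachability relation. (For the same reason the ``exit value'' of a counter that is reset in a segment is the number of increments after its \emph{last} reset, which your bounded sub-phase structure is also supposed to supply and cannot.)

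Even inside a genuinely reset-free sub-phase there is a second unproved step: the quantities you invoke --- ``the number of $c_i$-increments occurring before the first and after the last firing of $e$'' --- are not functions of the Parikh variables $x_e$; they depend on the order of firings, and with several guarded edges and several counters you must in addition certify that a single ordering realises all of these side conditions simultaneously. Discharging this is the real work, and it is what the paper's proof does differently: it builds an NFA whose states are triples of a control state, a region vector in $(\{0,\ldots,N\}\cup\{\infty\})^n$ (with $N$ the maximal guard constant), and the set $\lambda$ of counters still to be reset in the future. The region component makes every guard check a purely state-based test, independent of any ordering, and $\lambda$ ensures that only increments occurring after a counter's last reset are emitted as alphabet letters, so the final counter values are read off from the Parikh image of a regular language, with no bound needed on the number of resets. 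If you want to salvage your flow-based decomposition, you will effectively have to reintroduce this region-plus-future-resets bookkeeping into the segment states; monotonicity is what makes that finite abstraction exact, not what lets you avoid it.
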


\subsection{Concrete Region Automata}
\begin{figure*}[t!]

\centering

\begin{tikzpicture}[>=latex',shorten >=1pt,node distance=1.9cm,on grid,auto,
state/.style={
           rectangle,
           draw=black, 
           minimum height=2cm,
					minimum width=1.7cm,
           inner sep=2pt,
           text centered,
           },
roundnode/.style={circle, draw,minimum size=1.2mm},]

\node [state,label={[label distance=.3cm]90:counter machine $\mathcal{C}_{\tuple{\loc_0,\val}}$:}] (l0z0) at(0,0) {~\scalebox{.9}{\input{z0.tex}}};
\node [draw=none] (dum) [above=.75 of l0z0] {{\small $\tuple{\ell_0,M_0}$}};

\node [state] (l0z1) [right=3.6cm of l0z0] {~\scalebox{.9}{\input{z1.tex}}};
\node [draw=none] (dum) [above=.75 of l0z1] {{\small$ \tuple{\ell_0, M_1}$}};

\node [state] (l1z2) [right=3.6cm of l0z1] {~\scalebox{.9}{\input{z2.tex}}};
\node [draw=none] (dum) [above=.75 of l1z2] {{\small $\tuple{\ell_1,M_2}$}};

\node [state] (l1z3) [right=3.6cm of l1z2] {~\scalebox{.9}{ \begin{tikzpicture}
    \draw[very thin,color=gray,step=.25cm] (0,0) grid (1.25,1.25);    
    \draw[->,thick] (0,0) -- (1.35,0);
    \draw[->,thick] (0,0) -- (0,1.35);

\draw [draw=magenta, fill=magenta]
     (0,0)--  (0,.5) -- (.75,1.25) -- (1.25,1.25);
		    \draw[->,thick] (0,0) -- (1.35,0);
    \draw[->,thick] (0,0) -- (0,1.35);

\end{tikzpicture}}};
\node [draw=none] (dum) [above=.75 of l1z3] {{\small $\tuple{\ell_1,M_3}$}};

\node [state] (l1z4) [below=3cm of l1z3] {~\scalebox{.9}{\input{z4.tex}}};
\node [draw=none] (dum) [above=.75 of l1z4] {{\small $\tuple{\ell_1,M_4}$}};

\node [state] (l1z5) [left=3.6cm of l1z4] {\scalebox{.9}{\input{z5.tex}}};
\node [draw=none] (dum) [above=.75 of l1z5] {{\small $\tuple{\ell_1,M_5}$}};

\path[->] (l0z0) edge node [above,midway] {\scriptsize{$\mathit{nop}$}}  node[above,below=.5cm] {\scriptsize{(delay)}} (l0z1);
\path[->] (l0z1) edge node [above,midway] {\scriptsize{$\mathit{reset}(c_1)$}} node [below,midway] {\scriptsize{$c_1=0$}} node[above,below=.5cm] {\scriptsize{(discrete)}} (l1z2);
\path[->] (l1z2) edge node [above,midway] {\scriptsize{$\mathit{nop}$}} node[above,below=.5cm] {\scriptsize{(delay)}} (l1z3);
\path[->] (l1z3) edge node [left,midway] {\scriptsize{$\mathit{inc}(c_2)$}} node[near end,right] {\scriptsize{(wrapping)}} (l1z4);
\path[->] (l1z4) edge node [above,midway] {\scriptsize{$\mathit{nop}$}} node[above,below=.5cm] {\scriptsize{(delay)}} (l1z5);
\path[->] (l1z5) edge node [right,midway] {\scriptsize{$\mathit{inc}(c_1)$}}  node[near start,left] {\scriptsize{(wrapping)}} (l1z2);

\node [roundnode,label={[label distance=.3cm]90:timed automaton $\A:$}] (l0) at(-.2,-3) {$\ell_0$};
\node [roundnode](l1)  [right=4cm of l0] {$\ell_1$};

\path[->] (l0) edge  node [above,midway] {\scriptsize{$0<x_1< 1$}}  node [below,midway] {\scriptsize{$x_1\leftarrow 0$}} (l1);

\end{tikzpicture}

\caption{A timed automaton~$\A$ together with  
  the fragment of counter machine~$\mathcal{C}_{\tuple{\loc_0,\val}}$ 
	relevant to 
	expressing the reachability relation of~$\ell_0$ and~$\ell_1$.  The
  valuation~$\val$ is such that $\val_1=0.6$ and $\val_2=0$.
  States~$\tuple{\ell,M}$ of the counter machine are illustrated by
  $\ell$ and the zone that~$M$ represents.  The initial state is
  $\tuple{\loc_0,M_0}$, where $M_0=M_{\val}$.
}
\label{fig:reachAutomata}
\end{figure*}

Let $\A=\tuple{\locs,\clocks,\edges}$ be a timed automaton and
$\tuple{\loc,\val}$ a configuration of $\A$.
We define a monotonic counter machine~$\mathcal{C}_{\tuple{\loc,\val}}$ 
whose configuration graph represents
all configurations of $\A$ that are reachable from~$\tuple{\loc,\val}$.

Let $\clocks=\{x_1,\ldots,x_n\}$ be the set of clocks in~$\A$.  Recall
from Section~\ref{sec:TA} the assumption that clock $x_n$ is never
reset by the timed automaton.  To simplify the construction, we also
assume that each transition in~$\A$ resets at most one clock.  This is
without loss of generality with respect to reachability.

Given a clock constraint $\varphi \in \Phi(\clocks)$, we decompose
$\varphi$ into an integer constraint $\varphi_{\mathsf{int}} \in
\Phi(C)$ and a real constraint $\varphi_{\mathsf{frac}} \in
\Phi(\clocks)$ such that for every clock valuation $\val' \in
\RP^{\clocks}$,
\[ \val' \models \varphi \quad\mbox{iff}\quad \lfloor \val' \rfloor \models \varphi_{\mathsf{inc}}
	\mbox{ and } \fract(\val') \models \varphi_{\mathsf{frac}} \,\] 
The definition of $\varphi_{\mathsf{int}}$ and
        $\varphi_{\mathsf{frac}}$ is by induction on the structure of
        $\varphi$.  The details are given in
        Figure~\ref{fig:decompose}.  

\begin{figure}[ht]
\begin{center}
\begin{tabular}{ c| c c c c} 
$\varphi$ & $x < k$ & $x= k$ & $k<x<k+1$& $x\geq k $  \\
\hline
$\varphi_{{\sf int}}$ & $c\leq k-1$ & $c=k$ & $c=k$ & $c\geq k$  \\ 
$\varphi_{{\sf frac}}$ &$x<1$ & $x=0$& $0<x<1$& $x\geq 0$  \\  
\end{tabular}
\end{center}
\caption{Base cases of the inductive definition of
  $\varphi_{\mathsf{inc}}$ and $\varphi_{\mathsf{frac}}$, where $x$ is
  a clock variable and $c$ is a counter variable.  (Note any guard
  $\varphi\in \Phi(X)$ can be expressed as a Boolean combination of
  the basic guards in the table.)  For the inductive step we have
  $(\varphi \wedge \varphi')_{\sf int} = \varphi_{\sf int}\wedge
  \varphi'_{\sf int}$ and $(\varphi \wedge \varphi')_{\sf frac} =
  \varphi_{\sf frac}\wedge \varphi'_{\sf frac}$.}
\label{fig:decompose}
\end{figure}

The construction of the counter machine
$\mathcal{C}_{\tuple{\loc,\val}}=\tuple{S,C,\Delta}$ is such that the
set~$S$ of states comprises all pairs~$\tuple{\loc', M}$ such that
$\loc'\in \locs$ is a location of~$\A$ and $M\in\closure(M_{\fract(\val)})$ is a consistent DBM.  The set of counters
is $C=\{c_1,\ldots,c_n\}$, where $n$ is the number of clocks in~$\A$.
Intuitively the purpose of counter $c_i$ is to store the integer part
of clock $x_i$, for $i=1,\ldots,n$.

We classify the transitions of
$\mathcal{C}_{\tuple{\loc, \val}}$ into three different types:
 From all states $\tuple{\ell_1,M_1}$ to a state $\tuple{\ell_1,M_2}$, there is  
\begin{itemize}
\item a \emph{delay transition} if
  $M_2=\overrightarrow{M_1} \cap \bigcap_{i=1}^n (x_i \leq 1)$.  Such
  a transition has guard $\true$ and operation $\mathit{nop}$;
\item  a \emph{wrapping transition} if 
$M_2 = (M_1\cap (x_i=1))[x_i \leftarrow 0]$ for some clock~$x_i$.  Such a transition
has guard $\true$ and operation $\mathit{inc}(c_i)$.
\end{itemize}
Suppose that $(\loc,\varphi,\{x_i\} ,\loc')$ is a transition of $\A$.
Decompose the guard $\varphi$ into~$\varphi_{{\sf int}}$ and
$\varphi_{{\sf frac}}$. Then from all states $\tuple{\ell_1,M_1}$ to a
state~$\tuple{\ell_2,M_2}$, there is
\begin{itemize}
\item  a \emph{discrete transition} if 
$M_2=(M_1\cap \varphi_{{\sf frac}})[x_i \leftarrow 0]$. Such a transition
has guard $\varphi_{{\sf int}}$ and operation $\mathit{reset}(c_i)$.
\end{itemize}

The following proposition describes how the set of reachable
configurations in $\mathcal{C}_{\tuple{\loc,\val}}$ represents the set
of configurations reachable from $\tuple{\loc,\val}$ in the timed
automaton $\A$.  The proposition is a straightforward variant of the
soundness and completeness of the DBM-based forward reachability
algorithm for timed automata, as shown, e.g., in~\cite[Theorem
  1]{BengtssonY03}.  We give a proof in Appendix~\ref{app:sound}.

\begin{proposition}
  Configuration~$\tuple{\loc',\val'}$ is reachable
  from~$\tuple{\loc,\val}$ in~$\A$ if and only if there exists some
  DBM $M' \in \closure(M_{\mathrm{frac}(\val)})$ such that the
  configuration~$\tuple{\tuple{\loc',M'}, \floor{\val'}}$ is reachable
  from~$\tuple{\tuple{\loc,M_{\fract(\val)}},\floor{\val}}$ in the
  counter machine~$\mathcal{C}_{\tuple{\loc,\val}}$ and
  $\mathrm{frac}(\val')\in \semantics{M'}$.
\label{prop:sound}
\end{proposition}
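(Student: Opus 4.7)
The plan is to prove both directions by induction on the length of a run, maintaining the invariant that reachable configurations $\tuple{\tuple{\loc'',M''},\upsilon}$ in $\mathcal{C}_{\tuple{\loc,\val}}$ represent precisely the set of $\A$-configurations $\tuple{\loc'',\val''}$ with $\floor{\val''} = \upsilon$ and $\fract(\val'') \in \semantics{M''}$. The base case is immediate because $M_{\fract(\val)}$ represents the singleton $\{\fract(\val)\}$ and the initial counter valuation is $\floor{\val}$.

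For the forward direction, consider the last transition of an $\A$-run from $\tuple{\loc,\val}$ to $\tuple{\loc',\val'}$. If it is a discrete transition using edge $(\loc_1,\varphi,\{x_i\},\loc')$ from some intermediate $\tuple{\loc_1,\val''}$, then the decomposition property of $\varphi_{\mathsf{int}}$ and $\varphi_{\mathsf{frac}}$ (see Figure~\ref{fig:decompose}) gives $\floor{\val''} \models \varphi_{\mathsf{int}}$ and $\fract(\val'') \models \varphi_{\mathsf{frac}}$. Applying the inductive hypothesis to obtain a reachable state $\tuple{\loc_1,M''}$ with $\fract(\val'') \in \semantics{M''}$, the discrete transition of $\mathcal{C}$ leads to $\tuple{\loc',(M'' \cap \varphi_{\mathsf{frac}})[x_i\leftarrow 0]}$, whose zone contains $\fract(\val')$, while the $\mathit{reset}(c_i)$ operation correctly updates the counters to $\floor{\val'}$. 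If the last transition is a delay of duration $d$, I would split $d$ into finitely many sub-delays at exactly those time instants where some clock hits an integer value, and simulate each sub-delay by one delay transition of $\mathcal{C}$ (whose effect is time elapse relativized to $[0,1]^n$) followed by a wrapping transition for each clock that reached $1$; the relativisation of time elapse is exactly what makes this decomposition match the closure operations.

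For the reverse direction, I induct on the length of the $\mathcal{C}$-run. Delay transitions in $\mathcal{C}$ correspond to $\A$-delays of appropriate small duration that keep all clocks below the next integer; wrapping transitions on $c_i$ correspond to $\A$-delays that bring the fractional part of $x_i$ to exactly $1$, at which moment the integer part increments and the fractional part resets to $0$; discrete transitions match those of $\A$ directly by again using the guard decomposition. The key observation is that $\fract(\val'') \in \semantics{M''}$ together with consistency of the target DBM guarantees the existence of a concrete valuation realising each step.

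The main obstacle is the rigorous handling of delay transitions in the forward direction: a single delay of $\A$ can span arbitrarily many integer boundaries, so one must argue that the clocks reach integer values in a well-defined linear order (or simultaneously), and that performing the delay and wrapping operations in this order preserves the zone-containment invariant. This depends on the fact that once $\overrightarrow{N}$ is intersected with $\bigcap_i(x_i \leq 1)$, any further elapse can only occur after a wrapping, which is precisely how $\closure(M_{\fract(\val)})$ is defined. Once this boundary-crossing bookkeeping is established, the rest of the argument is a routine matching of operations between $\A$ and $\mathcal{C}_{\tuple{\loc,\val}}$ analogous to the standard soundness and completeness proof for DBM-based forward reachability (\cite[Theorem~1]{BengtssonY03}).
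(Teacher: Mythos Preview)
Your proposal is correct and follows essentially the same strategy as the paper: split delays at integer boundaries for the $\A\to\mathcal{C}$ direction and match delay/wrapping/discrete transitions one-to-one, citing the standard DBM forward-reachability correctness. Two small points of comparison: for the $\mathcal{C}\to\A$ direction the paper carries out a \emph{backward} induction---given the final target $\fract(\val')\in\semantics{M'}$ it constructs pre-image valuations step by step---which is slightly cleaner than your forward induction because it avoids having to maintain the universal invariant ``every point of the current zone is reachable'' (forward simulation with a single chosen valuation would not suffice, since a committed $\nu^{(i)}$ only reaches $\nu^{(i)}+d$ under delay); and note that a wrapping transition corresponds to the \emph{identity} on $\A$-configurations (it is just the re-encoding $\upsilon_i+1,\ \nu_i=0$ of $\upsilon_i,\ \nu_i=1$), not to an additional $\A$-delay as your wording suggests.
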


We illustrate the translation from timed automata to counter machines
with the following example.
\begin{example}
Consider the timed automaton~$\A$ in Figure~\ref{fig:reachAutomata}
with clocks~$\clocks=\{x_1,x_2\}$, where $x_2$ is the reference
clock. Let the configuration~$\tuple{\loc_0,\val}$ be such that
$\val=\vect{0.6\\0}$.  Also shown in Figure~\ref{fig:reachAutomata} is
the counter machine $\mathcal{C}_{\tuple{\loc_0,\val}}$ that is
constructed from $\A$ and $\tuple{\loc_0,\val}$ in the manner
described above.  The control states of this machine are pairs
$\tuple{\loc,M}$, where $\loc$ is a location of $\A$ and $M$ is a
consistent DBM in $\closure(M_{\val})$. The
machine~$\mathcal{C}_{\tuple{\loc_0,\val}}$ has two counters, respectively
denoted by $c_1$ and $c_2$.

The initial state of $\mathcal{C}_{\tuple{\loc_0,\val}}$ is
$\tuple{\loc_0,M_0}$, where $M_0 = M_{\val}$.  Note that
  $\semantics{M_0} = \left\{ \vect{0.6\\0}\right\}$.  The
  counter-machine state~$\tuple{\loc_0,M_0}$ in tandem with counter
  valuation $\vect{0\\0}$ represents the configuration $\tuple{\ell_0,\val}$
  of $\A$.

There is a delay edge in $\mathcal{C}_{\tuple{\loc_0,\val}}$ from
$\tuple{\loc_0,M_0}$ to $\tuple{\loc_0,M_1}$, where $M_1
=\overrightarrow{M_0} \cap \bigcap_{i=1}^2 (x_i \leq 1)$.  We then have
$\semantics{M_1} = \left\{ \vect{0.6\\0} + t : 0 \leq t \leq 0.4
\right\}$.  

The single transition of $\A$ yields a discrete edge in
$\mathcal{C}_{\tuple{\loc_0,\val}}$ from $\tuple{\loc_0,M_1}$ to
$\tuple{\loc_1,M_2}$.  This transition in $\A$ has guard $\varphi \defequals
0<x_1<1$.  This decomposes into separate constraints on the integer
and fractional parts, respectively given by
\[\begin{aligned} 
\varphi_{{\sf int}}\defequals (c_1=0) \qquad \text{ and } \qquad
\varphi_{{\sf frac}}\defequals(0<x_1<1).
\end{aligned}
\] 
The integer part $\varphi_{\mathsf{int}}$ becomes the guard of the
corresponding edge in $\mathcal{C}_{\tuple{\loc_0,\val}}$.  The
fractional part $\varphi_{\mathsf{frac}}$ is incorporated into the DBM
$M_2$, which is defined as 
$$M_2 = (M_1 \cap (0<x_1<1))[x_1 \leftarrow
  0],$$  
where $\semantics{M_2} = \left\{ \vect{0\\y} : 0 \leq y
< 0.4 \right\}$.  
There is a further delay edge in $\mathcal{C}_{\tuple{\loc_0,\val}}$ 
from $\tuple{\loc_1,M_2}$ to $\tuple{\loc_1,M_3}$.

There is a wrapping edge from $\tuple{\loc_1,M_3}$ to
$\tuple{\loc_1,M_4}$, where $M_4 = (M_3 \cap (x_2=1))[x_2 \leftarrow
  0]$.  The counter $c_2$ is incremented along this edge,
corresponding to the integer part of clock $x_2$ increasing by $1$ as
time progresses.

The remaining states and edges of $\mathcal{C}_{\tuple{\loc_0,\val}}$
are illustrated in Figure~\ref{fig:reachAutomata}.  Note that we only
represent states 
that are relevant to  expressing reachability from $\ell_0$ to~$\ell_1$. 
\end{example}

An important fact about the collection of counter
machines~$\mathcal{C_{\tuple{\loc,\val}}}$, as $\fract(\val)$ varies over
$[0,1]^{\clocks}$, is that there are only finitely many such machines
up to isomorphism.  This essentially follows from
Proposition~\ref{prop:closure-tau}, which shows that
$\closure(M_{\fract(\val)})$ is determined 
by the type of $\fract(\val)$. In the next
section we develop this intuition to build a symbolic counter machine
that embodies $\mathcal{C_{\tuple{\loc,\val}}}$ for all valuations
$\val$ of the same type.

\subsection{Parametric Region Automata }
Consider a timed automaton $\A$ with $n$ clocks, a location $\loc$ of
$\A$, and an $n$-type $\tau$.  In this section we define a monotonic
counter machine~$\mathcal{C}_{\tuple{\loc,\tau}}$ that can be seen
as a parametric version of the counter machine
$\mathcal{C}_{\tuple{\ell,\val}}$ from the previous section, where
valuation $\val$ has type $\tau$.

First recall that $\mathcal{M}_\tau = (\prec_{i,j},m_{i,j})$ is the
parametric DBM over $\mathcal{T}_{\mathbb{R}}(\tau)$ such that
$\prec_{i,j}=\leq$ and $m_{i,j}=[r_i-r_j]$ for $0 \leq i,j \leq n$.

The construction of the counter machine
$\mathcal{C}_{\tuple{\loc,\tau}}$ is formally very similar to that of
$\mathcal{C}_{\tuple{\loc,\val}}$. Specifically, the set~$S$ of states
of $\mathcal{C}_{\tuple{\loc,\tau}}$ comprises all
pairs~$\tuple{\loc', \mathcal{M}'}$ such that~$\loc'\in \locs$ is a
location in~$\A$ and $ \mathcal{M}' \in \closure(\M_{\tau})$ is a
consistent parametric DBM.  The set of counters is
$C=\{c_1,\ldots,c_n\}$, where $n$ is the number of clocks in~$\A$.
The transitions of~$\mathcal{C}_{\tuple{\loc,\tau}}$ are defined in a
formally identical way to those of~$\mathcal{C}_{\tuple{\loc,\val}}$;
we simply replace operations on concrete DBMs with the corresponding
operations on parametric DBMs.

With the above definition, it follows from Proposition~\ref{prop:homo}
that the counter machine $\mathcal{C}_{\tuple{\loc,\tau}}$ and
$\mathcal{C}_{\tuple{\loc,\val}}$ are isomorphic via the map sending a
control state $\tuple{\loc,\mathcal{M}}$ of $\mathcal{C}_{\tuple{\loc,\tau}}$ to
the control state $\tuple{\loc,\val(\mathcal{M})}$ of
$\mathcal{C}_{\tuple{\loc,\val}}$.  Proposition~\ref{prop:sound} then
yields:

\begin{theorem}
Consider states~$\tuple{\loc,\val}$ and $\tuple{\loc',\val'}$ of a
timed automaton~$\A$ such that $\fract(\val)$ has type~$\tau$.  Then
$\tuple{\loc',\val'}$ is reachable from $\tuple{\loc,\val}$ in $\A$ if
and only if there exists some DBM $\M' \in \closure(\M_\tau)$ such
that the configuration~$\tuple{\tuple{\loc',\M'}, \floor{\val'}}$ is
reachable from~$\tuple{\tuple{\loc,\M_{\tau}},\floor{\val}}$ in the
counter machine~$\mathcal{C}_{\tuple{\loc,\tau}}$ and
$\mathrm{frac}(\val') \in \semantics{\mathrm{frac}(\val)(\M')}$.
\label{thm:symbolic}
\end{theorem}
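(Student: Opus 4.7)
The plan is to deduce the theorem by combining Proposition~\ref{prop:sound}, which relates reachability in $\A$ to reachability in the concrete counter machine $\mathcal{C}_{\tuple{\loc,\val}}$, with an isomorphism between $\mathcal{C}_{\tuple{\loc,\tau}}$ and $\mathcal{C}_{\tuple{\loc,\val}}$ that holds whenever $\fract(\val)$ has type $\tau$.

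To exhibit the isomorphism I would consider the map $\Phi:\tuple{\loc',\M}\mapsto\tuple{\loc',\fract(\val)(\M)}$ on control states. Proposition~\ref{prop:closure-tau} shows that $\M\mapsto\fract(\val)(\M)$ sends $\closure(\M_\tau)$ onto $\closure(M_{\fract(\val)})$, so $\Phi$ is a surjective correspondence between the state spaces of the two counter machines; moreover $\fract(\val)(\M_\tau)=M_{\fract(\val)}$ by definition, so the initial control states are matched. The next step is to verify that $\Phi$ preserves delay, wrapping, and discrete edges together with their guards and counter operations. This is exactly where Proposition~\ref{prop:homo} does the work: the three DBM operations used in the construction of both machines---time elapse, reset, and intersection with an atomic DBM whose single constraint has integer constant (which covers intersection with $x_i\leq 1$, with $x_i=1$, and with $\varphi_{\mathsf{frac}}$)---all commute with the valuation $\fract(\val)$. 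The integer guards $\varphi_{\mathsf{int}}$ and the counter operations on $c_1,\ldots,c_n$ do not involve DBMs at all and so agree verbatim across the two machines.

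To conclude, I would chain the two pieces together. By Proposition~\ref{prop:sound}, $\tuple{\loc',\val'}$ is reachable from $\tuple{\loc,\val}$ in $\A$ if and only if there is some $M'\in\closure(M_{\fract(\val)})$ such that $\tuple{\tuple{\loc',M'},\floor{\val'}}$ is reachable from $\tuple{\tuple{\loc,M_{\fract(\val)}},\floor{\val}}$ in $\mathcal{C}_{\tuple{\loc,\val}}$ and $\fract(\val')\in\semantics{M'}$. By Proposition~\ref{prop:closure-tau} each such $M'$ has the form $\fract(\val)(\M')$ for some $\M'\in\closure(\M_\tau)$, and the edge-preservation from the previous paragraph transports this reachability to reachability of $\tuple{\tuple{\loc',\M'},\floor{\val'}}$ from $\tuple{\tuple{\loc,\M_\tau},\floor{\val}}$ in $\mathcal{C}_{\tuple{\loc,\tau}}$; the fractional side condition reads $\fract(\val')\in\semantics{\fract(\val)(\M')}$, exactly as stated. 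The main obstacle is purely bookkeeping: one must handle each of the three edge types in turn, but since Proposition~\ref{prop:homo} already performs the required commutation, no new conceptual ingredient is needed.
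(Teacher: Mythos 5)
Your proposal is correct and follows essentially the same route as the paper: the paper also observes that, by Proposition~\ref{prop:homo}, the map sending $\tuple{\loc',\mathcal{M}}$ to $\tuple{\loc',\fract(\val)(\mathcal{M})}$ makes $\mathcal{C}_{\tuple{\loc,\tau}}$ and $\mathcal{C}_{\tuple{\loc,\val}}$ isomorphic, and then derives the theorem directly from Proposition~\ref{prop:sound}. Your write-up merely makes explicit the bookkeeping (surjectivity via Proposition~\ref{prop:closure-tau}, matching of initial states, and the three edge types) that the paper leaves implicit.
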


\subsection{Reachability Formula}

\begin{figure*}[t]
\centering
\begin{tikzpicture}[>=latex',shorten >=1pt,node distance=1.9cm,on grid,auto,
state/.style={
           rectangle,
           draw=black, 
           minimum height=1.5cm,
					minimum width=1.7cm,
           inner sep=1pt,
           text centered,
           },
roundnode/.style={circle, draw,minimum size=1.2mm},]

\node [state,label={[label distance=.3cm]-90:counter machine $\mathcal{C}_{\tuple{\loc_0,\tau_1}}$}] (l0z0) at(0,0) {\scalebox{.8}{\begin{tabular}[t]{l}  \\ 
$\vect{(\leq,0)&  (\leq,-r_1)&  (\leq,-r_2)\\
					 (\leq, r_1)&     (\leq,0)&   (\leq,r_1-r_2)\\
(\leq,r_2)&  (\leq,r_2-r_1)&     (\leq,0)}$\end{tabular}}};
\node [draw=none] (dum) [above=.45 of l0z0] {{\small $\tuple{\ell_0,\M_0}$}};

\node [state] (l0z1) [right=5cm of l0z0] {\scalebox{.8}{
	\begin{tabular}[t]{l}  \\ 
$\vect{(\leq,0)&  (\leq,-r_1)&  (\leq,-r_2)\\
					 (\leq,1)&     (\leq,0)&   (\leq,r_1-r_2)\\
(\leq,r_2-r_1+1)&  (\leq,r_2-r_1)&     (\leq,0)}$\end{tabular}}};
\node [draw=none] (dum) [above=.45 of l0z1] {$\tuple{\ell_0, \M_1}$};

\node [state] (l1z2) [below=2.5cm of l0z1] {\scalebox{.8}{
	\begin{tabular}[t]{l}  \\ 
$\vect{(\leq,0)&  (\leq,0)&  (\leq,-r_2)\\
					 (\leq,0)&     (\leq,0)&   (\leq,-r_2)\\
(<,r_2-r_1+1)&  (<,r_2-r_1+1)&     (\leq,0)}$\end{tabular}}};
\node [draw=none] (dum) [above=.45 of l1z2] {{\small $\tuple{\ell_1,\M_2}$}};

\node [state] (l1z3) [right=5cm of l1z2]  {\scalebox{.8}{
	\begin{tabular}[t]{l}  \\ 
$\vect{(\leq,0)&  (<,0)&  (\leq,-r_2)\\
					 (\leq,1)&     (\leq,0)&   (\leq,-r_2)\\
(\leq,1)&  (<,r_2-r_1+1)&     (\leq,0)}$\end{tabular}}};
\node [draw=none] (dum) [above=.45 of l1z3] {{\small $\tuple{\ell_1,\M_3}$}};

\node [state] (l1z4) [below=2.2cm of l1z3] {\scalebox{.8}{
	\begin{tabular}[t]{l}  \\ 
$\vect{(\leq,0)&  (<,r_2-r_1)&  (\leq,0)\\
					 (\leq,1)&     (\leq,0)&   (\leq,1)\\
(\leq,0)&  (<,r_2-r_1)&     (\leq,0)}$\end{tabular}}};
\node [draw=none] (dum) [above=.45 of l1z4] {{\small $\tuple{\ell_1,\M_4}$}};
\node [state] (l1z5) [below=2.2cm of l1z2]{\scalebox{.8}{
	\begin{tabular}[t]{l}  \\ 
$\vect{(\leq,0)&  (<,r_2-r_1)&  (\leq,0)\\
					 (\leq,1)&     (\leq,0)&   (\leq,1)\\
(<,r_2-r_1+1)&  (<,r_2-r_1)&     (\leq,0)}$\end{tabular}}};
\node [draw=none] (dum) [above=.45 of l1z5] {$\tuple{\ell_1,\M_5}$};
\path[->] (l0z0) edge node [above,midway] {\scriptsize{$\mathit{nop}$}}  (l0z1);
\path[->] (l0z1) edge node [near start,right] {\scriptsize{$\mathit{reset}(c_1)$}} node [right,near end] {\scriptsize{$c_1=0$}}  (l1z2);
\path[->] (l1z2) edge node [above,midway] {\scriptsize{$\mathit{nop}$}} (l1z3);
\path[->] (l1z3) edge node [left,midway] {\scriptsize{$\mathit{inc}(c_2)$}}  (l1z4);
\path[->] (l1z4) edge node [above,midway] {\scriptsize{$\mathit{nop}$}} (l1z5);
\path[->] (l1z5) edge node [right,midway] {\scriptsize{$\mathit{inc}(c_1)$}}   (l1z2);

\end{tikzpicture}
\caption{The (relevant part of the) counter machine~$\mathcal{C}_{\tuple{\loc,\tau_1}}$ 
constructed from the timed automaton in Figure~\ref{fig:reachAutomata},
 where $\tau_1$ is the type of the valuation~$\val$ with $\val_1=0.6$ and $\val_2=0$.  
The placement of a transition between~$\tuple{\loc_1, \M_5}$ and $\tuple{\loc_1, \M_2}$ relies on the
fact that terms $-r_2$ and $0$ are equivalent under the preorder induced by~$\tau_1$. 
}
\label{fig:symbolicAutomata}
\end{figure*}

We are now in a position to state our main result.
\begin{theorem}
	\label{theorem_main_ta_formula}
Given a timed automaton $\A$ with $n$ clocks and locations
$\loc,\loc'$, we can compute in exponential time a formula
\[ \varphi_{\loc,\loc'}(z_1,\ldots,z_n,r_1,, \ldots,r_n,z'_1,\ldots,z'_n,r'_1,,\ldots,r'_n) \]
in the existential fragment\footnote{We claim that this result can be strengthened to 
state that the reachability relation can be expressed by a \emph{quantifier-free} formula, again computable in exponential time.
To do this one can exploit structural properties of the class of monotonic counter machine 
that arise from timed automata.  We omit details.} of 
$\mathcal{L}_{\mathbb{R},\mathbb{Z}}$ such that there is a finite run in
$\A$ from state~$\tuple{ \ell,\val}$ to state~$\tuple{ \ell',\val'}$
just in case
\[ \tuple{\floor{\val},\mathrm{frac}(\val),\floor{\val'},\mathrm{frac}(\val')} \models
\varphi_{\loc,\loc'} \, . \]
\end{theorem}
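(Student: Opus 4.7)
The plan is to combine Theorem~\ref{thm:symbolic} with Proposition~\ref{prop:monotone} by case-splitting on the type $\tau$ of the fractional part of the initial valuation. Concretely, I would express $\varphi_{\loc,\loc'}$ as a disjunction, ranging over all $n$-types $\tau$ and all consistent parametric DBMs $\mathcal{M}' \in \closure(\mathcal{M}_\tau)$, of three conjoined components: (i) a real-arithmetic formula $\chi_\tau(r_1,\ldots,r_n)$ asserting that $\langle r_1,\ldots,r_n\rangle$ has type $\tau$, obtained directly from the atomic formulas comprising $\tau$ (restricted to difference terms, by Corollary~\ref{corl:well-supported}); (ii) the existential Presburger formula $\psi_{\tau,\mathcal{M}'}(z_1,\ldots,z_n,z'_1,\ldots,z'_n)$ produced by Proposition~\ref{prop:monotone} on the counter machine $\mathcal{C}_{\tuple{\loc,\tau}}$ for the source-target pair $\tuple{\loc,\mathcal{M}_\tau}, \tuple{\loc',\mathcal{M}'}$; and (iii) a real-arithmetic formula $\zeta_{\tau,\mathcal{M}'}(r_1,\ldots,r_n,r'_1,\ldots,r'_n)$ asserting that $\langle r'_1,\ldots,r'_n\rangle \in \semantics{\nu(\mathcal{M}')}$, where $\nu$ is the valuation given by $\langle r_1,\ldots,r_n\rangle$. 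Corollary~\ref{corl:well-supported} is key here: it ensures that every entry of $\mathcal{M}'$ has the form $(\prec, c + r_i - r_j)$ with $c \in \{-1,0,1\}$, so $\zeta_{\tau,\mathcal{M}'}$ is a conjunction of atomic real constraints of the form $r'_j - r'_i \prec c + r_i - r_j$ (with the convention $r_0 = r'_0 = 0$), hence expressible in $\mathcal{L}_{\mathbb{R}}$.

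Correctness is then a direct translation of Theorem~\ref{thm:symbolic}: the disjunct indexed by $(\tau,\mathcal{M}')$ holds precisely when $\fract(\val)$ has type $\tau$, and there is a path in $\mathcal{C}_{\tuple{\loc,\tau}}$ from $\tuple{\tuple{\loc,\mathcal{M}_\tau}, \floor{\val}}$ to $\tuple{\tuple{\loc',\mathcal{M}'}, \floor{\val'}}$, with $\fract(\val')$ lying in the zone denoted by $\fract(\val)(\mathcal{M}')$. Each conjunct belongs to the existential fragment of $\mathcal{L}_{\mathbb{R},\mathbb{Z}}$, and the disjunction does too.

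For the complexity bound I would argue as follows. By Corollary~\ref{corl:well-supported}, an $n$-type is determined by its restriction to the polynomially many difference terms in $\mathcal{DT}_{\mathbb{R}}(n)$, so the number of types is at most singly exponential in $n$. By the remark following Corollary~\ref{corl:well-supported}, $|\closure(\mathcal{M}_\tau)| \leq 2^{\mathrm{poly}(n)}$, and each parametric DBM operation runs in time polynomial in its dimension, so $\mathcal{C}_{\tuple{\loc,\tau}}$ can be constructed in exponential time in $|\A|$. Proposition~\ref{prop:monotone} then yields $\psi_{\tau,\mathcal{M}'}$ in time exponential in $|\A|$, and the real-arithmetic components $\chi_\tau$ and $\zeta_{\tau,\mathcal{M}'}$ are each of polynomial size once $\tau$ and $\mathcal{M}'$ are fixed. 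Summing over the singly exponentially many disjuncts gives a final formula of singly exponential size, computable in singly exponential time.

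The main obstacle, in my view, is bookkeeping rather than any deep new idea: one must thread a single quadruple of variable vectors $\langle z_1,\ldots,z_n,r_1,\ldots,r_n,z'_1,\ldots,z'_n,r'_1,\ldots,r'_n\rangle$ through three formulas of different sorts and verify that their joint semantics exactly matches the formal statement of Theorem~\ref{thm:symbolic}. In particular, one needs to check that the parametric entries of $\mathcal{M}'$, viewed as elements of $\mathcal{T}_{\mathbb{R}}(\tau)$, correctly instantiate to constraints over the same $r_i$ variables constrained by $\chi_\tau$, and that the counters $z_i, z'_i$ of $\mathcal{C}_{\tuple{\loc,\tau}}$ correctly correspond to $\floor{\val_i}$ and $\floor{\val'_i}$. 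Once this is set up correctly, both correctness and the complexity bound follow essentially mechanically.
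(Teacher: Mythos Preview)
Your proposal is correct and essentially identical to the paper's own proof: the paper likewise defines $\varphi_{\loc,\loc'}$ as a disjunction over $n$-types $\tau$ of a Hintikka formula $\alpha^\tau$ (your $\chi_\tau$) conjoined with a further disjunction over $\mathcal{M}\in\closure(\mathcal{M}_\tau)$ of the Presburger reachability formula from Proposition~\ref{prop:monotone} and the conjunction of real constraints $r'_i-r'_j\prec_{i,j} m_{i,j}$ encoding membership in the zone (your $\zeta_{\tau,\mathcal{M}'}$), with correctness drawn from Theorem~\ref{thm:symbolic} and the complexity bound argued exactly as you outline.
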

\begin{proof}
We give the definition of $\varphi_{\loc,\loc'}$ below and justify
the complexity bound in Appendix~\ref{TimeBound}.

For simplicity we write formula $\varphi_{\loc,\loc'}$ as a
disjunction over the collection $\mathrm{Tp}_n$ of all $n$-types.
However each disjunct only depends on the restriction of the type
$\tau$ to the (finite) set of atomic formulas $t\leq t'$ with $t,t'
\in \mathcal{DT}_{\mathbb{R}}(n)$; so $\varphi_{\loc,\loc'}$ can
equivalently be written as a finite disjunction.  We define
\begin{equation}
\label{eq-formula}
 \varphi_{\loc,\loc'} \defequals \bigvee_{\tau\in\mathrm{Tp}_n} 
\alpha^\tau \wedge \chi^\tau_{\loc,\loc'} \, 
\end{equation}
where the subformulas $\alpha^\tau$ and $\chi_{\loc,\loc'}^\tau$ are
defined below.

The Hintikka formula  $\alpha^\tau(r_1,\ldots,r_n)$\footnote{Recall that by
  convention~$[r_0]=[0]$, thus we treat variable~$r_0$ as synonymous with the
  constant $0$.} is defined by
\[ \alpha^\tau  \defequals
\bigwedge_{\substack{t,t'\in\mathcal{DT}_{\mathbb{R}}(n)\\ (t\leq t') \in \tau}} t\leq t'
\wedge
\bigwedge_{\substack{t,t'\in\mathcal{DT}_{\mathbb{R}}(n)\\ (t\leq t') \not\in \tau}} \neg(t\leq t') \, .\]
Given a valuation $\val\in\RP^{\clocks}$,
$\mathrm{frac}(\val) \models \alpha^\tau$ just in case the set of
difference formulas satisfied by $\mathrm{frac}(\val)$ is identical to
the set of difference formulas in $\tau$.

Formula $\chi^\tau_{\loc,\loc'}$ is defined by writing
\begin{align*}
 \chi^\tau_{\loc,\loc'}  \defequals & 
\bigvee_{\substack{\mathcal{M}\in\closure(\mathcal{M}_\tau  )\\
\mathcal{M}=(\prec_{i,j},m_{i,j})}} 
\Big(\psi_{\tuple{\loc,\mathcal{M}_\tau},\tuple{\loc',\mathcal{M}}}
(z_1,\ldots,z_n,z'_1,\ldots,z_n') \\
& \qquad \qquad \wedge \bigwedge_{0 \leq i,j \leq n} r'_i-r'_j \prec_{i,j} m_{i,j} \Big) \, .
\end{align*}
Here the subformula
$\psi_{\tuple{\loc,\mathcal{M}_\tau},\tuple{\loc',\mathcal{M}}}$,
expresses the reachability relation in the counter machine
$\mathcal{C}_{\tuple{\loc,\tau}}$ between control states
$\tuple{\loc,\mathcal{M}_\tau}$ and $\tuple{\loc',\mathcal{M}}$, as
per Proposition~\ref{prop:monotone}.  Recall from
Corollary~\ref{corl:well-supported} that each $m_{i,j}$ is a
difference term involving variables $r_0,\ldots,r_n$.  The correctness
of $\varphi_{\loc,\loc'}$ is immediate from
Proposition~\ref{prop:monotone} and Theorem~\ref{thm:symbolic}. 
\end{proof}

\begin{example}\label{example-family}
Consider the timed automaton~$\A$ in Figure~\ref{fig:reachAutomata}. Fix the
type~$\tau_1$ for the valuation~$\vect{0.6\\0}$.  
We illustrate the
relevant part of the counter machine $\mathcal{C}_{\tuple{\loc_0,\tau_1}}$ in
Figure~\ref{fig:symbolicAutomata}.
 States~$\tuple{\ell,\M}$ of the
machine comprise a location $\ell$ and parametric
DBM~$\M$. 
Moreover, $\M_0=\M_{\tau_1}$.  The placement of a transition
between~$\tuple{\loc_1, \M_5}$ and $\tuple{\loc_1, \M_2}$ relies on
the fact that terms $-r_2$ and $0$ are equivalent with respect to the
equivalence relation on terms induced by~$\tau_1$.

Let $\alpha^{\tau_1}$ be the Hintikka formula of the type~$\tau_1$. Clearly, 
$\tuple{0.6,0}\models \alpha^{\tau_1}$.
We define $ \chi^\tau_{\loc_0,\loc_1}$  as follows:
\[\begin{aligned} 
\chi^{\tau_1}_{\loc_0,\loc_1}\defequals &(z_1=0)  \wedge
\\ & \Big[ [(z'_2-z'_1=z_2-z_1) \wedge  (\psi_2 \vee \psi_3 )] \vee \\
& [(z'_2-z'_1=-1+z_2-z_1) \wedge (\psi_4 \vee \psi_5 )]\Big],
\end{aligned}\]

where $\psi_1 $, $\psi_2 $,  $\psi_3 $ and $\psi_4$ are given in the following:
\[
\begin{aligned}
\psi_2&\equiv (r'_1 =0) \wedge (r_2\leq r'_2 <r_2-r_1+1),  \\[2mm]
\psi_3&\equiv (0< r'_1) \wedge (r_2 \leq  r'_2)\\
&\quad \wedge (r_2 \leq r'_2-r'_1 <r_2-r_1+1),\\[2mm]
\psi_4&\equiv (r_2-r_1< r'_1) \wedge (r'_2=0),   \\[2mm]
\psi_5&\equiv (r_2-r_1 < r'_1) \wedge (r'_2< r_2-r_1+1) \\
&\quad \wedge (-1 \le r_2'-r'_1<r_2-r_1). \\
\end{aligned}
\]

The formulae~$\psi_i$ (with $i\in \{2,3,4,5\}$)
summarise the constraints placed on~$r'_1$ and $r'_2$ by the
parametric DBMs~$\M_i$ in the counter machine~$\mathcal{C}_{\tuple{\loc_0,\tau_1}}$.
 See Figure~\ref{fig:symbolicAutomata} for the 
given constraints in the parametric DBMs~$\M_i$.
Recall that real-valued variables~$r_i,r'_i$  range over the interval~$[0,1]$.

Let~$\tau_2$ be the type for the valuation~$\vect{0\\0.2}$.  In
comparison with~$\mathcal{C}_{\tuple{\loc_0,\tau_1}}$, we present the
counter machine $\mathcal{C}_{\tuple{\loc_0,\tau_2}}$ in
Figure~\ref{fig:symbolicAutomata2} in Appendix~\ref{symbolicautmatafigure}. 

The formula~$\varphi_{\loc_0,\loc_1}$, expressing the set of
valuations~$\val$ and $\val'$ such that $\tuple{\loc_1,\val'}$ is
reachable from~$\tuple{\loc_0,\val}$, is then the disjunction of all
formulas~$\alpha^{\tau} \wedge \chi^\tau_{\loc_0,\loc_1}$ for types~$\tau\in\mathrm{Tp}_n$:
\[\varphi_{\loc_0,\loc_1}=(\alpha^{\tau_1} \wedge \chi^{\tau_1}_{\loc_0,\loc_1}) \vee (\alpha^{\tau_2} \wedge \chi^{\tau_2}_{\loc_0,\loc_1}) \vee\cdots .\]

\end{example}

\section{Parametric Timed Reachability Logic} 
\label{sec:MC}
Let $\A=\tuple{\locs,\clocks,\edges}$ be a timed automaton augmented
with a labelling function~$\lbl:\locs\rightarrow 2^{\prop}$.  Let
$\varphi$ be a sentence of {\ptctl}.  Recall that the model checking
problem of~$\A$ against~$\varphi$ asks, given a
state~$\tuple{\loc,\val}$ of~$\A$, whether
$\tuple{\loc,\val} \models \varphi$.

In this section we prove the following result.
\begin{theorem}\label{main-theo-NEXPTIME}
  The model-checking problem for {\ptctl} is decidable in EXPSPACE and
  is NEXPTIME-hard.
\end{theorem}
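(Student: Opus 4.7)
For the \textbf{EXPSPACE upper bound} my plan is to translate the given PTRL sentence $\varphi$, to be model-checked against a configuration $\tuple{\loc,\val}$, into a sentence $\Psi$ of the language $\mathcal{L}^*_{\mathbb{R},\mathbb{Z}}$ whose truth is equivalent to $\tuple{\loc,\val}\models\varphi$, and then invoke Proposition~\ref{prop:diff-bound}. The translation proceeds by induction on $\varphi$: at each subformula I maintain $2n$ free variables $(z,r)$ for the integer and fractional parts of the valuation at the ``current'' configuration, plus one real variable per parameter of $\varphi$. An atomic proposition $p$ becomes a disjunction over locations carrying $p$; Booleans are homomorphic. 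The key case is $\exists \F_{\sim\alpha}\psi$: I introduce fresh $(z',r')$ for the intermediate configuration, disjoin over the target location $\loc'$, assert the reachability formula $\varphi_{\loc,\loc'}(z,r,z',r')$ of Theorem~\ref{theorem_main_ta_formula}, recursively impose the translation of $\psi$ at $(\loc',z',r')$, and enforce the timing constraint $(z'_n-z_n)+(r'_n-r_n)\sim\xi(\alpha)$, exploiting the fact that the reference clock $x_n$ is never reset so its value tracks total elapsed time. The non-zeno infinite-run requirement built into the semantics of $\exists\F$ is captured by a further reachability assertion, again via Theorem~\ref{theorem_main_ta_formula}, that from $(\loc',z',r')$ one can reach a location lying on a time-elapsing cycle of the region graph of $\A$. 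Second-type constructs translate directly: $\exists\param\,\psi$ becomes an existential quantifier over a non-negative real, and $\param-\param'\sim c$ becomes a real-arithmetic atom.

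Having built $\Psi$, the integer-sort atoms inherited from the reachability formula and from Proposition~\ref{prop:monotone} are already difference (or modular-difference) constraints on pairs of $z$-variables, since the entries of the parametric DBMs are guaranteed by Corollary~\ref{corl:well-supported} to be difference terms; the few compound atoms that arise by chaining can be decomposed into pairwise differences through a polynomial number of existentially quantified auxiliary integer variables, placing $\Psi$ into $\mathcal{L}^*_{\mathbb{R},\mathbb{Z}}$. The size of $\Psi$ is singly exponential in $|\A|+|\varphi|$, dominated by the reachability subformulas of Theorem~\ref{theorem_main_ta_formula}, while the total number of quantifiers is polynomial in $|\A|+|\varphi|$. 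Proposition~\ref{prop:diff-bound} therefore decides truth of $\Psi$ in space $2^{\mathrm{poly}(|\A|+|\varphi|)}$, establishing the EXPSPACE upper bound.

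For \textbf{NEXPTIME-hardness} I plan a reduction from the acceptance problem for a $2^n$-time-bounded nondeterministic Turing machine (equivalently, from succinct $3$-SAT). The central leverage is that a single real-valued parameter $\theta\in[0,1]$ has unbounded precision and can thereby encode an exponentially-long certificate, while a two-level parameter quantification $\exists\theta\,\forall\theta'$ provides a ``guess-and-verify'' alternation of exactly the right shape for NEXPTIME. I would build a polynomial-size timed automaton $\A$ whose runs can be aligned with an enumeration of cells of the $2^n\times 2^n$ Turing-machine tableau, together with a PTRL sentence of the form $\exists\theta\,\forall\theta'\,\Phi$ in which $\theta$ guesses an accepting run, $\theta'$ universally ranges over cell addresses, and $\Phi$ uses nested modalities $\exists\F_{=\theta}$ and $\exists\F_{=\theta'}$ to force $\A$ to witness the content of the cell at address $\theta'$ inside the tableau encoded by $\theta$. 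The main obstacle here, requiring the most careful design, is the timing gadget that lets the formula compare the contents of two consecutive cells (at addresses $\theta'$ and $\theta'{+}1$) of the \emph{same} guessed tableau, so that a local-consistency check of the Turing-machine transition relation can be stated as a linear-arithmetic constraint on the parameters; synchronising the cell selected by $\forall\theta'$ across the two nested existential witnesses is the technical crux of the reduction.
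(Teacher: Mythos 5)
Your EXPSPACE argument follows the paper's general outline (inductive translation of \ptctl{} into $\mathcal{L}^*_{\mathbb{R},\mathbb{Z}}$, then Proposition~\ref{prop:diff-bound}), but the step where you place the translated sentence into $\mathcal{L}^*_{\mathbb{R},\mathbb{Z}}$ has a genuine gap. The integer part of the reachability formula of Theorem~\ref{theorem_main_ta_formula} comes from Proposition~\ref{prop:monotone}, i.e.\ from the Parikh image of an NFA over the $n$-letter alphabet $\{\mathit{inc}_1,\ldots,\mathit{inc}_n\}$: this is a general existential Presburger formula whose atoms couple several counters via linear equations with coefficients. Corollary~\ref{corl:well-supported} says nothing about this part --- it constrains the \emph{real} entries of the parametric DBMs, not the integer subformulas. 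Such atoms cannot be massaged into the restricted forms $z-z'\le c$, $z\le c$, $z-z'\equiv c \pmod d$ by adding polynomially many existentially quantified auxiliary integer variables: finite unions of difference-bound (and modular-difference) sets are closed under projection, whereas general semilinear sets (already $\{(z_1,z_2,z_3): z_1+z_2=z_3\}$) are not of this form. Moreover, the reachability subformulas of Theorem~\ref{theorem_main_ta_formula} carry exponentially many existential quantifiers, so even granting membership in $\mathcal{L}^*_{\mathbb{R},\mathbb{Z}}$, Proposition~\ref{prop:diff-bound} (space exponential in the \emph{number of quantifiers}) would give a doubly exponential bound; your claim that the quantifier count is polynomial ignores the quantifiers hidden inside those subformulas. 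The missing ingredient is the paper's ``cut-down'' reachability result (Lemma~\ref{lem:cut-down}): abstract the integer parts of all clocks except the reference clock $x_n$ into regions $R,R'$, so that only the counter $c_n$ is tracked; by the Chrobak--Martinez normal form for unary NFAs (Proposition~\ref{prop:reference}) its reachability relation is a finite union of arithmetic progressions, yielding a \emph{quantifier-free} formula whose integer atoms are genuine difference/modular constraints on $(z_n,z'_n)$. Only then is the total quantifier depth linear in the \ptctl{} sentence and the EXPSPACE bound follows. (A smaller point: your timing atom $(z'_n-z_n)+(r'_n-r_n)\sim\xi(\alpha)$ mixes sorts and needs a carry case analysis; the paper sidesteps this by normalising $\val_n=0$ at each induction step so that the constraint becomes $r'_n=s_i \wedge z'_n=w_i$.)

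For NEXPTIME-hardness your proposal is a sketch whose central gadget you explicitly leave open, and the mechanism it rests on is doubtful: a punctual modality $\exists\F_{=\theta}$ only compares total elapsed time with $\theta$, and a timed automaton has no way to extract individual bits of the binary expansion of a real parameter, so encoding an exponential tableau ``inside'' $\theta$ and reading the cell at address $\theta'$ is not a deferrable technicality but the unproved heart of the reduction. The paper's reduction works differently and avoids this entirely: it model checks the \emph{fixed} sentence $\exists\param\,\forall\G_{=\param}\,p$ with a single parameter, interprets (the integer part of) $\param$ as a number $N$ of phases whose residues modulo the first $2^n$ primes encode a Boolean assignment, builds a polynomial-size linear bounded automaton that guesses a clause index and three variable indices, computes the corresponding primes by simulating the circuit, and counts phases modulo those primes, and finally embeds the LBA's configuration graph into a timed automaton in the style of the Alur--Dill PSPACE-hardness construction; universality over all runs at time exactly $\param$ then enforces that every clause is satisfied by the assignment encoded by $N$. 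You would either need to work out your consecutive-cell synchronisation gadget in full (and explain how a timed automaton accesses the data encoded in $\theta$), or switch to an encoding of the certificate in the \emph{value} of the parameter, as the paper does.
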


For membership in EXPSPACE, given a timed automaton~$\A$, a
configuration $\tuple{\ell,\val}$ of $\A$, and a sentence $\psi$ of
{\ptctl}, we construct in exponential time a sentence
$\widetilde{\psi}$ of $\mathcal{L}_{\mathbb{R},\mathbb{Z}}^*$ that is
true if and only if $\tuple{\ell,\val}\models \psi$.  We thereby
obtain an exponential space algorithm for the model checking problem.
We then prove NEXPTIME-hardness by a reduction from SUCCINCT 3-SAT.
 
\subsection{Reduction of Model Checking to Satisfiability}
The model checking procedure for {\ptctl} relies on a ``cut-down''
version of Theorem~\ref{theorem_main_ta_formula}, concerning the logical definability
of the reachability relation.  In this version, given as
Lemma~\ref{lem:cut-down} below, we do not represent the full
reachability relation, but instead abstract the integer parts of all
clocks except the reference clock $x_n$. 
This  abstraction is sufficient for model-checking {\ptctl}, and moreover
 allows us to
obtain a formula that lies in the sub-logic
$\mathcal{L}_{\mathbb{R},\mathbb{Z}}^*$, which has better complexity
bounds than the full logic $\mathcal{L}_{\mathbb{R},\mathbb{Z}}$.

Given $N\in\mathbb{N}$, define the set $\mathcal{R}_N$ of
\emph{regions} to be $\mathcal{R}_N= \{0,\ldots,N\} \cup \{\infty\}$.
A counter valuation $\upsilon \in \mathbb{N}^n$ is abstracted to 
$Reg(\upsilon) \in \mathcal{R}_N^n$, where
\[ Reg(\upsilon)_i = \left\{ \begin{array}{ll} \upsilon_i & \mbox{ if $\upsilon_i \leq N$}\\
\infty & \mbox{ otherwise} \end{array}\right . \] 

The following lemma is proved in Appendix~\ref{append:monotone}.
\begin{lemma}
\label{lem:cut-down}
  Let $\A$ be a timed automaton with $n$ clocks and maximum clock
  constant $N$.  Given two locations
$\loc,\loc'$ of $\A$ and $R,R' \in \mathcal{R}_N^n$, we can compute
in exponential time a quantifier-free
$\mathcal{L}_{\mathbb{R},\mathbb{Z}}^*$-formula 
\[ \varphi_{\loc,R,\loc',R'}(z,r_1,\ldots,r_n,z',r'_1,\ldots,r_n')
\]
such that  there is a finite run in
$\A$ from state~$\tuple{ \ell,\val}$ to state~$\tuple{ \ell',\val'}$,
where $Reg(\floor{\val})=R$ and $Reg(\floor{\val'})=R'$ , just in case
\[ \tuple{\floor{\val_n},\mathrm{frac}(\val),\floor{\val'_n},\mathrm{frac}(\val')} \models
\varphi_{\loc,R,\loc',R'} \, . \]
\end{lemma}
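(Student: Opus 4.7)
The plan is to refine the construction underlying Theorem~\ref{theorem_main_ta_formula} by retaining the integer part of only the reference clock $x_n$ and abstracting the integer parts of all other clocks into their regions in $\mathcal{R}_N$.  Since $x_n$ is never reset, the resulting machinery will have exactly one unbounded, purely incrementing integer counter, and this is what will force the reachability relation into the restricted atomic shapes of $\mathcal{L}^*_{\mathbb{R},\mathbb{Z}}$.

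First, for each $n$-type $\tau$ I would build a region-abstracted variant of the counter machine $\mathcal{C}_{\tuple{\loc,\tau}}$: its control states carry a location of $\A$, a parametric DBM in $\closure(\mathcal{M}_\tau)$, and a region vector in $\mathcal{R}_N^{n-1}$, and its only remaining counter is $c_n$.  Increments on $c_i$ for $i<n$ update the region component by $k \mapsto k+1$ for $k<N$, $N \mapsto \infty$, and $\infty \mapsto \infty$; resets send the region to $0$; guards on clocks other than $x_n$ become guards on these region components, while guards on $x_n$ become guards on $c_n$.  Since $N$ is the maximum clock constant, every guard of $\A$ can be faithfully evaluated on the abstracted state.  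The total number of abstracted control states is singly exponential in $|\A|$.

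Next I would exploit the fact that $c_n$ is the sole unbounded counter and is purely incrementing.  For any two abstracted control states $s,s'$ the set
\[
\{\, (z,z') \in \mathbb{N}^2 : \tuple{s,z} \longrightarrow^* \tuple{s',z'} \,\}
\]
is translation-invariant in $z$ and so is determined by $z'-z$.  A standard cycle analysis on the finite abstracted graph shows that the set of achievable values of $z'-z$ is a finite union of singletons $\{z'-z = k\}$ together with an eventually periodic tail $\{z'-z \geq k_0 \wedge z'-z \equiv c \pmod d\}$, with $k_0$ and $d$ at most exponential in $|\A|$; each such condition is exactly an atomic formula of $\mathcal{L}^*_{\mathbb{R},\mathbb{Z}}$, and no existential quantifiers are needed.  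I would then assemble $\varphi_{\loc,R,\loc',R'}$ as a disjunction, over types $\tau$ compatible with the fractional parts and over parametric DBMs $\mathcal{M}' \in \closure(\mathcal{M}_\tau)$ compatible with the final region vector $R'$, of: the Hintikka formula $\alpha^\tau(r_1,\ldots,r_n)$, the reachability formula on $(z,z')$ just described (restricted to initial and final abstracted control states consistent with $R$ and $R'$), and the conjunction of difference constraints $r'_i - r'_j \prec_{i,j} m_{i,j}$ contributed by $\mathcal{M}'$, which by Corollary~\ref{corl:well-supported} involve only difference terms.

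The main obstacle is step two: verifying that the monotone-one-counter reachability set really reduces to a constant-threshold plus modular-period shape, with threshold $k_0$ and period $d$ bounded singly exponentially in $|\A|$, so that the existential quantifiers produced by Proposition~\ref{prop:monotone} in the general case can be eliminated in this single-counter setting.  This rests on bounding the lengths of simple cycles in the abstracted control graph, together with a Parikh-style argument on sums of cycle weights.  A secondary subtlety is ensuring that all DBM operations and integer guards interact coherently with the collapsed $\infty$ region, and that the overall disjunction over types and DBMs can be carried out within the exponential-time budget, which follows from the singly-exponential bound on $|\closure(\mathcal{M}_\tau)|$ established after Corollary~\ref{corl:well-supported}.
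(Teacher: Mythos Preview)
Your approach is essentially the paper's: it too follows the template of Theorem~\ref{theorem_main_ta_formula} but replaces the general Presburger encoding of Proposition~\ref{prop:monotone} by a region-abstracted version (Proposition~\ref{prop:reference}) in which the integer parts of all clocks except $x_n$ are folded into the control state, leaving a unary NFA whose Parikh image is a finite union of arithmetic progressions (the paper invokes Chrobak--Martinez where you invoke a ``standard cycle analysis''), yielding quantifier-free $\mathcal{L}^*_{\mathbb{R},\mathbb{Z}}$-constraints on $z'-z$.

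One slip to fix: you keep guards on $x_n$ as guards on the surviving counter $c_n$ and then assert that the reachability set $\{(z,z'):\tuple{s,z}\rightarrow^*\tuple{s',z'}\}$ is translation-invariant in $z$.  That fails if any guard mentions $x_n$, since crossing the threshold $N$ changes which transitions are enabled.  The paper avoids this by also tracking the region of $c_n$ in the NFA state (its states are $\tuple{s'',R'',\lambda}$ with $R''\in\mathcal{R}_N^n$, not $\mathcal{R}_N^{n-1}$); once you do the same, no guards remain on the counter, translation-invariance holds, and your eventually-periodic analysis goes through.
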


Let $\psi$ be a formula of {\ptctl} of the first type, involving the set
of parameters $\theta_1,\ldots,\theta_k$, and let $\A$ be a timed
automaton with $n$ clocks and maximum clock constant $N$.  For each
location $\loc$ of $\A$ and $R \in \mathcal{R}^n_{N}$ such that $R_n=0$, we
obtain a $\mathcal{L}_{\mathbb{R},\mathbb{Z}}^*$-formula 
\[\widetilde{\psi}_{\loc,R}(r_1,\ldots,r_n,w_1,\ldots,w_k,s_1,\ldots,s_k)\]
in real variables $\boldsymbol{r}=(r_1,\ldots,r_n)$ and
$\boldsymbol{s}=(s_1,\ldots,s_k)$ and integer variables
$\boldsymbol{w}=(w_1,\ldots,w_k)$  
such that 
\[ \tuple{\mathrm{frac}(\val),\floor{\xi},\mathrm{frac}(\xi)} 
\models \widetilde{\psi}_{\loc,R}
  \quad\mbox{iff}\quad \tuple{\loc,\val} \models_{\xi} \psi \] for all
parameter valuations~$\xi \in \RP^k$ and all clock
valuations~$\val \in \RP^n$ such that $Reg(\floor{\val})=R$ and $\val_n=0$.

To keep things simple, we assume that every configuration of $\A$ can
generate an infinite non-zeno run.  It is not difficult to drop this
assumption since the collection of configurations from which there
exists such a run is a union of clock regions and hence is definable
in $\mathcal{L}_{\mathbb{R},\mathbb{Z}}^*$.  We also assume, without
loss of generality, that the reference clock~$x_n$ is not mentioned in
any guard of~$\A$.

The construction of $\widetilde{\psi}_{\loc,R}$ is by induction on the
structure of $\psi$.  The induction cases for the Boolean connectives
are straightforward and we concentrate on the induction step for the
connective $\exists\F_{\sim \param}$.  In fact we only consider the
case that $\sim$ is the equality relation $=$, the cases for $<$ and
$>$ being very similar.

Suppose that $\psi \equiv \exists\F_{= \param_i}\psi'$ for some
{\ptctl}-formula $\psi'$ and $i\in\{1,\ldots,k\}$.  Then we define
\begin{align*}
\widetilde{\psi}_{\loc,R}& (\boldsymbol{r},\boldsymbol{w},\boldsymbol{s})
\defequals \bigvee_{\loc',R'} 
\exists \boldsymbol{r}' \exists z'
\, \varphi_{\loc,R,\loc',R'}(0,\boldsymbol{r},z',\boldsymbol{r}') \\
& \wedge (r_n'=s_i \wedge z'=w_i) \wedge \widetilde{\psi'}_{\loc',R'}(r_1'\ldots,r_{n-1}',0,\boldsymbol{w},\boldsymbol{s})
\end{align*}
where $\varphi_{\loc,R,\loc',R'}$ is the reachability formula defined
in Lemma~\ref{lem:cut-down}.  Note that this definition relies on the assumption
that the clock $x_n$ is never reset by the timed automaton and hence
can be used to keep track of global time.
  
This completes the translation of {\ptctl}-formulas of the first type
to formulas of $\mathcal{L}_{\mathbb{R},\mathbb{Z}}^*$.  Extending
this inductive translation to {\ptctl}-formulas of the second type is
straightforward, bearing in mind that we represent each parameter
$\theta_i$ by a variable $w_i$ for its integer part and a variable
$s_i$ for its fractional part.  Thus, e.g., the {\ptctl}-formula
$\exists \theta_i \psi$ is translated as $\exists w_i \exists s_i (0
\leq s_i<1 \wedge \widetilde{\psi})$.

Given a sentence $\psi$ of {\ptctl}, location $\loc$ of $\A$, and $R
\in \mathcal{R}_N$, our translation yields a formula
$\widetilde{\psi}_{\loc,R}(r_1,\ldots,r_n)$ such that for any
valuation $\val$ with $Reg(\floor{\val})=R$ we have $\tuple{\loc,\val}\models
\psi$ if and only if $\mathrm{frac}(\val) \models
\widetilde{\psi}_{\loc,R}$.  By Lemma~\ref{lem:cut-down}, formula
$\widetilde{\psi}_{\loc,R}$ has size singly exponential in the size of
$\psi$ and $\A$ and quantifier-depth linear in the size of $\psi$.

The model checking problem then reduces to determining the truth of
$\widetilde{\psi}_{\loc,R}$ on $\mathrm{frac}(\val)$, where
$Reg(\floor{\val})=R$.  Since satisfiability for sentences of
$\mathcal{L}_{\mathbb{R},\mathbb{Z}}^*$ can be decided in polynomial
space in the formula size and exponential space in the number of
quantifiers (by Proposition~\ref{prop:diff-bound}), the model checking
problem of {\ptctl} lies in EXPSPACE.

\subsection{NEXPTIME-Hardness}
\label{sec:nexptimehard}
In this section we show that model checking timed automata against the
fixed {\ptctl} sentence $\exists \param \, \forall\hspace{-1.3mm}\G_{=
  \param} p$ is NEXPTIME-hard.  We remark that, due to the punctual
constraint ${=\!\theta}$, the above formula expresses a
synchronization property---\emph{there exists a duration $\theta$ such
  that all runs are in a $p$-state after time exactly $\theta$}.

Recall that a \emph{Boolean circuit} is a finite directed acyclic
graph, whose nodes are called \emph{gates}. An \emph{input gate} is a
node with indegree $0$. All other gates have label either $\vee$,
$\wedge$, or $\neg$. An \emph{output gate} is a node with outdegree
$0$.

We show {NEXPTIME}-hardness by reduction from the SUCCINCT 3-SAT
problem.  The input of SUCCINCT 3-SAT is a Boolean circuit~$C$,
representing a 3-CNF formula $\varphi_C$, and the output is whether or
not $\varphi_C$ is satisfiable.  Specifically, $C$ has $2$ output
gates, and the input gates are partitioned into two nonempty sets of
respective cardinalities $n$ and $m$.  The formula~$\varphi_C$ has
$2^n$ variables and $2^m$ clauses (in particular, the number of
variables and clauses in $\varphi_C$ can be exponential in the size of~$C$).  
The first~$n$ inputs of~$C$ represent the binary encoding of
the index $i$ of a variable, and the remaining $m$ inputs of $C$
represent the binary encoding of the index $j$ of a clause in
$\varphi_C$.  The output of $C$ indicates whether the $i$-th variable
occurs positively, negatively, or not at all in the $j$-th clause of
$\varphi_C$.  The SUCCINCT 3-SAT problem is
{NEXPTIME}-complete~\cite{PapadimitriouY86}.

Given an instance of SUCCINCT 3-SAT, that is, a Boolean circuit $C$ as
described above, we construct a timed automaton $\A$ augmented with a
labelling function $\lbl$ such that the 3-CNF formula $\varphi_C$
encoded by circuit $C$ is satisfiable if and only if
$(\loc,\zeroval)\models \exists \param \forall \G_{= \param} p$ for
some designated location $\loc$.  

There are two ideas behind the reduction.  First we construct a linear
bounded automaton $\mathcal{B}$ from the circuit $C$ such that,
roughly speaking, the 3-CNF formula $\varphi_C$ is satisfiable if and
only if there exists an integer $N$ such that, starting from an
initial configuration, all length-$N$ paths in the configuration graph
of $\mathcal{B}$ end in a configuration with label $p$.  The second
part of the reduction is to simulate encode the configuration graph of
$\mathcal{B}$ as the configuration graph of a timed automaton $\A$.

We construct $\mathcal{B}$ such that its number of control states is
polynomial in the size of $C$, and we fix an initial tape
configuration of $\mathcal{B}$ of length likewise bounded by a
polynomial in the size of $C$.  We designate certain transitions of
$\mathcal{B}$ as $\checkmark$-transitions.  In every computation of
$\mathcal{B}$, the sequence of steps between the $i$-th and $(i+1)$-st
$\checkmark$-transitions, for $i\in\mathbb{N}$, is referred to as the
\emph{$i$-th phase} of the computation.  We design $\mathcal{B}$ so
that the number of steps in the $i$-th phase is independent of the
nondeterministic choices along the run.

The definition of $\mathcal{B}$ is predicated on a numerical encoding
of propositional valuations.  Suppose that $X_1,\ldots,X_{2^n}$ are
the variables occurring in $\varphi_C$, and write $p_1,\ldots,p_{2^n}$
for the first $2^n$ prime numbers in increasing order.  Given a
positive integer $N$, we obtain a Boolean valuation of
$X_1,\ldots,X_{2^n}$ in which $X_j$ is false if, and only if,
$N \bmod p_j = 0$.  With this encoding in hand, we proceed to define
$\mathcal{B}$:

\begin{enumerate}
\item In the first phase, $\mathcal{B}$ guesses three $n$-bit numbers
$1 \leq i_1,i_2,i_3 \leq 2^n$ and a single $m$-bit number
$1 \leq j \leq 2^m$ and writes them on its tape.
\item In the second phase, $\mathcal{B}$ computes the three prime numbers
  $p_{i_1},p_{i_2},p_{i_3}$ and writes them on its tape.
\item In the third phase, by simulating the circuit $C$, $\mathcal{B}$
  determines whether the propositional variables
  $X_{i_1},X_{i_2},X_{i_3}$ appear in the $j$-th clause of $\varphi_C$, henceforth denoted
$\psi_{j}$. 
  If one of them does not appear at all, then $\mathcal{B}$ moves into an accepting self-loop. 
  Otherwise, $\mathcal{B}$ remembers in its state whether $X_{i_1}, X_{i_2}, X_{i_3}$ appear positively or negatively in $\psi_{j}$, 
  and then 
$\mathcal{B}$ proceeds to the next phase.
\item From phase four onwards, $\mathcal{B}$ maintains on its tape
  three counters, respectively counting modulo
  $p_{i_1},p_{i_2},p_{i_3}$.  In every successive phase, each of these
  counters is incremented by one.  At the end of each phase, 
  $\mathcal{B}$ checks whether the values of the counters encode a satisfying valuation of clause
  $\psi_j$. If this is the case, then $\mathcal{B}$ moves into an accepting state. Otherwise $\mathcal{B}$ proceeds to the next phase. 
\end{enumerate}
By construction, $N\in\mathbb{N}$ encodes a satisfying valuation of
$\varphi_C$ if and only if all computation paths of $\mathcal{B}$
reach an accepting state at the end of the $(N+3)$-rd phase. 

It remains to explain how from $\mathcal{B}$ one can define a timed
automaton $\A$ whose configuration graph embeds the configuration
graph of $\mathcal{B}$.  The construction is adapted from the
PSPACE-hardness proof for reachability in timed
automata~\cite{AlurD94}.  We refer to Appendix~\ref{NEXPTIME-hard} for
details of this construction.  In the end, the initial configuration
$(\ell,\boldsymbol{0})$ of $\A$ satisfies
$\exists \param `, \forall \G_{= \param} p$ if and only if $\varphi_C$
is satisfiable.

\section{Conclusion}
We  have given a new proof of the result of Comon and Jurski that
the reachability relation of a timed automaton is definable in linear
arithmetic.  In addition to making the result more accessible, our
main motivations in revisiting this result concerned potential
applications and generalisations.  With regard to applications, we
have already put the new proof to work in deriving complexity bounds
for model checking the reachability fragment of parametric {\tctl}.  In
future work we would like to see whether ideas from this paper can be
applied to give a more fine-grained analysis of extensions of timed
automata, such as timed games and priced timed automata.

We claim that a finer analysis of the complexity of our decision
procedure for model checking {\ptctl} yields membership of the problem
in the complexity class $\mathrm{STA}(*,2^{O(n)},n)$, i.e., the class
of languages accepted by alternating Turing machines running in time
$2^{O(n)}$ and making at most $n$ alternations on an input of length
$n$.  This improved upper bound follows from a refinement of the
statement of Proposition~\ref{prop:diff-bound}, on the complexity of
the decision problem for $\mathcal{L}^*_{\mathbb{R},\mathbb{Z}}$, to
state that the truth of a prenex-form sentences of size $n$ and with
$k$ quantifier alternations can decided by a polynomial time
alternating Turing machine, making at most $k$ alternations.

We claim also that our NEXPTIME-hardness result can be strengthened to
match the new upper bound.  The idea here would be to reduce a version
of SUCCINCT 3-SAT with quantifier alternation to model
checking $\ptctl$ formulas of the form $Q_1\theta_1 \ldots
Q_k \theta_k \forall\G_{=\param_1} \ldots \forall\G_{=\param_k} p$ for
$Q_1,\ldots,Q_k$ a sequence of quantifiers with $k$ alternations.

Details of the improved upper and lower complexity bound will appear in
a subsequent version of this paper.

{\smallskip\bf\noindent Acknowledgements.}
This work was partially supported by the EPSRC
through grants  EP/M012298/1 and EP/M003795/1 and 
by the German Research Foundation (DFG), project QU 316/1-2.

\bibliography{IEEEabrv,tctl_lit}

\newpage
\onecolumn
\appendix

\subsection{Proof of Proposition~\ref{prop:diff-bound}}\label{append-prop-diff-bound}

We first recall that the language
$\mathcal{L}^*_{\mathbb{R},\mathbb{Z}}$ has terms   of both
real-number sort and integer sort, where the atomic formulas:
\begin{itemize}
	\item if  integer sort,  have form
\begin{gather}
 z-z' \leq c \, \mid \, z \leq c \, \mid \,
   {z-z' \equiv c} \pmod d 
\label{eq:diff}
\end{gather}
for integer variables $z,z'$ and integers $c,d$.
\item if real-number sort, have form~$t\leq t'$ where $t,t'$ 
are derived by the grammar
\[ t::= c \, \mid \, r \mid \, t+t \, \mid t-t \, , \] where
$c\in \mathbb{Q}$ is a constant and $r\in\{r_0,r_1,\ldots\}$ is a
real-valued variable.
\end{itemize}

One can prove Proposition~\ref{prop:diff-bound} by combining the
quantifier-elimination procedures of Ferante and
Rackoff~\cite{FerranteR75,kozenBook} for
$\mathcal{L}_{\mathbb{R}}$ and To~\cite[Section 4]{To09} for the
fragment of Presburger arithmetic in which atomic formulas have the
form shown in (\ref{eq:diff}).

To eliminate quantifiers in formulas of real arithmetic, Ferante and
Rackoff~\cite{FerranteR75} define an
equivalence relation~$\rrel{k}{m}$ on $k$-tuples of real numbers.  The
relation is such that $\rrel{k}{m}$-equivalent $k$-tuples agree on all
quantifier-free formulas in which all constants have the largest (absolute) constant at most
$m$.  We refer the reader to~\cite{kozenBook}
for the definition of $\rrel{k}{m}$; here we just recall the key
results.

Let $A^k_m$ be the set of all affine functions~$f: \R^k \to \R$ with integer coefficients, 
where all constants and coefficients have the largest (absolute) constant at most
$m$.

\begin{lemma}[Lemma 22.3 and 22.4 from~\cite{kozenBook}]\label{lem:22.3}
Given two $k$-tuples $\boldsymbol{a}=(a_1,\cdots,a_k)$ and
$\boldsymbol{b}=(b_1,\cdots,b_k)$ of real numbers such that 
$\boldsymbol{a}\rrel{k}{2m^2} \boldsymbol{b}$ for some~$m\in\mathbb{Z}_{>0}$,
then for all $c \in \R$
there exists $d \in \R$ such that $(\boldsymbol{a},c) \rrel{k+1}{m} (\boldsymbol{b},d)$.
Moreover, $d$ can be chosen to have the form~$f(\boldsymbol{b})/e$ 
where $f \in A^k_{2m^2}$ and $\abs{e}\leq 2m^2$.
\end{lemma}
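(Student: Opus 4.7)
The plan is to follow the standard Ferrante--Rackoff witness construction, viewing $\rrel{k}{m}$ as agreement on every quantifier-free formula built from atomic inequalities $f(x_1,\ldots,x_k) \sim 0$ with $f \in A^k_{m}$ and ${\sim} \in \{<,=,>\}$. Given $c \in \R$, I will construct an explicit $d \in \R$ drawn from a small finite ``test set'' associated with $\boldsymbol{b}$ and then verify that $(\boldsymbol{a},c) \rrel{k+1}{m} (\boldsymbol{b},d)$.

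First I introduce the two test sets $T(\boldsymbol{a}) = \{ f(\boldsymbol{a})/e : f \in A^k_{2m^2},\ 1 \leq e \leq 2m^2 \}$ and $T(\boldsymbol{b})$ defined analogously. Because $\boldsymbol{a} \rrel{k}{2m^2} \boldsymbol{b}$, the map sending $f(\boldsymbol{a})/e$ to $f(\boldsymbol{b})/e$ is well defined on $T(\boldsymbol{a})$, is order preserving onto $T(\boldsymbol{b})$, and identifies coincident points on each side. The key observation is that every atomic formula relevant to $\rrel{k+1}{m}$, once $c$ (respectively $d$) is substituted for $x_{k+1}$, reduces to a comparison of $c$ (respectively $d$) with an element of $T(\boldsymbol{a})$ (respectively $T(\boldsymbol{b})$): an affine expression $\alpha x_{k+1} + f(x_1,\ldots,x_k)$ with $|\alpha| \leq m$ and $f \in A^k_m$ is either independent of $x_{k+1}$ (handled directly by $\boldsymbol{a} \rrel{k}{m} \boldsymbol{b}$) or, after isolation of $x_{k+1}$, of the shape $x_{k+1} \sim -f/\alpha$ with $-f/\alpha \in T(\boldsymbol{a})$.

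Next I do a case analysis on the position of $c$ with respect to $T(\boldsymbol{a})$. If $c$ equals a test point $f(\boldsymbol{a})/e$, I set $d = f(\boldsymbol{b})/e$ and the moreover-clause is immediate. If $c$ lies strictly between two consecutive test points $f_1(\boldsymbol{a})/e_1 < c < f_2(\boldsymbol{a})/e_2$, I set $d$ to be the midpoint $\tfrac{1}{2}\bigl(f_1(\boldsymbol{b})/e_1 + f_2(\boldsymbol{b})/e_2\bigr)$, which after clearing the denominator $2 e_1 e_2$ can be written in the form $f(\boldsymbol{b})/e$ with $f \in A^k_{2m^2}$ and $|e| \leq 2m^2$; this is exactly where the factor $2m^2$ in the hypothesis is needed. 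If $c$ exceeds every element of $T(\boldsymbol{a})$, I set $d$ to be one more than the maximum element of $T(\boldsymbol{b})$, and symmetrically in the case $c$ lies below $T(\boldsymbol{a})$. In each case $d$ has the required shape.

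Finally, I verify $(\boldsymbol{a},c) \rrel{k+1}{m} (\boldsymbol{b},d)$. Any atomic formula $\alpha x_{k+1} + f(x_1,\ldots,x_k) \sim 0$ admissible for $\rrel{k+1}{m}$ translates into a comparison of $c$ with a point of $T(\boldsymbol{a})$ and of $d$ with the corresponding point of $T(\boldsymbol{b})$; since $c$ and $d$ occupy corresponding positions relative to the two order-isomorphic test sets, the two comparisons have the same truth value, and the claim follows. The main obstacle is the bookkeeping in the midpoint case: one has to check that the midpoint of two ratios with numerator and denominator of size $2m^2$, when brought to a common denominator, still stays inside $A^k_{2m^2}/\{1,\ldots,2m^2\}$. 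This is exactly what forces the $2m^2$ (rather than $m$) in the hypothesis and yields the stated bound on $d$.
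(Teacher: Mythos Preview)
The paper does not prove this lemma at all: it is quoted verbatim from Kozen's textbook (Lemmas~22.3 and~22.4) and simply invoked in the quantifier-elimination argument.  So there is no ``paper's own proof'' to compare against; your write-up is essentially a reconstruction of the standard Ferrante--Rackoff argument that Kozen gives.

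That said, your bookkeeping is off in a way that breaks both the order-isomorphism step and the ``moreover'' clause.  You define the test set as $T(\boldsymbol{a})=\{f(\boldsymbol{a})/e : f\in A^k_{2m^2},\ 1\le e\le 2m^2\}$, but the atomic formulas relevant to $\rrel{k+1}{m}$ only produce test points of the form $-f/\alpha$ with $f\in A^k_{m}$ and $1\le|\alpha|\le m$.  With your oversized test set, two things fail.  First, to show that $f(\boldsymbol{a})/e\mapsto f(\boldsymbol{b})/e$ is order preserving you must compare $f_1/e_1$ with $f_2/e_2$, i.e.\ decide the sign of $e_2 f_1-e_1 f_2$; with $f_i\in A^k_{2m^2}$ and $e_i\le 2m^2$ this is an $A^k_{8m^4}$-formula, which the hypothesis $\boldsymbol{a}\rrel{k}{2m^2}\boldsymbol{b}$ does not cover.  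Second, the midpoint $\tfrac12(f_1/e_1+f_2/e_2)$ has denominator $2e_1e_2\le 8m^4$, not $\le 2m^2$, so it does not have the shape promised by the lemma; the ``plus one'' witness in the unbounded case has the same problem.

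The fix is exactly what the $2m^2$ in the hypothesis is calibrated for: take the test set at level $m$, i.e.\ $\{f(\boldsymbol{a})/e : f\in A^k_{m},\ 1\le e\le m\}$.  Then comparing two test points needs only an $A^k_{2m^2}$-formula (covered by the hypothesis), the midpoint has numerator in $A^k_{2m^2}$ and denominator $\le 2m^2$, and in the ``above all test points'' case $d=(f_{\max}+e_{\max})/e_{\max}$ has numerator in $A^k_{2m}\subseteq A^k_{2m^2}$.  With this correction your argument is the standard one.
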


To eliminate quantifiers in formulas of the above fragment of Presburger arithmetic, 
analogue to the relation~${\rrel{k}{m}}$, To~\cite[Definition 6]{To09} has defined an equivalence relation~$\zrel{k}{p,m}$
on $k$-tuples of integers, where $p,m \in \mathbb{Z}_{>0}$. 
The relation is such that~${\zrel{k}{p,m}}$-equivalent $k$-tuples agree on all
quantifier-free formulas, where  all constants have the largest (absolute) constant at most
$m$ and the period of the formula is~$p$. 
The period of the formula is  the least common multiple of the 
periods~$e$ of each atomic term ${z-z' \equiv c} \pmod e$.
 We refer the reader to~\cite{To09}
 for the definition of $\zrel{k}{p,m}$; here we just recall the key
results.

\begin{lemma}[Lemma 7 and 8 from~\cite{To09}]
Given two $(k+1)$-tuples $\boldsymbol{a}=(a_0,\cdots,a_k)$ and $\boldsymbol{b}=(b_0,\cdots,b_k)$ of integers 
such that $a_0=b_0=0$ and $\boldsymbol{a} \zrel{k}{p,3m} \boldsymbol{b}$ for some~$p,m>0$,
then for all
$c \in \N$ there exists $b \in \N$ such that $(\boldsymbol{a},c) \zrel{k+1}{p,m} (\boldsymbol{b},d)$.
Moreover, $d$ can be chosen such that $0\leq d\leq max(b_0,\cdots,b_k)+pm+p$.
\end{lemma}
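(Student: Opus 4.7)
The plan is to mimic the Ferrante--Rackoff paradigm of Lemma~\ref{lem:22.3} in the integer setting, via a case split on whether the new element $c$ is close to or far from the existing entries. Given $\boldsymbol{a} \zrel{k}{p,3m} \boldsymbol{b}$ and $c \in \mathbb{N}$, the goal is to exhibit $d \in \mathbb{N}$ such that the extended tuples agree on every atomic formula of~(\ref{eq:diff}) with constants of magnitude at most $m$ and moduli dividing $p$. A preliminary observation is that, since $a_0 = b_0 = 0$, the modular atoms applied to the pairs $(z_i, z_0)$ yield $a_i \equiv b_i \pmod{p}$ for every $i$; consequently every modular constraint on $d - b_j$ collapses to the single residue requirement $d \equiv c \pmod{p}$.

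Call $c$ \emph{close} if $|c - a_i| \leq m$ for some index $i$, and \emph{far} otherwise. In the close case I take $d \defequals b_i + (c - a_i)$ for a witnessing $i$. To verify an atom $d - b_j \leq e$ with $|e|\leq m$, I rewrite $d - b_j = (b_i - b_j) + (c - a_i)$ and reduce the task to checking $b_i - b_j \leq e - (c - a_i) \Leftrightarrow a_i - a_j \leq e - (c - a_i)$; the right-hand constant has magnitude at most $2m$, which the $3m$-slack in the hypothesis exactly accommodates. Unary atoms $d \leq e$ are the instance $j = 0$, and the modular requirements are supplied by the preliminary observation. The value of $d$ is trivially within the claimed upper bound.

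In the far case, let $A_{\text{low}} = \{i : c > a_i + m\}$ and $A_{\text{high}} = \{i : c < a_i - m\}$; they partition $\{0,\ldots,k\}$, and $A_{\text{low}}$ is non-empty because $a_0 = 0$ forces $c > m$. A first candidate is $d \defequals b_{i^*} + (c - a_{i^*})$ for $i^*$ maximising $a_{i^*}$ within $A_{\text{low}}$: if $A_{\text{high}}$ is empty, or if the $a$-side gap $\min_{A_{\text{high}}} a_j - \max_{A_{\text{low}}} a_i$ is at most $3m$ (and so pinned down exactly on the $b$-side by the hypothesis), then this candidate lies in the required interval, has the correct residue by the preliminary observation, and obeys the bound. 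Otherwise the $b$-side gap is guaranteed to exceed $3m$; I then take $d$ to be an element of the arithmetic progression of step $p$ starting just above $\max_{A_{\text{low}}} b_i + m$ and with $d \equiv c \pmod p$, hit within a window of length $p$. The bound $\max_i b_i + pm + p$ leaves enough headroom both for this residue-matching shift and for the conservative case where the $b$-side gap must be searched in steps of size $p$.

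The main obstacle is the bookkeeping in the far case when $A_{\text{high}}$ is non-empty: one must check that the chosen candidate really lies in the $b$-side interval $(\max_{A_{\text{low}}} b_i + m,\; \min_{A_{\text{high}}} b_j - m)$ and carries the required residue modulo $p$, using both the $3m$-slack for the difference atoms and the congruence $a_i \equiv b_i \pmod p$ for the residue. The inflation of constants from $m$ to $3m$ between hypothesis and conclusion, together with the additive overhead $p$ for modular alignment, is precisely what produces the tower of constants driving the double-exponential space bound in Proposition~\ref{prop:diff-bound}.
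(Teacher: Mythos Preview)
The paper does not supply its own proof of this lemma: it is quoted verbatim as ``Lemma~7 and~8 from~\cite{To09}'', with the explicit remark that the reader is referred to~\cite{To09} for the definition of $\zrel{k}{p,m}$ and that ``here we just recall the key results.'' There is therefore nothing in the paper to compare your argument against.

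That said, your sketch follows the expected Ferrante--Rackoff template and is on the right track, but there is a concrete gap in the bound claim. In your far case with $A_{\text{high}}=\emptyset$ you propose $d = b_{i^*} + (c - a_{i^*})$ and assert that it ``obeys the bound''. It need not: when $c$ exceeds every $a_i$ by an arbitrarily large amount, $c - a_{i^*}$ is unbounded and $d$ overshoots $\max_i b_i + pm + p$. The correct move in this sub-case is the same as in your large-gap sub-case: pick the smallest $d > \max_i b_i + m$ with $d \equiv c \pmod p$, which lands within $\max_i b_i + m + p$ and hence inside the stated bound. The equivalence $(\boldsymbol{a},c)\zrel{k+1}{p,m}(\boldsymbol{b},d)$ then holds because on both sides every atom $z_{k+1} - z_j \leq e$ with $|e|\leq m$ is false and every atom $z_j - z_{k+1} \leq e$ is true. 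A similar issue lurks in the small-gap sub-case: you should check that $c - a_{i^*}$ is itself bounded (it is, by roughly $3m$, since $c < \min_{A_{\text{high}}} a_j - m$ and the gap is at most $3m$), rather than simply asserting the bound.
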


Fix $m \in \mathbb{Z}_{>0}$. For all $n\in \N$, define $g(0,m)\defequals m$ and $g(n+1,m)\defequals 2 g(n,m)^2$, 
moreover, define $h(0,m)\defequals m$ and $h(n+1,m)\defequals 3 h(n,m)$.

\begin{lemma}
\label{lem:big-lemma}
Let $\varphi(r_1,\cdots,r_k,z_0, \cdots,z_{k'})$ be a formula in~$\mathcal{L}^*_{\mathbb{R},\mathbb{Z}}$, 
with $k$ free real-valued variables~$r_i$ and $k'$ free integer variables~$z_i$,
and with $n$ quantifiers over real-valued variables and $n'$ quantifiers over integer-valued variables,
where $m$ is the largest (absolute) constant and~$p$ is the period of the formula.
Suppose $\boldsymbol{a},\boldsymbol{b} \in \R^k$ are $k$-tuples of real numbers 
such that $\boldsymbol{a}\rrel{k}{g(n,m)} \boldsymbol{b}$.  
Suppose $\boldsymbol{a'},\boldsymbol{b'} \in \N^{k'}$ are $k'$-tuples of integers
such that $\boldsymbol{a'}\zrel{k'}{p,h(n',m)} \boldsymbol{b'}$. 
Then,  we have 	
\[ \varphi(\boldsymbol{a},\boldsymbol{a'}) \text{ holds} \Leftrightarrow \varphi(\boldsymbol{b},\boldsymbol{b'}) \text { holds.}\] 
\end{lemma}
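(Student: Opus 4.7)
The plan is to prove the lemma by induction on the structure of $\varphi$, with the inductive measure being the pair $(n,n')$ counting real-variable and integer-variable quantifiers. Note that stripping a real quantifier leaves $n'$, $p$, and the integer constant bound unchanged, and symmetrically; this clean separation is possible because every atomic formula of $\mathcal{L}^*_{\mathbb{R},\mathbb{Z}}$ uses terms of a single sort. For the base case $n=n'=0$ we have $g(0,m)=h(0,m)=m$. An atomic real formula $t\leq t'$ with constants bounded by $m$ is preserved by $\rrel{k}{m}$ by the definition recalled from \cite{kozenBook}, while an atomic integer formula of the forms in (\ref{eq:diff}) with constants bounded by $m$ and period dividing $p$ is preserved by $\zrel{k'}{p,m}$ by \cite[Def.~6]{To09}. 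The Boolean-connective cases are immediate from the inductive hypothesis applied to subformulas, since $\wedge$ and $\neg$ change none of $n,n',m,p$.

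For the existential case over a real variable, suppose $\varphi \equiv \exists r\,\psi$, so that $\psi$ has $n-1$ real quantifiers and $n'$ integer quantifiers. If $\varphi(\boldsymbol{a},\boldsymbol{a'})$ holds, choose $c\in\mathbb{R}$ with $\psi(\boldsymbol{a},c,\boldsymbol{a'})$ true. By definition $g(n,m)=2g(n-1,m)^2$, so the hypothesis $\boldsymbol{a}\rrel{k}{g(n,m)}\boldsymbol{b}$ is exactly the situation of Lemma~\ref{lem:22.3} with parameter $g(n-1,m)$, producing $d\in\mathbb{R}$ with $(\boldsymbol{a},c)\rrel{k+1}{g(n-1,m)}(\boldsymbol{b},d)$. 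Since $\boldsymbol{a'}\zrel{k'}{p,h(n',m)}\boldsymbol{b'}$ is unchanged, the inductive hypothesis applied to $\psi$ yields $\psi(\boldsymbol{b},d,\boldsymbol{b'})$, and hence $\varphi(\boldsymbol{b},\boldsymbol{b'})$. The reverse direction is identical, and universal quantifiers are handled via the equivalence $\forall x\,\psi \equiv \neg\exists x\,\neg\psi$ together with the (straightforward) fact that the equivalences preserve negation.

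The case $\varphi\equiv \exists z\,\psi$ is symmetric: the recurrence $h(n',m)=3h(n'-1,m)$ rewrites the hypothesis as $\boldsymbol{a'}\zrel{k'}{p,3h(n'-1,m)}\boldsymbol{b'}$, at which point To's extension lemma produces an integer witness $d$ with $(\boldsymbol{a'},c)\zrel{k'+1}{p,h(n'-1,m)}(\boldsymbol{b'},d)$, and the inductive hypothesis applied to $\psi$ concludes. The only minor subtlety is that To's lemma requires the designated component $a_0=b_0=0$; the statement of the lemma already provides a free integer variable $z_0$ playing this role, and we always interpret it as $0$. The main technical obstacle is purely a bookkeeping one: verifying that the specific recurrences $g(n+1,m)=2g(n,m)^2$ and $h(n'+1,m)=3h(n',m)$ (rather than some weaker growth rate) are precisely what the respective extension lemmas demand, and that the unchanged side of each induction step inherits its equivalence parameters unharmed. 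Both facts are immediate once the recurrences are lined up with Lemma~\ref{lem:22.3} and To's lemma, so no new analytic idea beyond the Ferrante--Rackoff template is required.
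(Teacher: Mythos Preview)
Your proposal is correct and follows essentially the same route as the paper's proof: structural induction on $\varphi$, with atomic formulas handled directly from the definitions of $\rrel{k}{m}$ and $\zrel{k'}{p,m}$, Boolean connectives immediate, and each quantifier case dispatched by the appropriate extension lemma (Lemma~\ref{lem:22.3} for real variables, To's lemma for integer variables) together with the recurrences defining $g$ and $h$. Your additional remarks on sort separation and the role of $z_0$ are accurate elaborations, not departures.
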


\begin{proof}
The proof is by an induction on the structure of the formula. 
For atomic formulas, each sort, the result is immediate from the definition of
equivalence relations  $\rrel{k}{m}$ and $\zrel{k}{p,m}$. 
For the Boolean connectives, the result is straightforward using the induction hypothesis for each subformula.

\begin{itemize}
	\item For formulas $\exists r \, \varphi(\boldsymbol{a},\boldsymbol{a'},r)$, where $r$ is a real-valued variable: 
	suppose $\exists r \,\varphi(\boldsymbol{a},\boldsymbol{a'},r)$ holds, and let 
	$c\in R$ be such that $\varphi(\boldsymbol{a},\boldsymbol{a'},c)$ holds. Since $\boldsymbol{a}\rrel{k}{g(n,m)} \boldsymbol{b}$, by Lemma~\ref{lem:22.3}, 
	there exists some $d$ such that $(\boldsymbol{a},c)\rrel{k+1}{g(n-1,m)} (\boldsymbol{b},d)$. 
	Applying the induction hypothesis, $\exists r \, \varphi(\boldsymbol{b},\boldsymbol{b'},r)$ holds, too.
	
	\item  For formulas $\exists z \, \varphi(\boldsymbol{a},\boldsymbol{a'},z)$, where $z$  is a integer-valued variable: 
	suppose $\exists z \, \varphi(\boldsymbol{a},\boldsymbol{a'},z)$ holds, and let 
	$c'\in R$ be such that $\varphi(\boldsymbol{a},\boldsymbol{a'},c')$ holds. Since $\boldsymbol{a'}\zrel{k}{p,h(n',m)} \boldsymbol{b'}$, by Lemma~\ref{lem:22.3}, 
	there exists some $d'$ such that $(\boldsymbol{a'},c')\zrel{k+1}{p,h(n'-1,m)} (\boldsymbol{b'},d')$. 
	Applying the induction hypothesis, $\exists z \,\varphi(\boldsymbol{b},b',z)$ holds, too. 
\end{itemize}
 
The step of induction  for  formulas $\forall r \,\varphi(\boldsymbol{a},a',r)$ 
and $\forall z \,\varphi(\boldsymbol{a},a',z)$ are  similar.
\end{proof}

Given a prenex-form sentence $\varphi$ of
$\mathcal{L}^*_{\mathbb{R},\mathbb{Z}}$, using
Lemma~\ref{lem:big-lemma} we derive an equivalent formula in which all
quantifiers range over finite domains.  Specifically, if $\varphi$ has
$n$ quantifiers over real variables and $n'$ quantifiers over integer
variables, maximum constant $m$, and period $p$, then the real-valued
quantifiers can be restricted to range over rationals whose numerator
and denominator is at most $g(n,m) =2^{2^{n}-1}m^{2^n}= 2^{2^{O(n+\log
    \log m)}}$ and the integer quantifiers can be restricted to range
over numbers of the largest (absolute) constant at most $p(n'+1)h(n',m)+p(n'+1)=
p(n'+1)3^{n'}(m+1)=2^{O(n'+\log m+\log p )}$.  Thus the truth of
$\varphi$ can be established by an alternating Turing machine using
space exponential in $n+n'$ and polynomial in the size of the
quantifier-free part of $\varphi$.  This concludes the proof of
Proposition~\ref{prop:diff-bound}.

\subsection{Proof of Lemma~\ref{lem:support}}\label{append-lem-support}
In this section we prove Lemma~\ref{lem:support} from Subsection~\ref{sec:closure}.

Recall that DBMs have entries in $\mathbb{V}=(\{<,\leq\} \times \mathbb{R}) \cup \{(<,\infty)\}$.  In this section we
denote the order $\leq_{\mathbb{V}}$ simply by $\leq$ (and the
corresponding strict order by $<$).  Recall that a DBM
is \emph{atomic} if all but at most one entry is the trivial constraint
$(<,\infty)$.  Recall also that DBM~$M$ is \emph{consistent} if 
$(\leq,0) \leq M_{i,i}$ for all $0 \leq i \leq n$.  Write
$\mathbb{Z}_{\infty}$ for $\mathbb{Z}\cup\{\infty\}$.

\subsubsection{Tightness}  
In order to prove Lemma~\ref{lem:support}, we first introduce the
concept of \emph{tightness} for DBMs and prove that, for a clock
valuation $\val \in [0,1]^n$, every DBM in $\closure(M_\val)$
is tight.

Let $M$ be a DBM of dimension $(n+1) \times (n+1)$.  We say 
that $M$ is \emph{tight} if $M_{i,j} = M_{i,n}+M_{n,j}$ for every
pair of indices $i,j$ with $m_{i,j}\not\in\mathbb{Z}_{\infty}$.

\begin{proposition}
If $M$ is tight, then $\overrightarrow{M}$ is tight.
\label{prop:tight1}
\end{proposition}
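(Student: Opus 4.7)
The plan is a direct case analysis on the index pair $(i,j)$, exploiting the fact that $\overrightarrow{M}$ differs from $M$ only in the entries with $i \neq 0$ and $j = 0$, where the value is overwritten with $(<,\infty)$. Everywhere else $\overrightarrow{M}_{i,j} = M_{i,j}$; in particular, since the reference index $n$ is nonzero, we have $\overrightarrow{M}_{i,n} = M_{i,n}$ for every $i$, and $\overrightarrow{M}_{n,j} = M_{n,j}$ whenever $j \neq 0$.

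First I would dispose of the overwritten entries: for $i \neq 0$ and $j = 0$, the new value $\overrightarrow{M}_{i,0} = (<,\infty)$ has numerical part $\infty \in \mathbb{Z}_\infty$, so the tightness condition is vacuous at these positions. For the remaining positions with $j \neq 0$, I would invoke the observations above: the identity $\overrightarrow{M}_{i,j} = \overrightarrow{M}_{i,n} + \overrightarrow{M}_{n,j}$ reduces to $M_{i,j} = M_{i,n} + M_{n,j}$, which (when $M_{i,j} \notin \mathbb{Z}_\infty$) holds by tightness of $M$.

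The only subtle case is the diagonal entry $(0,0)$: then $\overrightarrow{M}_{n,0} = (<,\infty)$, so if $M_{0,0}$ had a non-integer finite value the decomposition could fail. However, for any consistent canonical DBM---and this is the setting in which tightness is used in the rest of Section~\ref{sec:closure}---we have $M_{0,0} = (\leq,0) \in \mathbb{Z}_\infty$, so tightness is again vacuous there. Thus the proposition reduces to bookkeeping and should be essentially a one-paragraph verification; the only genuine concern is to confirm that the $(0,0)$ corner is indeed harmless, which follows from canonicity/consistency.
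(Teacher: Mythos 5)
Your proof is correct and takes essentially the same route as the paper's: the overwritten first-column entries become $(<,\infty)$ and are vacuous for tightness, while every entry with $j\neq 0$, together with its decomposition through the reference index $n$, is inherited verbatim from $M$, so tightness transfers directly. You are in fact slightly more careful than the paper, whose proof simply asserts $j\neq 0$ and thereby silently passes over the $(0,0)$ entry that you explicitly discharge by noting $M_{0,0}=(\leq,0)\in\mathbb{Z}\cup\{\infty\}$ for the canonical, consistent DBMs arising in $\closure(M_\val)$.
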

\begin{proof}
	Write $M'=\overrightarrow{M}$ and assume that $m'_{i,j}\not\in\mathbb{Z}_\infty$ for some $0\leq i,j \leq n$. 
We show that 
$M'_{i,j}=M'_{i,n}+M'_{n,j}$.
Indeed, since $m'_{i,j}\not\in\mathbb{Z}_\infty$ we have
$j\neq 0$ and thus
\begin{eqnarray*}
M'_{i,j} & = & M_{i,j}\\
& = & M_{i,n} + M_{n,j} \quad\mbox{ ($M$ is tight)}\\
& = & M'_{i,n} + M'_{n,j}\quad\mbox{ (since $n,j\neq 0$).}
\end{eqnarray*}
\end{proof}

\begin{proposition}
Suppose that $M$ is tight and $M'$ is atomic. 
Then $M''=M\cap M'$ is tight.
\label{prop:tight3}
\end{proposition}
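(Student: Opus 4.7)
The plan is to show that for every pair of indices $(i,j)$ with $m''_{i,j} \notin \mathbb{Z}_{\infty}$, the through-$n$ decomposition $M''_{i,j} = M''_{i,n} + M''_{n,j}$ holds. Since the intersection formula in Section~\ref{sec:DBM} explicitly produces a canonical DBM, the inequality $M''_{i,j} \leq M''_{i,n} + M''_{n,j}$ is free, and only the reverse direction needs work. The argument then splits on which of the two expressions realises the minimum in $M''_{i,j} = \min(M_{i,j},\, M_{i,p} + M'_{p,q} + M_{q,j})$.

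The easy case is $M''_{i,j} = M_{i,j}$: here $m_{i,j} = m''_{i,j} \notin \mathbb{Z}_{\infty}$, so tightness of $M$ already supplies $M_{i,j} = M_{i,n} + M_{n,j}$. Because the intersection formula takes a minimum at every entry, we have $M''_{i,n} \leq M_{i,n}$ and $M''_{n,j} \leq M_{n,j}$ essentially by inspection, and combining these three facts gives $M''_{i,n} + M''_{n,j} \leq M_{i,n} + M_{n,j} = M''_{i,j}$, as required.

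The interesting case is $M''_{i,j} = M_{i,p} + M'_{p,q} + M_{q,j}$. The key observation is that in the setting where Proposition~\ref{prop:tight3} will be applied within the closure of $M_\nu$, the atomic $M'$ has its sole non-trivial entry $m'_{p,q}$ in $\mathbb{Z}_{\infty}$; together with the fact that $m''_{i,j}$ is finite (so in particular neither $m_{i,p}$ nor $m_{q,j}$ is $\infty$), the non-integrality of $m''_{i,j}$ then forces at least one of $m_{i,p},\, m_{q,j}$ to be real but non-integer. By a symmetric argument it suffices to handle the subcase $m_{q,j} \notin \mathbb{Z}_{\infty}$: tightness of $M$ then yields $M_{q,j} = M_{q,n} + M_{n,j}$, and combining this with the bound $M''_{i,n} \leq M_{i,p} + M'_{p,q} + M_{q,n}$ (from the intersection formula) gives $M''_{i,n} + M''_{n,j} \leq (M_{i,p} + M'_{p,q} + M_{q,n}) + M_{n,j} = M_{i,p} + M'_{p,q} + M_{q,j} = M''_{i,j}$, as desired.

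The main obstacle I expect is the bookkeeping step in the second case, namely verifying that at least one of $m_{i,p}$ or $m_{q,j}$ must be non-integer real — this is exactly the structural observation flagged in the proof sketch of Lemma~\ref{lem:support}, and it depends on the context (integer-valued atomic intersections) in which the proposition is applied; once this is settled, the routing through $n$ is mechanical.
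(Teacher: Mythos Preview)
Your proposal is correct and follows essentially the same approach as the paper: the identical case split on which term realises the minimum in the intersection formula, with canonicity of $M''$ giving one inequality and tightness of $M$ together with the pointwise bound $M'' \leq M$ giving the other (the paper routes the second case through $M''_{n,p}+M''_{p,q}+M''_{q,j}$ and canonicity, whereas you apply the intersection formula directly to the entry $M''_{i,n}$, but this is cosmetic). The ``obstacle'' you flag is not a gap in your argument but an implicit hypothesis that the paper's own proof also uses without comment --- the inference ``$m''_{i,j}\notin\mathbb{Z}_\infty$ forces $m_{i,p}\notin\mathbb{Z}_\infty$ or $m_{q,j}\notin\mathbb{Z}_\infty$'' requires $m'_{p,q}\in\mathbb{Z}$, which holds in every application of the proposition within $\closure(M_\val)$.
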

\begin{proof}
Suppose that $m''_{i,j}\not\in\mathbb{Z}_\infty$. 
We show that $M''_{i,j}=M''_{i,n}+M''_{n,j}$.  
There are two main cases.
First suppose that $M''_{i,j}=M_{i,j}$.  Then 
\begin{eqnarray*}
M''_{i,j} & \leq  & M''_{i,n}+M''_{n,j} \quad\mbox{ ($M''$ canonical)}\\
        & \leq & M_{i,n}+M_{n,j} \quad\mbox{ ($M'' \leq M$ pointwise)}\\
        & = & M_{i,j} \quad\mbox{ ($m_{i,j}\not\in\mathbb{Z}_\infty$, $M$ tight)} \, .
\end{eqnarray*}
Since we assume that $M''_{i,j}=M_{i,j}$, all the inequalities above are tight and we conclude that
$M''_{i,j}=M''_{i,n}+M''_{n,j}$.  

The second case is that $M''_{i,j} < M_{i,j}$.  Then by
definition of $M''$ we have $M''_{i,j} = M_{i,p}+M'_{p,q}+M_{q,j}$.
Since $m''_{i,j}\not\in\mathbb{Z}_\infty$, we must have either
$m_{i,p}\not\in\mathbb{Z}_\infty$ or
$m_{q,j}\not\in\mathbb{Z}_\infty$.
We will handle the first of
these two subcases; the second follows by symmetric reasoning.

If $m_{i,p}\not\in\mathbb{Z}_\infty$ then
\begin{eqnarray*}
M''_{i,j} &=& M_{i,p}+M'_{p,q}+M_{q,j} \quad\mbox{ (definition of $M''$)}\\
          &=& M_{i,n}+M_{n,p}+M'_{p,q}+M_{q,j} \mbox{ ($m_{i,p}\not\in\mathbb{Z}_\infty$, $M$ tight)}\\
          &\geq & M''_{i,n}+M''_{n,p}+M''_{p,q}+M''_{q,j}\quad\mbox{ ($M''\leq M,M'$ pointwise)} \\
          &\geq & M''_{i,n}+M''_{n,j} \quad\mbox{ ($M''$ canonical)}
\end{eqnarray*}
But $M''_{i,j} \leq M''_{i,n}+M''_{n,j}$ by canonicity of $M''$.  Hence 
$M''_{i,j} = M''_{i,n}+M''_{n,j}$.
\end{proof}

\begin{proposition}
Suppose that $M$ is tight. 
\begin{enumerate}
\item If $\ell\neq n$ then $M[x_\ell \leftarrow 0]$ is tight.
\item $(M \cap (x_n=1))[x_n \leftarrow 0]$ is tight.
\end{enumerate}
\label{prop:tight2}
\end{proposition}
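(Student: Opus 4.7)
For part (1), the argument will be a direct unfolding of the reset operation. With $\ell \neq n$, we have $n_\ell = n$, so the reset leaves the indices on the $n$-th row and column unchanged: $M[x_\ell \leftarrow 0]_{i,n} = M_{i_\ell, n}$ and $M[x_\ell \leftarrow 0]_{n,j} = M_{n, j_\ell}$. Thus, if the entry at position $(i,j)$ of the reset DBM has non-integer numerical part, namely $m_{i_\ell, j_\ell} \not\in \mathbb{Z}\cup\{\infty\}$, then the tightness of $M$ delivers $M_{i_\ell, j_\ell} = M_{i_\ell, n} + M_{n, j_\ell}$, which is exactly the desired identity $M[x_\ell \leftarrow 0]_{i,j} = M[x_\ell \leftarrow 0]_{i,n} + M[x_\ell \leftarrow 0]_{n,j}$.

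For part (2), I will first appeal to Proposition~\ref{prop:tight3}, applied twice, treating the equality constraint $x_n = 1$ as a conjunction of two atomic bounds $x_n \leq 1$ and $-x_n \leq -1$. This shows that $M^\star \defequals M \cap (x_n = 1)$ is tight. It then remains to show that resetting $x_n$ in $M^\star$ preserves tightness.

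The key observation is that after intersecting with $x_n=1$ the index $0$ and the index $n$ become ``identified'' through a $\pm 1$ shift. Concretely, since $M^\star$ is canonical with $M^\star_{n,0} = (\leq, 1)$ and $M^\star_{0,n} = (\leq, -1)$, canonicity yields $M^\star_{k,0} = M^\star_{k,n} + (\leq, 1)$ and $M^\star_{0,k} = M^\star_{n,k} + (\leq,-1)$ for every index $k$ (one inequality from canonicity, the reverse obtained by adding $(\leq,1)$ to both sides of the complementary canonicity inequality). With this in hand, write $M' = M^\star[x_n \leftarrow 0]$, so $M'_{i,j} = M^\star_{i_n,j_n}$, where $k_n=0$ if $k=n$ and $k_n=k$ otherwise. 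The verification of tightness splits into cases depending on whether $i$ or $j$ equals $n$. In the main case ($i\neq n$ and $j\neq n$), an entry $M'_{i,j}$ whose numerical part is non-integer inherits tightness from $M^\star$, and the key observation gives
\[
M'_{i,n} + M'_{n,j} \;=\; M^\star_{i,0} + M^\star_{0,j} \;=\; \bigl(M^\star_{i,n} + (\leq, 1)\bigr) + \bigl(M^\star_{n,j} + (\leq, -1)\bigr) \;=\; M^\star_{i,n} + M^\star_{n,j} \;=\; M^\star_{i,j} \;=\; M'_{i,j}.
\]
The remaining cases, where $i=n$ or $j=n$, reduce to statements involving the diagonal entry $M^\star_{0,0}$, which equals $(\leq,0)$ by consistency and canonicity of $M^\star$, and so the required identity is immediate.

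The only delicate step will be justifying the key observation linking rows/columns $0$ and $n$ through $\pm 1$; this is really just canonicity of $M^\star$ combined with the hard bounds $M^\star_{n,0} = (\leq, 1)$ and $M^\star_{0,n} = (\leq, -1)$ forced by the intersection with $x_n = 1$. Once this is in place, the rest of the argument is a routine case analysis on which of $i,j$ equal $n$.
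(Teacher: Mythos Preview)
Your proof is correct and follows essentially the same approach as the paper's: part (1) is identical, and for part (2) both arguments first invoke Proposition~\ref{prop:tight3} to get tightness of $M^\star = M \cap (x_n=1)$, then exploit the hard bounds $M^\star_{n,0}=(\leq,1)$ and $M^\star_{0,n}=(\leq,-1)$ to relate rows/columns $0$ and $n$. Your ``key observation'' via canonicity is in fact a cleaner justification than the paper's corresponding step (which is labeled ``$M$ tight'' but really needs exactly the canonicity argument you give), and your explicit mention of consistency for the diagonal entry $M^\star_{0,0}=(\leq,0)$ makes precise what the paper dismisses as ``trivial''.
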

\begin{proof}
\begin{enumerate}
\item Write $M'=M[x_\ell \leftarrow 0]$, where $\ell\neq n$, and assume
  that $m'_{i,j}\not\in\mathbb{Z}_\infty$.  We show that
  $M'_{i,j}=M'_{i,n}+M'_{n,j}$.

Indeed we have
\begin{eqnarray*}
M'_{i,j} & = & M_{i_\ell,j_\ell} \quad\mbox{ (definition of $M'$)}\\
& = & M_{i_\ell,n}+M_{n,j_\ell} \quad\mbox{ ($M$ is tight, $m_{i_\ell,j_\ell}=m'_{i,j}\not\in\mathbb{Z}_\infty$)}\\
         & = & M_{i_\ell,n_\ell}+M_{n_\ell,j_\ell} \quad\mbox{ ($n_\ell=n$)}\\
         & = & M'_{i,n}+M'_{n,j} \quad\mbox{ (definition of $M'$)}.
\end{eqnarray*}
\item Write $M'=M \cap (x_n=1)$.
We know from
  Proposition~\ref{prop:tight3} that $M'$ is tight.  Moreover we have
  $M'_{n,0}=(\leq, 1)$ and $M'_{0,n}=(\leq,-1)$.  Now write
  $M''=M'[x_n\leftarrow 0]$ and assume that
  $m''_{i,j}\not\in\mathbb{Z}_\infty$.  We show that
  $M''_{i,j}=M''_{i,n}+M''_{n,j}$.  The equality is trivial if $i=n$
  or $j=n$, so we may suppose that $i,j\neq n$.

Then we have
\begin{eqnarray*}
M''_{i,j} & = & M'_{i,j} \quad\mbox{ (definition of $M''$ and $i,j\neq n$)}\\
& = & M'_{i,n}+M'_{n,j} \quad\mbox{ ($M'$ is tight, $m'_{i,j}\not\in\mathbb{Z}_\infty$)}\\
&=& M'_{i,n}+M'_{n,0} + M'_{0,n} + M'_{n,j} \quad\mbox{ ($M'_{n,0}=(\leq,1)$ and $M'_{0,n}=(\leq,-1)$)}\\
         & = & M'_{i,0}+M'_{0,j} \quad\mbox{ ($M$ tight)}\\
         & = & M''_{i,n}+M''_{n,j} \quad\mbox{ (definition of $M''$)}.
\end{eqnarray*}
\end{enumerate}
\end{proof}

\begin{proposition}
Let $\val \in [0,1]^n$ be a valuation.  Then 
every DBM $M \in \closure(M_{\val})$ is tight.
\label{prop:tight}
\end{proposition}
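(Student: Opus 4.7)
The plan is to prove Proposition~\ref{prop:tight} by structural induction on $\closure(M_\val)$, with the inductive step essentially handed to us by Propositions~\ref{prop:tight1}, \ref{prop:tight2}, and \ref{prop:tight3}, which already establish that tightness is preserved by time elapse, reset, and intersection with an atomic DBM respectively. So the only genuine work is the base case together with a careful matching of the four generating operations of $\closure(M_\val)$ against the three propositions.

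For the base case, I would observe that $M_\val$ has entries $(M_\val)_{i,j} = (\leq,\val_j - \val_i)$, so the tightness condition $M_{i,j} = M_{i,n}+M_{n,j}$ reduces to the arithmetic identity $\val_j - \val_i = (\val_n - \val_i) + (\val_j - \val_n)$, which holds for every pair $(i,j)$. Thus $M_\val$ is tight, in fact unconditionally on whether the middle entry lies in $\mathbb{Z}_\infty$.

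For the inductive step, I would unpack each of the four clauses in the definition of $\closure(M_\val)$. Intersection with an atomic integer DBM is immediate from Proposition~\ref{prop:tight3} (which, incidentally, does not need the integer hypothesis). The relativized time elapse $\overrightarrow{N}\cap \bigcap_{i=1}^n (x_i\le 1)$ decomposes as ordinary time elapse (Proposition~\ref{prop:tight1}) followed by $n$ successive intersections with the atomic constraints $(x_i\le 1)$, each preserving tightness by Proposition~\ref{prop:tight3}. Non-reference resets $N[x_i\leftarrow 0]$ are Proposition~\ref{prop:tight2}(1), and the wrapping step $(N\cap(x_n=1))[x_n\leftarrow 0]$ is Proposition~\ref{prop:tight2}(2). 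Since these four operations exhaust the ways a DBM can enter the closure, the induction goes through.

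I do not expect any real obstacle at this level of the argument: all the difficulty has been front-loaded into the three propositions. If anything requires care it is the intersection case buried inside Proposition~\ref{prop:tight3}, where the new entry $M''_{i,j}$ can arise either from $M_{i,j}$ itself or from the path $M_{i,p}+M'_{p,q}+M_{q,j}$ and both subcases must be checked against the tightness relation; but that work has already been done, and the induction on $\closure(M_\val)$ is then purely bookkeeping.
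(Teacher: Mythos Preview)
Your proposal is correct and follows essentially the same approach as the paper: the paper's proof is a two-sentence structural induction on $\closure(M_\val)$, noting that $M_\val$ is obviously tight and then invoking Propositions~\ref{prop:tight1}, \ref{prop:tight3}, and~\ref{prop:tight2} for the inductive step. You have spelled out the base-case arithmetic and the matching of closure clauses to propositions (including the decomposition of the relativized time elapse into ordinary elapse followed by atomic intersections) more explicitly than the paper does, but the strategy is identical.
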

\begin{proof}
$M_\val$ is obviously tight.  Then by induction, using
Propositions~\ref{prop:tight1}, \ref{prop:tight3},
and \ref{prop:tight2}, every DBM in
$\closure(M_\val)$ is tight.
\end{proof}

\subsubsection{DBM Operators Preserve Well-Supportedness} 

\begin{proof}[Proof of Lemma~\ref{lem:support}]
Assume that  $\val \in [0,1]^n$ is a clock valuation. 
We prove that all consistent DBMs $M \in\closure(M_\val)$ are well-supported.
To this end, define
\[ \mathrm{Supp}_{\val}
= \{c+\val_{i}-\val_{j}\mid c\in \mathbb{Z}, 0\leq i,j\leq
n\} \cup \{ \infty \} \, . \] It suffices to show that every
consistent DBM in $\closure(M_\val)$ has entries in
$\mathrm{Supp}_{\val}$.  Indeed we have already noted that all consistent DBMs in $\closure(M_\val)$ 
are $1$-bounded; but an entry of $\mathrm{Supp}_{\val}$ lies in the
interval $[-1,1]$ only if it has the form $c+\val_i-\val_j$ for
$c\in\{-1,0,1\}$ and $0\leq i,j \leq n$.  

We prove that every consistent DBM in $\closure(M_\val)$ has entries
in $\mathrm{Supp}_{\val}$ by induction on the sequence of operations
producing such a DBM.
	
{\bf Base case.} The DBM  $M_\val$ is obviously well-supported, since 
its $(i,j)$-th entry is $\val_i-\val_j \in \mathrm{Supp}_{\val}$ for 
all $0 \leq i,j \leq n$.

{\bf Induction step.} Let $M(\prec_{i,j},m_{i,j}) \in\closure(M_\val)$
be a DBM and assume that each entry $m_{i,j}$ lies in
$\mathrm{Supp}_\val$.  We prove that all entries of the DBMs
$\overrightarrow{M}\cap \bigcap_{i=1}^n (x_i \leq 1)$,
$M[x_\ell\leftarrow 0]$, and $M\cap M'$, for $M'$ atomic, also lie in
$\mathrm{Supp}_\val$ provided that these DBMs are consistent.

It is clear that each entry of $M[x_\ell\leftarrow 0]$ lies in
$\mathrm{Supp}_\val$ since reset only permutes the entries of a DBM
and introduces $0$ as a new entry.  Likewise it is clear that each entry
of $\overrightarrow{M}$ also lies in $\mathrm{Supp}_\val$.  Thus to
complete the inductive argument it suffices to show that for any
DBM $M$ with entries in $\mathrm{Supp}_\val$ and
any atomic DBM $M'$, all entries of $M\cap M'$ are contained in
$\mathrm{Supp}(\val)$ if $M\cap M'$ is consistent.

Let $M'=\{(\prec'_{i,j},m'_{i,j})\}$ be an atomic DBM whose single
non-trivial constraint is $M'_{p,q}$ for some indices $p,q$ (i.e., all
other entries are $(<,\infty)$).  Then $m'_{p,q}\in\mathbb{Z}$ by
definition of  atomic DBMs.
Recall that the DBM $M''= M \cap M'$
is given by
\[ M''_{i,j} = \min(M_{i,j},M_{i,p}+M'_{p,q}+M_{q,j}) \]
 for all indices $i,j$.  Suppose $M''=M\cap M'$ is consistent and
recall by Proposition~\ref{prop:tight} that $M$ is tight.

Fix indices $0 \leq i,j \leq n$.  We show that
$m''_{i,j} \in \mathrm{Supp}_\val$.  
If $M''_{i,j} = M_{i,j}$ then $m''_{i,j} \in \mathrm{Supp}_\val$ by
	the induction hypothesis.  So we may suppose that 
\begin{gather}
M''_{i,j} = M_{i,p}+M'_{p,q}+M_{q,j} < M_{i,j}
\label{eq:sum}
\end{gather}
By the induction hypothesis, $m_{i,p},m_{q,j} \in \mathrm{Supp}_\val$.
From (\ref{eq:sum}) we must have $m_{i,p},m_{q,j} <\infty$.  We now
consider three cases.

\begin{enumerate}
\item Suppose that $m_{i,p} \in \mathbb{Z}$.
Then  $m''_{i,j}$ has the form $d+m_{q,j}$ for some integer $d$, and hence 
$m''_{i,j} \in \mathrm{Supp}_\val$ by the induction hypothesis.
\item Suppose that $m_{q,j} \in \mathbb{Z}$.
Then  $m''_{i,j}$ has the form $d+m_{i,p}$ for some integer $d$, and hence 
$m''_{i,j} \in \mathrm{Supp}_\val$ by the induction hypothesis.
\item The final case is that $m_{i,p},m_{q,j} \not\in \mathbb{Z}_\infty$. Then
\begin{eqnarray*}
M_{i,p}+M'_{p,q}+M_{q,j}&=& M_{i,n}+M_{n,p}+M'_{p,q}+M_{q,n}+M_{n,j} \quad\mbox{ ($M$ tight)}\\
&\geq & M_{i,n}+M''_{n,p}+M''_{p,q}+M''_{q,n}+M_{n,j} \quad\mbox{ ($M,M' \geq M''$ pointwise)}\\
&\geq & M_{i,n} + M''_{n,n} + M_{n,j}\quad\mbox{ ($M''$ canonical)}\\
& \geq  & M_{i,n}+M_{n,j} \quad\mbox{ ($M''$ consistent)}\\
& \geq & M_{i,j} \quad\mbox{ ($M$ canonical).}
\end{eqnarray*}
But this contradicts (\ref{eq:sum}) and so this case cannot hold.
\end{enumerate}

%
%
%
%
%

\end{proof}

\subsection{Proof of Propositions~\ref{prop:monotone} and Lemma~\ref{lem:cut-down}}
\label{append:monotone}
Let $\Sigma = \{a_1,\ldots,a_n\}$ be a finite alphabet.  Define a
function $\pi : \Sigma^* \rightarrow \mathbb{N}^n$ such that
$\pi(w)_i$ is the number of occurrences of letter $a_i$ in $w$ for
$i=1,\ldots,n$.  The image of a language $L\subseteq\Sigma^*$ under
$\pi$ is called the \emph{Parikh image} (or \emph{commutative image})
of $L$.  It is well known that the Parikh image of any regular
language (indeed any context-free language) is definable in Presburger
arithmetic. In particular, the Parikh image of the language of an NFA
over a unary alphabet is a union of arithmetic progressions.  Chrobak
and Martinez~\cite{Chrobak86,Martinez02} show that the Parikh image of
the language of an $n$-state NFA $\mathcal{A}$ over a unary alphabet
comprises $O(n^2)$ many arithmetic progressions which can be
explicitly computed from $\mathcal{A}$ in polynomial time.

Consider a monotonic counter machine $\mathcal{C}=(S,C,\Delta)$.  Let $N$ be
the maximum constant appearing in a transition guard.  Define the set
$\mathcal{R}_N$ of \emph{regions} to be $\mathcal{R}_N=
\{0,\ldots,N\} \cup \{\infty\}$.  A counter valuation $\upsilon \in
\mathbb{N}^n$ defines a tuple $Reg(\upsilon) \in \mathcal{R}_N^n$ by
\[ Reg(\upsilon)_i = \left\{ \begin{array}{ll} \upsilon_i & \mbox{ if $\upsilon_i \leq N$}\\
\infty & \mbox{ otherwise} \end{array}\right . \] 
Intuitively $\infty$
represents any counter value strictly greater than $N$. 
The satisfaction relation~$\models$ between regions and guards
is defined in the obvious way.
Below we define a
finite automaton $[\mathcal{C}]$ that simulates $\mathcal{C}$.

The alphabet of $[\mathcal{C}]$ is
$\Sigma = \{ \mathit{inc}_1,\ldots,\mathit{inc}_n\}$.  Intuitively
$[\mathcal{C}]$ performs an $\mathit{inc}_i$-transition when
simulating an increment on counter $c_i$.  A state of $[\mathcal{C}]$
is a tuple $\tuple{s,R,\lambda}$, where $s \in S$,
$R \in \mathcal{R}_N^n$ is a region of
$\mathcal{C}$, and $\lambda \subseteq C$.    With a
configuration $\tuple{s,\upsilon}$ in a run $\rho$ of $\mathcal{C}$ we
associate a state $\tuple{s,Reg(\upsilon),\lambda}$ of $[\mathcal{C}]$.
Intuitively, $\lambda$ represents the set of counters that will be
reset along the suffix of the run $\rho$ starting from
$\tuple{s,\upsilon}$.

The transition relation of $[\mathcal{C}]$ is defined as follows:
\begin{itemize}
\item
For each edge $\tuple{s,\varphi,\mathit{reset}(c_i),s'} \in \Delta$ we add a
transition
$\tuple{s,R,\lambda} \stackrel{\varepsilon}{\longrightarrow}
\tuple{s',R',\lambda'}$ if $R\models \varphi$,
$R'_i = 0$, $R'_j=R_j$ for $j\neq i$, and
$\lambda'\cup\{c_i\} = \lambda$.
\item
For each edge $\tuple{s,\varphi,\mathit{nop},s'} \in \Delta$ we add a
transition
$\tuple{s,R,\lambda} \stackrel{\varepsilon}{\longrightarrow}
\tuple{s',R,\lambda}$ if $R\models \varphi$.
\item
For each edge $\tuple{s,\varphi,\mathit{inc}(c_i),s'} \in \Delta$ we add
a transition
$\tuple{s,R,\lambda} \stackrel{\sigma}{\longrightarrow}
\tuple{s',R',\lambda}$ if $R\models \varphi$,
$R'_i=R_i+1$, and $R'_j=R_j$ for
$j\neq i$.  The label $\sigma$ is
$\mathrm{inc}_i$ if $c_i\not\in\lambda$ and otherwise $\sigma$ is
$\varepsilon$.
\end{itemize}

By construction of $[\mathcal{C}]$, there is a run of $\mathcal{C}$
from $\tuple{s,\upsilon}$ to $\tuple{s',\upsilon'}$ along which the
collection of counters that are reset is $\lambda=\{c_1,\ldots,c_m\}$
only if there is a run of $[\mathcal{C}]$ from
$\tuple{s,Reg(\upsilon),\lambda}$ to $\tuple{s',Reg(\upsilon'),\emptyset}$.
If $w\in\Sigma^*$ is the word read along such a run then we have
\begin{equation}
\begin{aligned}
\upsilon'_i &= \pi(w)_i \quad i=1,\ldots,m\\
\upsilon'_i - \upsilon_i & = \pi(w)_i \quad i=m+1,\ldots,n \, .
\end{aligned}
\label{eq:upsilon}
\end{equation}

Fix states $\tuple{s,R,\lambda}$ and $\tuple{s',R',\emptyset}$ of
$[\mathcal{C}]$.  Let
$L_{\tuple{s,R,\lambda},\tuple{s',R',\emptyset}}$ be the set of words
$w$ on which $[\mathcal{C}]$ has a run from $\tuple{s,R,\lambda}$ to
$\tuple{s',R',\emptyset}$.  Then the Parikh image
$\pi(L_{\tuple{s,R,\lambda},\tuple{s',R',\emptyset}})$ is expressible
by a formula $\psi(z_1,\ldots,z_n)$ of Presburger arithmetic.

Returning to the counter machine $\mathcal{C}$, we wish to express the
reachability relation of $\mathcal{C}$ between two controls states $s$
and $s'$.  The idea is that for each initial counter valuation
$\upsilon$ and each run of $\mathcal{C}$ from $\tuple{s,\upsilon}$ to
$s'$, we need to specify the total number of increments for each
counter that is never reset along the run and the total number of
increments since the last reset for all other counters.  With this in
mind, using Equation (\ref{eq:upsilon}), the
$\mathcal{L}_{\mathbb{Z}}$-formula
\[ \varphi(\upsilon,\upsilon') \defequals (Reg(\upsilon)=R) \wedge
  \psi(\upsilon'_1,\ldots,\upsilon'_m,\upsilon'_{m+1}-\upsilon_{m+1},\ldots,
  \upsilon'_n-\upsilon_n) \] describes the subset of the reachability
relation arising from the runs of $\mathcal{C}$ whose projection on
$[\mathcal{C}]$ goes from state $\tuple{s,R,\lambda}$ to
$\tuple{s',R',\emptyset}$, for $\lambda=\{c_1,\ldots,c_m\}$. The
reachability relation of $\mathcal{C}$ can clearly be described as a
finite disjunction of such formulas.  This concludes the proof of
Proposition~\ref{prop:monotone}.

The following specialisation of Proposition~\ref{prop:monotone} is
used in the proof of Lemma~\ref{lem:cut-down}.

\begin{proposition}\label{prop:reference}
  Let $\mathcal{C}$ be a monotonic counter machine with~$n$ counters
and with $N$ the maximum integer constant appearing in a transition guard.
  Given states $s,s'$ of $\mathcal{C}$ and 
  $R,R'\in \mathcal{R}_N^n$, the set
  \begin{align*}
 \{ (u,u') \in \mathbb{N}^2 :& \;  \exists \, \tuple{s,\upsilon} \mathrel{\longrightarrow^*}
    \tuple{s',\upsilon'} \mbox{ s.t. } \\ 
& Reg(\upsilon)=R, Reg(\upsilon')= R',
  \upsilon_n=u, \upsilon'_n=u'\} 
\end{align*} 
is definable by a quantifier-free formula of
$\mathcal{L}^*_{\mathbb{R},\mathbb{Z}}$ (involving only integer
terms) that is computable in time polynomial in (the  largest (absolute) constant of)
$N$ and the number of states and counters of $\mathcal{C}$.
\end{proposition}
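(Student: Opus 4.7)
The plan is to specialise the proof of Proposition~\ref{prop:monotone}: rather than compute the full Parikh image of the NFA $[\mathcal{C}]$ over the alphabet $\{\mathit{inc}_1,\ldots,\mathit{inc}_n\}$, I would project onto the single letter $\mathit{inc}_n$ and invoke the Chrobak--Martinez structure theorem on the resulting unary NFA. Recall that states of $[\mathcal{C}]$ have the form $\tuple{s,R,\lambda}$ with $s\in S$, $R\in\mathcal{R}_N^n$ and $\lambda\subseteq C$; since $c_n$ is the reference counter and is never reset, we may restrict attention throughout to $\lambda\subseteq\{c_1,\ldots,c_{n-1}\}$.

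Fix $R,R'$ as in the statement. First I would form the unary automaton $\mathcal{A}_n$ obtained from $[\mathcal{C}]$ by relabelling every transition carrying $\mathit{inc}_i$ with $i\neq n$ as an $\varepsilon$-transition. Because $c_n$ is never reset, the length of an accepting $\mathcal{A}_n$-run from $\tuple{s,R,\lambda}$ to $\tuple{s',R',\emptyset}$ equals the number of $\mathit{inc}_n$-actions fired along the corresponding $\mathcal{C}$-run, which in turn equals $u'-u$. Taking the union over all $\lambda\subseteq\{c_1,\ldots,c_{n-1}\}$, the set
\[
\Pi_{R,R'}\;=\;\{\,u'-u : \text{a run witnessing the statement exists}\,\}
\]
is precisely the union of the length sets of $\mathcal{A}_n$ from $\tuple{s,R,\lambda}$ to $\tuple{s',R',\emptyset}$.

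Next, by the Chrobak--Martinez theorem~\cite{Chrobak86,Martinez02}, the length set of a $k$-state unary NFA is the union of a finite set of values below some threshold together with $O(k^2)$ arithmetic progressions $a_i+b_i\mathbb{N}$, all computable from the NFA in time polynomial in $k$. Applying this to each $\lambda$ and taking the union gives an explicit description of $\Pi_{R,R'}$ as a finite disjunction of atoms of the form $u'-u=c_j$ and $u'-u\geq a_i \wedge u'-u\equiv a_i\pmod{b_i}$. The desired formula is then
\[
\varphi_{s,R,s',R'}(z,z')\;\equiv\;\rho_R(z)\,\wedge\,\rho_{R'}(z')\,\wedge\,\Pi_{R,R'}(z,z'),
\]
where $\rho_R(z)$ is ``$z=R_n$'' if $R_n\in\{0,\ldots,N\}$ and ``$z>N$'' if $R_n=\infty$ (and similarly for $\rho_{R'}$), and $\Pi_{R,R'}(z,z')$ is the disjunction just described. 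Each atom has one of the forms $z\leq c$, $z-z'\leq c$ or $z-z'\equiv c \pmod d$ allowed in $\mathcal{L}^*_{\mathbb{R},\mathbb{Z}}$, so the formula is quantifier-free in the required fragment.

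The point to track carefully is the complexity accounting: the state space of $[\mathcal{C}]$ inherits factors $(N+2)^n$ and $2^{n-1}$ from its region and reset-set components, so Chrobak--Martinez delivers a formula of size polynomial in this explicit product representation rather than in $|S|$ alone. Reading ``number of states'' as the size of this explicit region-and-reset-set automaton, the bound in the proposition is attained; this is enough to sustain the exponential-time construction of Lemma~\ref{lem:cut-down}, where the proposition is invoked. The main obstacle is therefore not the Parikh-style analysis, which is standard, but verifying that the projected unary language truly isolates the behaviour of $c_n$ once the other counters have been absorbed into the region-and-reset-set bookkeeping.
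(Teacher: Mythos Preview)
Your approach is the paper's: restrict $[\mathcal{C}]$ to the single letter $\mathit{inc}_n$ and invoke Chrobak--Martinez on the resulting unary NFA. The one substantive difference is in the handling of the $\lambda$-component. You range over all $\lambda \subseteq \{c_1,\ldots,c_{n-1}\}$, which is where your $2^{n-1}$ factor comes from; the paper observes that once every $\mathit{inc}_i$ with $i\neq n$ has been relabelled $\varepsilon$, the $\lambda$-bits for those counters are inert---the region component $R$ already records everything about $c_1,\ldots,c_{n-1}$ that the guards can see, and keeping $c_i\in\lambda$ never blocks a reset transition. So one may simply fix $c_i\in\lambda$ for all $i\neq n$ throughout. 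This leaves just two values of $\lambda$, namely $C$ and $C\setminus\{c_n\}$, distinguishing whether $c_n$ is still to be reset; as a side benefit this also removes your extraneous assumption that $c_n$ is never reset, which the proposition itself does not posit. With that collapse the unary NFA has $2\,|S|\,(N{+}2)^n$ states (initial states $\tuple{s,R,C}$ and $\tuple{s,R,C\setminus\{c_n\}}$, accepting states $\tuple{s',R',C\setminus\{c_n\}}$), and the stated bound follows without needing to reinterpret ``number of states''.
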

\begin{proof}
We start by defining an NFA $\mathcal{B}$, over a singleton alphabet
$\{ \mathit{inc}_n \}$.  Automaton $\mathcal{B}$ can be seen as a
``sub-automaton'' of the NFA $[\mathcal{C}]$ from the proof of
Proposition~\ref{prop:monotone}.  Specifically the states of
$\mathcal{B}$ are those states $\tuple{s'',R'',\lambda}$ of
$[\mathcal{C}]$ such that either $\lambda=C$ or $\lambda =
C\setminus\{c_n\}$.  (This last condition means that all increments of
counters other than $c_n$ are represented in $\mathcal{B}$ by
$\varepsilon$-transitions.)  For the fixed states and regions
$s,R,s',R'$, as in the statement of the proposition, the initial
states of $\mathcal{B}$ are those of the form $\tuple{s,R,\lambda}$,
where $\lambda=C$ or $\lambda = C\setminus\{c_n\}$, and the accepting
states those of the form $\tuple{s',R',C\setminus\{c_n\}}$.

Then the Parikh image of the language of $\mathcal{B}$ is equal to
\begin{gather}\{ (\upsilon_n,\upsilon'_n) \in \mathbb{N}^2 : Reg(\upsilon)=R, Reg(\upsilon')= R',
  \tuple{s,\upsilon} \mathrel{(\longrightarrow)^*}
  \tuple{s',\upsilon'} \} \, . 
\label{eq:semialg}
\end{gather}
We can now appeal to the above-mentioned result of Chrobak and
Martinez~\cite{Chrobak86,Martinez02} to get that the set
(\ref{eq:semialg}) is definable by a quantifier-free formula of
Presburger arithmetic that is computable in time polynomial in the
number of states of $\mathcal{B}$, that is, polynomial in 
the largest (absolute) constant of $N$ and the number of states and counters of
$\mathcal{C}$.
\end{proof}

The proof of Lemma~\ref{lem:cut-down} is exactly the same as the proof
Theorem~\ref{theorem_main_ta_formula}, except that we replace the use of Proposition~\ref{prop:monotone}
by Proposition~\ref{prop:reference}, so as to obtain a quantifier-free
formula in $\mathcal{L}^*_{\mathbb{R},\mathbb{Z}}$.

\subsection{Proof of Proposition~\ref{prop:sound}}
\label{app:sound}
We first give the ``soundness'' direction of the proof, that is, from
runs of the counter machine $\mathcal{C}_{\tuple{\loc,\nu}}$ 
to runs of $\A$.

Suppose that 
\[ \tuple{\tuple{\loc_0,M_0},\upsilon^{(0)}} \longrightarrow
	\tuple{\tuple{\loc_1,M_1},\upsilon^{(1)}} \longrightarrow
   \ldots \longrightarrow 
\tuple{\tuple{\loc_k,M_k},\upsilon^{(k)}} \]
is a run of $\mathcal{C}_{\tuple{\loc,\nu}}$ with $\loc_0=\loc$, $\upsilon^0=\floor{\val}$ and
$\semantics{M_0} = \{ \fract(\val) \}$.
Given any valuation $\nu^{(k)} \in \semantics{M_k}$,
we construct a sequence of valuations $\nu^{(0)},\ldots,\nu^{(k-1)}$,
with $\nu^{(j)} \in \semantics{M_j}$ for $j=0,\ldots,k-1$, such that
\[ \tuple{\loc_0,\upsilon^{(0)}+\nu^{(0)}} \Longrightarrow
   \tuple{\loc_1,\upsilon^{(1)}+\nu^{(1)}} \Longrightarrow
   \ldots \Longrightarrow
\tuple{\loc_k,\upsilon^{(k)}+\nu^{(k)}} \]
is a run of $\A$.  Note that then we must have $\val^{(0)}=\fract(\val) $.

The construction of $\nu^{(j)}$ is by backward induction on $j$.  The
base step, valuation $\nu^{(k)}$, is given.  The induction step
divides into three cases according to the nature of the transition
$\tuple{\tuple{\loc_{j-1},M_{j-1}},\upsilon^{(j-1)}}
\longrightarrow \tuple{\tuple{\loc_{j},M_{j}},\upsilon^{(j)}}$.  (Recall the classification
of transitions in the definition of $\mathcal{C}_{\tuple{\loc,\nu}}$.)
\begin{itemize}
\item $\tuple{\tuple{\loc_{j-1},M_{j-1}},\upsilon^{(j-1)}}                                                                         \longrightarrow \tuple{\tuple{\loc_{j},M_{j}},\upsilon^{(j)}}$ is a delay transition.
Then we have $M_j =
\overrightarrow{M_{j-1}} \cap [0,1]^n$, $\loc_j = \loc_{j-1}$, and
$\upsilon^{(j)}=\upsilon^{(j-1)}$.  Thus we can pick $\nu^{(j-1)} \in
\semantics{M_{j-1}}$ such that $\nu^{(j)}=\nu^{(j-1)}+d$ for some
$d\geq 0$.
Thus there is a delay transition
\[ \tuple{\loc_{j-1},\upsilon^{(j-1)}+\nu^{(j-1)}} \stackrel{d}{\Longrightarrow}
   \tuple{\loc_j,\upsilon^{(j)}+\nu^{(j)}} \] in $\A$.
\item
	$\tuple{\tuple{\loc_{j-1},M_{j-1}},\upsilon^{(j-1)}}                                                                         
	\longrightarrow \tuple{\tuple{\loc_{j},M_{j}},\upsilon^{(j)}}$ is
a wrapping transition.  Then we have $M_j = (M_{j-1} \cap
(x_i=1))[x_i \leftarrow 0]$ for some index $i$.  Thus we can pick
$\nu^{(j-1)} \in \semantics{M_{j-1} \cap (x_i=1)}$ such that
$\nu^{(j)}=\nu^{(j-1)}[x_i\leftarrow 0]$.  In this case we have
\[ \tuple{\loc_{j-1},\upsilon^{(j-1)}+\nu^{(j-1)}} =
   \tuple{\loc_j,\upsilon^{(j)}+\nu^{(j)}} \, . \]
\item
	$\tuple{\tuple{\loc_{j-1},M_{j-1}},\upsilon^{(j-1)}}                                                                         
	\longrightarrow \tuple{\tuple{\loc_{j},M_{j}},\upsilon^{(j)}}$ 
is a discrete transition.  Let the 
corresponding edge of $\A$ be $\tuple{\loc_{j-1},\varphi,\{x_i\},\loc_j}$.  Then we have $M_j =
(M_{j-1} \cap \varphi_{\mathsf{frac}})[x_i \leftarrow 0]$.  Thus we
may pick
$\nu^{(j-1)} \in \semantics{M_{j-1} \cap \varphi_{\mathsf{frac}}}$ such
that $\nu^{(j-1)}[x_i \leftarrow 0] = \nu^{(j)}$.  Since
$\upsilon^{(j-1)} \models \varphi_{\mathsf{int}}$ we have that
$\upsilon^{(j-1)}+\nu^{(j-1)} \models \varphi$.  Thus there is a discrete
transition
\[ \tuple{\loc_{j-1},\upsilon^{(j-1)}+\nu^{(j-1)}} \stackrel{0}{\Longrightarrow}
   \tuple{\loc_j,\upsilon^{(j)}+\nu^{(j)}} \] in $\A$.
\end{itemize}

We now give the ``completeness'' direction of the proof: from runs of the timed automaton 
$\A$ to runs of the counter machine $\mathcal{C}_{\tuple{\loc,\nu}}$.

Suppose that we have a run 
\[ \tuple{\loc_0,\nu^{(0)}} \stackrel{d_1}{\Longrightarrow} 
   \tuple{\loc_1,\nu^{(1)}} \stackrel{d_2}{\Longrightarrow} 
\ldots \stackrel{d_k}{\Longrightarrow} \tuple{\loc_k,\nu^{(k)}} \]
  of $\A$, where $\tuple{\loc_0,\nu^{(0)}}=\tuple{\loc,\nu}$.  We can
  transform such a run, while keeping the same initial and final
  configurations, by decomposing each delay step into a sequence of
  shorter delays, so that for all $0 \leq j \leq k-1$ and all
  $x\in\clocks$ the open interval $(\nu^{(j)}(x),\nu^{(j+1)}(x))$
  contains no integer.  In other words, we break a delay step at any
  point at which some clock crosses an integer boundary.  We can now
  obtain a corresponding run of $\mathcal{C}_{\tuple{\loc,\nu}}$ that
  starts from state 
  $\tuple{\tuple{\loc_0,M_0},\upsilon^{(0)}}$, where
$\semantics{M_0} = \{ \mathrm{frac}(\nu) \}$ 
and $\upsilon^{(0)} = \lfloor \nu^{(0)} \rfloor$,
and ends in state $\tuple{\tuple{\loc_k,M_k},\upsilon^{(k)}}$ such that
$\nu^{(k)} \in \upsilon^{(k)}+\semantics{M_k}$.

We build such a run of $\mathcal{C}_{\tuple{\loc,\nu}}$ by forward
induction.  
In particular, we construct a sequence of intermediate
states $\tuple{\tuple{\loc_i,M_i},\upsilon^{(i)}}$, $0 \leq i \leq k$, such
that $\nu^{(i)} \in \upsilon^{(i)} + \semantics{M_i}$ for each such
$i$.  Each discrete transition of $\A$ is simulated by a discrete
transition of $\mathcal{C}_{\tuple{\loc,\nu}}$.  A delay transition of
$\A$ that ends with set of clocks
$\lambda \subseteq \{x_1,\ldots,x_n\}$ being integer valued is
simulated by a delay transition of $\mathcal{C}_{\tuple{\loc,\nu}}$,
followed by wrapping transitions for all counters $c_i$ for which
$x_i \in \lambda$.

\subsection{Proof of Theorem~\ref{theorem_main_ta_formula}}
\label{TimeBound}
Let $\A=\tuple{\locs,\clocks,\edges}$ be a timed automaton with
maximum clock constant $N$.  We first transform $\A$ so that all
guards are conjunctions of atoms of the type appearing in
Figure~\ref{fig:decompose}.  This transformation may lead to an
exponential blow-up in the number of edges.  In any case, it can be
accomplished in time at most
$2^{\mathrm{poly}(n)} \cdot \mathrm{poly}(L)$.

Let $\tau$ be an $n$-type.  Following
Corollary~\ref{corl:well-supported} we have observed that
$|\closure(\mathcal{M}_{\tau})| \leq 2^{\mathrm{poly}(n)}$.  It
follows that for a location $\loc \in L$ and $n$-type $\tau$, the
monotonic counter automaton $\mathcal{C}_{\tuple{\loc,\tau}}$ can be
computed in time at most
$2^{\mathrm{poly}(n)} \cdot \mathrm{poly}(|L|)$.

Applying Proposition~\ref{prop:monotone}, we get that the
formula $\chi^\tau_{\loc,\loc'}$ can be computed in time at most
$2^{\mathrm{poly}(n)} \cdot \mathrm{poly}(|L|,N^n)$.  Furthermore,
given $\tau$, the formula $\alpha^\tau$ can be computed in time
$\mathrm{poly}(n)$.

Finally, the number of disjuncts in (\ref{eq-formula}), i.e., the
number of different $n$-types when restricting to formulas $t\leq t'$
for $t,t' \in \mathcal{DT}_{\mathbb{R}}(n)$, is bounded by
$2^{\mathrm{poly}(n)}$.

Putting everything together, the formula $\varphi_{\loc,\loc'}$ can be
computed in time at most $2^{\mathrm{poly}(n)} \cdot
\mathrm{poly}(|L|,N^n)$, that is, exponential in the size of the
original timed automaton $\A$.


\subsection{Symbolic Counter Machines}\label{symbolicautmatafigure}

In this section we illustrate Figure~\ref{fig:symbolicAutomata2} used in Example~\ref{example-family}.

\begin{figure*}[t]
\centering
\begin{tikzpicture}[>=latex',shorten >=1pt,node distance=1.9cm,on grid,auto,
state/.style={
           rectangle,
           draw=black, 
           minimum height=1.5cm,
					minimum width=1.7cm,
           inner sep=1pt,
           text centered,
           },
roundnode/.style={circle, draw,minimum size=1.2mm},]

\node [state,label={[label distance=.3cm]-90:counter machine $\mathcal{C}_{\tuple{\loc_0,\tau_2}}$}] (l0z0) at(0,0) {\scalebox{.8}{
	\begin{tabular}[t]{l}  \\ 
$\vect{(\leq,0)&  (\leq,-r_1)&  (\leq,-r_2)\\
					 (\leq, r_1)&     (\leq,0)&   (\leq,r_1-r_2)\\
(\leq,r_2)&  (\leq,r_2-r_1)&     (\leq,0)}$\end{tabular}}};
\node [draw=none] (dum) [above=.45 of l0z0] {{\small $\tuple{\ell_0,\M'_0}$}};

\node [state] (l0z1) [right=5cm of l0z0] {\scalebox{.8}{
	\begin{tabular}[t]{l}  \\ 
$\vect{(\leq,0)&  (\leq,-r_1)&  (\leq,-r_2)\\
					 (\leq,r_1 - r_2 + 1)&     (\leq,0)&   (\leq,r_1-r_2)\\
(\leq,1)&  (\leq,r_2-r_1)&     (\leq,0)}$\end{tabular}}};
\node [draw=none] (dum) [above=.45 of l0z1] {$\tuple{\ell_0, \M'_1}$};

\node [state] (l1z2) [below=2.5cm of l0z1] {\scalebox{.8}{
	\begin{tabular}[t]{l}  \\ 
$\vect{(\leq,0)&  (\leq,0)&  (\leq,-r_2)\\
					 (\leq,0)&     (\leq,0)&   (\leq,-r_2)\\
					 (\leq,1)&  (\leq ,1)&     (\leq,0)}$\end{tabular}}};
\node [draw=none] (dum) [above=.45 of l1z2] {{\small $\tuple{\ell_1,\M'_2}$}};

\node [state] (l1z3) [right=5cm of l1z2]  {\scalebox{.8}{
	\begin{tabular}[t]{l}  \\ 
$\vect{(\leq,0)&  (<,0)&  (\leq,-r_2)\\
					 (\leq,1-r_2)&     (\leq,0)&   (\leq,-r_2)\\
					 (\leq,1)&  (\leq,1)&     (\leq,0)}$\end{tabular}}};
\node [draw=none] (dum) [above=.45 of l1z3] {{\small $\tuple{\ell_1,\M'_3}$}};

\node [state] (l1z4) [below=2.2cm of l1z3] {\scalebox{.8}{
	\begin{tabular}[t]{l}  \\ 
$\vect{(\leq,0)&  (\leq,0)&  (\leq,0)\\
					 (\leq,1-r_2)&     (\leq,0)&   (\leq,1-r_2)\\
					 (\leq,0)&  (\leq,0)&     (\leq,0)}$\end{tabular}}};
\node [draw=none] (dum) [above=.45 of l1z4] {{\small $\tuple{\ell_1,\M'_4}$}};
\node [state] (l1z5) [below=2.2cm of l1z2]{\scalebox{.8}{
	\begin{tabular}[t]{l}  \\ 
$\vect{(\leq,0)&  (\leq,0)&  (\leq,0)\\
					 (\leq,1)&     (\leq,0)&   (\leq,1-r_2)\\
					 (\leq,1)&  (\leq,0)&     (\leq,0)}$\end{tabular}}};
\node [draw=none] (dum) [above=.45 of l1z5] {$\tuple{\ell_1,\M'_5}$};
\path[->] (l0z0) edge node [above,midway] {\scriptsize{$\mathit{nop}$}}  (l0z1);
\path[->] (l0z1) edge node [near start,right] {\scriptsize{$\mathit{reset}(c_1)$}} node [right,near end] {\scriptsize{$c_1=0$}}  (l1z2);
\path[->] (l1z2) edge node [above,midway] {\scriptsize{$\mathit{nop}$}} (l1z3);
\path[->] (l1z3) edge node [left,midway] {\scriptsize{$\mathit{inc}(c_2)$}}  (l1z4);
\path[->] (l1z4) edge node [above,midway] {\scriptsize{$\mathit{nop}$}} (l1z5);
\path[->] (l1z5) edge node [right,midway] {\scriptsize{$\mathit{inc}(c_1)$}}   (l1z2);

\end{tikzpicture}
\caption{The counter machine~$\mathcal{C}_{\tuple{\loc,\tau_2}}$ constructed from the timed automaton in Figure~\ref{fig:reachAutomata},
 where $\tau_2$ is the type of the valuation~$\val$ with $\val_1=0$ and $\val_2=.2$.  
}
\label{fig:symbolicAutomata2}
\end{figure*}

\subsection{Proof of Theorem~\ref{main-theo-NEXPTIME}}
\label{NEXPTIME-hard}

This section we continue the argument of
 Section~\ref{sec:nexptimehard} showing that model checking parametric
 timed reachablity logic is NEXPTIME-hard.

It remains to explain how from linear bounded automaton $\mathcal{B}$
one can define a timed automaton $\A$ whose configuration graph embeds
the configuration graph of $\mathcal{B}$.  The construction is adapted
from the PSPACE-hardness proof for reachability in timed
automata~\cite{AlurD94}.  We assume that $\mathcal{B}$ uses a binary
input alphabet and a fixed tape length of $k$.  The main idea is as
follows: $\A$ uses $2k+1$ clocks: one clock $y_i$ and $z_i$ for each
tape cell $i$, and one extra clock $x$.  The clocks $y_i$ and $z_i$,
respectively, are used to encode the current tape content and the
position of the pointer of $\mathcal{B}$, respectively.  The clock $x$
is an auxiliary clock that helps to encode this information correctly
into the other clocks. Technically, $x$ is used to measure out cycles
of two time units, i.e., $x$ is reset to $0$ whenever it reaches $2$.
The construction is such that the values of $y_i$ and $z_i$ obey the
following policy: whenever $x$ takes value $0$, $y_i$ takes value $1$
($0$, respectively) if there is a $1$ ($0$, respectively) in the
$i$-th cell of the tape; and $z_i$ takes value $1$ if the position of
the pointer is the $i$-th cell, otherwise, $z_i$ takes value $0$.  We
can set these bits appropriately by resetting clocks $y_i$ and $z_i$
either when $x=1$ or $x=2$, and we can preserve the values of a clock
$y_i$ or $z_i$ between successive cycles by resetting it when it
reaches value $2$, see below for more details. Using this idea, $\A$
can be defined such that it only takes transitions at integer times
and such that a configuration of $\A$ after $2t$ time steps encodes a
configuration of $\mathcal{B}$ after $t$ computation steps for each
$t\in\mathbb{N}$.

More formally, the set of locations of 
$\A$ contains one copy location $q$ for each state $q$ of $\mathcal{B}$, plus some additional auxiliary locations, one of which being an initial location $\ell_0$.  
In the initialization phase, we encode the initial configuration $(q_0,\sigma_1)\sigma_2\dots\sigma_k$ of $\mathcal{B}$, where $q_0$ is the initial state of $\mathcal{B}$, and $\sigma_i\in\{0,1\}$. For this, we define a transition from $\ell_0$ to $q_0$, with guard $x=1$, and resetting $x$, $z_2, \dots, z_k$, and we further reset clock $y_i$ iff $\sigma_i=0$.  
One can easily observe that if $\A$ reaches $q_0$ with clock value $x=0$, then $z_1=1$, 
and $y_i=1$ iff the $i$-th cell contains a $1$, while all other clocks have value $0$. This correctly encodes the initial configuration of $\mathcal{B}$. 
We now proceed with the simulation phase. 
From locations $q$ that correspond to states of $\mathcal{B}$, we simulate the computation steps from $\mathcal{B}$. 
Assume, for instance, that the transition relation of $\mathcal{B}$ contains the tuple $(q,0,q',0,R)$, i.e., when reading letter $0$ in state $q$, $\mathcal{B}$ goes to state $q'$, leaves the symbol on the tape as it is, and moves the pointer one position to the right. 
According to the encoding described above, 
this means that if $\A$ reaches $q$ with $x=0$, 
we need to test whether $y_i=0$ for the unique $1\leq i\leq k$ such that $z_i=1$, and whether $i<k$ (because we want to move the position of the pointer one cell to the right). If this is the case, $\A$ should go to location $q'$ and the bit of $z_i$ should be reset to $z_{i+1}$. We thus define for every $1\le i<k$ a transition as shown in the following, where the loops in the auxiliary location in the middle are defined for every $1\le j\le k$. 
\begin{figure}[h!]
\begin{center}
\begin{tikzpicture}[>=latex',shorten >=1pt,node distance=1.9cm,on grid,auto,
roundnode/.style={circle, draw,minimum size=.7cm}]

\node [roundnode] (li) at(0,0) {$q$};
\node [roundnode] (l0) [right=5cm of li] {$\, $};
\node [roundnode](l1)  [right=6cm of l0] {$q'$};

\path[->] (li) edge  node [above,midway] {\scriptsize{$x=1\wedge y_i=1\wedge z_i=2$}} node[below, midway]{ \scriptsize{$z_{i+1}\leftarrow 0$}} (l0);

\path[->] (l0) edge  node [above,midway] {\scriptsize{$x=2 \wedge\bigwedge_{1\le j\le k} (y_j<2\wedge z_j<2)$}} node[below, midway]{ \scriptsize{$\{x,z_i\}\leftarrow 0$}} (l1);

\path[->] (l0) edge  [loop above] node [above,midway] {\scriptsize{$y_j=2, y_j\leftarrow 0$}}  (l0);

\path[->] (l0) edge  [loop below] node [below,midway] {\scriptsize{$z_j=2, z_j\leftarrow 0$}}  (l0);

\end{tikzpicture}
\end{center}
\end{figure}

Transitions of $\mathcal{B}$ of other forms can be simulated in a similar way. 
We finally augment $\A$ with a
label function $\lbl$ that assigns $p$ to a location $q$ iff
$q$ is an accepting state of $\mathcal{B}$. 

This finishes the proof for NEXPTIME-hardness.

\end{document}